\renewcommand{\theequation}{\thesection.\arabic{equation}}
\numberwithin{equation}{section} 
 \newtheorem{thm}{Theorem}[section]
 \newtheorem{lem}{Lemma}[section]
 \newtheorem{corollary}{Corollary}[section]
 \newtheorem{definition}{Definition}[section]
\begin{document}

\title{SVD-based Causal Emergence for Gaussian Iterative Systems}

\author[1]{Kaiwei Liu}
\author[1]{Linli Pan}
\author[1]{Zhipeng Wang}
\author[1]{Mingzhe Yang}
\author[2]{Bing Yuan} 
\author[1,2]{Jiang Zhang  \thanks{Email: zhangjiang@bnu.edu.cn}   } 
 
\affil[1]{ School of Systems Science, Beijing Normal University, 100875, Beijing, China}
\affil[2]{Swarma Research, 102300, Beijing, China
}
       
 \renewcommand\Authands{ and }
 \date{}
 \maketitle
 
\begin{abstract}
Causal emergence (CE) based on effective information (EI) demonstrates that macro-states can exhibit stronger causal effects than micro-states in dynamics. However, the identification of CE and the maximization of EI both rely on coarse-graining strategies, which is a key challenge. A recently proposed CE framework based on approximate dynamical reversibility, utilizing singular value decomposition (SVD), is independent of coarse-graining. Still, it is limited to transition probability matrices (TPM) in discrete states. To address this, this article proposes a novel CE quantification framework for Gaussian iterative systems (GIS), based on approximate dynamical reversibility derived from the SVD of inverse covariance matrices in forward and backward dynamics. The positive correlation between SVD-based and EI-based CE, along with the equivalence condition, is given analytically. After that, we provide precise coarse-graining strategies directly from singular value spectra and orthogonal matrices. This new framework can be applied to any dynamical system with continuous states and Gaussian noise, such as auto-regressive growth models, Markov-Gaussian systems, and even SIR modeling using neural networks (NN). Numerical simulations on typical cases validate our theory and offer a new approach to studying the CE phenomenon, emphasizing noise and covariance over dynamical functions in both known models and machine learning.
\end{abstract}

 \textit{ Keywords: } Causal emergence, approximate dynamical reversibility, Gaussian iterative systems, singular value decomposition, inverse covariance matrix.\\

\section{Introduction}

Complex systems with dynamics are ubiquitous in the world around us, such as ecosystems \cite{Joergensen2000}, organisms \cite{Wicks2007, Hartman2006}, brains \cite{Jingnan2021, Sporns2021, Varley2023,wang2025causal}, and cells \cite{Zhao2023, Dong2016}. The interrelations between different dimensions and the accumulation of randomness result in entropy production and disorder, thereby complicating the analysis of dynamics of microscopic composition, such as individuals in society, cells in the human body, and atoms in matter \cite{Ma2024,Bennett2000}. Many scholars contend that complex systems conceal profound patterns and regularities within their apparent disorder. For example, it is difficult to accurately describe the motion trajectory of microscopic molecules in gases or liquids, but the changes of both can be clearly described macroscopically using temperature and pressure. So they try to derive dynamical models of the systems at the macro-level by coarse-graining the dynamical systems, discovering that the strength of the causal effect \cite{albrecht1996causality,schreiber2000measuring,rosenblum2001detecting,harnack2017topological} for macro-states can surpass those of micro-states, a phenomenon called causal emergence (CE) \cite{tononi2003measuring, Hoel2013,Hoel2017,Rosas2020,Barnett2023,Yuan2024}. 

The typical quantitative framework for CE proposed by Hoel et al. \cite{Hoel2013, Hoel2016, Hoel2017} is based on effective information (EI) \cite{tononi2003measuring}, demonstrating that coarse-grained macro-states may have larger EI than micro-states, especially in Boolean networks \cite{bornholdt2008boolean,Hoel2013}, cellular automatons \cite{weisstein2002cellular,Yang2023}, even in Gaussian iterative systems (GIS) \cite{Chvykov2020,Liu2024}. With theoretical support, the study in Neural Information Squeezer plus (NIS+) \cite{Zhang2022,Yang2024} based on Hoel's EI framework utilizes neural networks (NN) to maximize EI and identify CE in data of the Boids model \cite{Reynolds1987} or fRMI \cite{smith2011network}. However, the dependence on coarse-graining strategies during maximizing EI within Hoel's framework presents challenges.
Because when facing practical problems, we need to find the optimal coarse-graining strategy that maximizes EI, which is difficult in high-dimensional complex systems.


Some research is independent of predefined coarse-graining strategies. A notable example is Rosas' framework \cite{Rosas2020}, which is based on integrated information decomposition theory \cite{Williams2010, Luppi2021}, and CE is quantified by calculating the synergic information ($Syn$) across micro-variables. However, this approach is challenging to implement due to the combinatorial explosion that occurs during the precise calculation of $Syn$. Although Rosas proposed approximate methods \cite{Song2020,Kaplanis2023,mcsharry2024learning} with results on real-world datasets, such as ECoG \cite{chao2010long} or MEG \cite{muthukumaraswamy20181}, these methods require predefined macro-state variables and representation learning, which also depends on the coarse-graining strategy. In addition, Barnett and Seth proposed a framework \cite{Barnett2023} for quantifying CE via dynamical independence. Yet, this framework has only been applied to linear systems and also requires predefined coarse-graining strategies \cite{byrne1965attraction}. Both methods utilize mutual information, making their results dependent on the distribution of variable states and unable to reflect causality.


To find a more essential quantification of CE, Zhang et al. recently proposed a CE quantification framework based on singular value decomposition (SVD) \cite{Zhang2025} for probability transition matrices (TPM), which is independent of the optimization of coarse-graining strategies and can simplify max EI calculation using SVD. Besides, the framework introduces the new concept of approximate dynamical reversibility derived from SVD to quantify the strength of the causal effects, which describes the proximity of TPM to a permutation matrix. SVD-based CE can be quantified as the potential maximal efficiency increase of approximate dynamical reversibility. This work also finds that the essence of CE lies in redundancy, represented by irreversible and correlated information pathways. What's more, this article demonstrates a strong correlation between the SVD-based and maximized EI-based CE. However, this method only applies to discrete Markov chains \cite{zhang2019spectral} or complex networks \cite{lee2019review, Liu2023} and lacks an analytical and exact relation between the two quantifications of CE. More research is needed to extend this framework to continuous systems.

Based on the above theory, the motivation for this paper is to derive a new theory of CE for GIS based on dynamical reversibility and SVD, and prove that this theory is approximately equivalent to Hoel's CE theory based on maximizing EI. To illustrate our theoretical framework, after briefly introducing linear GIS (Section \ref{Fundamentaltheories}), we first introduce the concepts of causality and approximate dynamical reversibility in Sections \ref{Causalitycan Reversibility} and \ref{sce:causalemergence}, respectively. We then present the SVD-based CE framework and compare its equivalence to the EI-based CE (Section \ref{sce:causalemergence}), as well as a coarse-graining strategy derived from SVD. Finally, we use 3 examples to illustrate how to use SVD to analyze the causal emergence properties of a dynamical system (Section \ref{results}).

\section{Theories}
\subsection{Linear Gaussian iterative systems}\label{Fundamentaltheories}
Many real-world complex systems, whether physical, biological, economic, or social, exhibit dynamical behavior constructed by the interplay of deterministic laws and random disturbances as
\begin{equation}
\label{micro-dynamics normal}
x_{t+1}=f(x_t)+\eta_t, \eta_t\sim\mathcal{N}(0,\Sigma), 
\end{equation}
$x_t\in\mathcal{R}^n, \Sigma\in\mathcal{R}^{n\times n}$, which we called Gaussian iterative systems (GIS) \cite{Dunsmuir1976,arnold1995random}. $f(x_t): \mathcal{R}^n\to \mathcal{R}^n$ captures deterministic dynamics, $\eta_t$ denotes random noise. Like Logistic map \cite{jordan1995logistic}, Hénon map \cite{benedicks1991dynamics}, including some continuous dynamic systems (such as Kuramoto \cite{acebron2005kuramoto}, SIR \cite{Satsuma2004}, Lotka-Volterra \cite{wangersky1978lotka}), can be approximated in this form. 

For nonlinear functions, we usually linearize them to analyze their local properties. For any function $f(x_t):\mathcal{R}^n\to\mathcal{R}^n$, it can be locally linearized by Taylor expansion as
\begin{equation}
\label{micro-dynamics}
x_{t+1}=a_0+Ax_t+\eta_t, \eta_t\sim\mathcal{N}(0,\Sigma), 
\end{equation}
which we called linear GIS, where $a_0,x_t\in\mathcal{R}^n$, and $A,\Sigma\in\mathcal{R}^{n\times n}$.  $A=\nabla f(x_t)$ corresponds to the Jacobian matrix around $x_t$. This equation can also be written in the form of a transition probability as
\begin{equation}
\label{micro-dynamics-TPM}
p(x_{t+1}|x_t)=\mathcal{N}(Ax_{t}+a_0,\Sigma).
\end{equation}
This is the physical model we mainly analyze in this article, several one-dimensional cases of which are shown in Fig.\ref{fig:distribution}.

\subsection{Causality and reversibility for linear systems}\label{Causalitycan Reversibility}
\subsubsection{Causality}
\label{Causality}
\begin{figure}[h]
    \centering
\includegraphics[width=0.8\textwidth,trim=8 430 330 10, clip]
{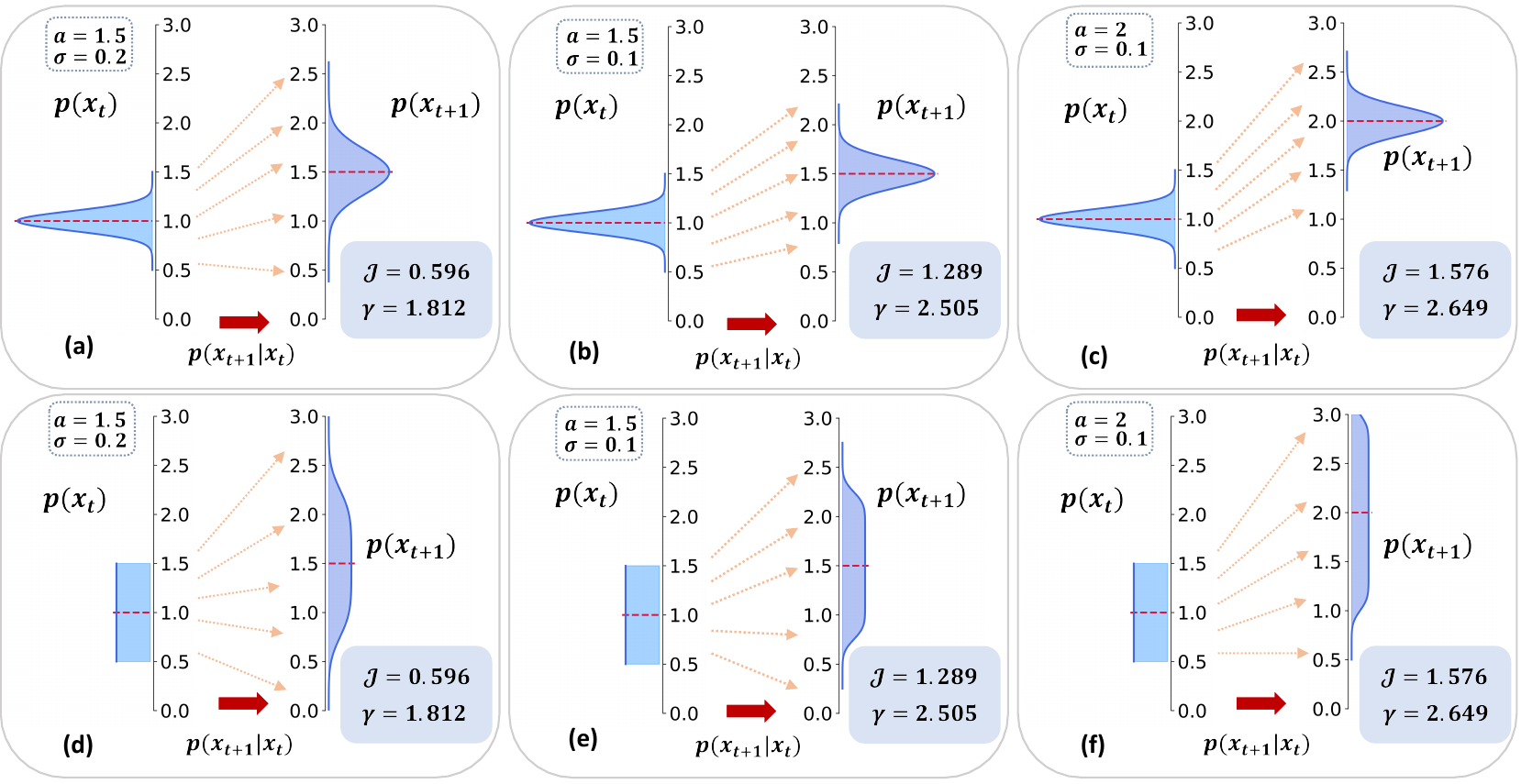}
    \caption{For 1-dimensional Gaussian information system $x_{t+1}=ax_t+\eta_t, \eta_t\sim\mathcal{N}(0,\sigma^2)$, we examine the dynamics under three different parameter settings and two distinct intervention distributions of $p(x_t)$, showing the resulting distributions $p(x_{t+1})$ and the corresponding measures of dimension averaged causal effect ($\mathcal{J}$) and approximate reversible information ($\gamma\equiv\gamma_1$).
    For any given probability density $p(x_t)$ at time $t$, the conditional probability $p(x_{t+1}|x_t)=\mathcal{N}(ax_t, \Sigma)$ corresponding to the forward dynamics gives the probability density $p(x_{t+1})$ at the next time step $t+1$. In (a) and (d), when we stipulate $a=1.5,\sigma=0.2$, the dimensionally averaged EI $\mathcal{J}=0.596$ and the dimensional average reversible information $\gamma=1.812$ are satisfied. In (b) and (e), we decrease $\sigma$ to $\sigma=0.1$. Due to the increase in determinism, the dimensionally averaged EI increases to $\mathcal{J}=1.289$, and the dimensional average reversible information increases to $\gamma=2.505$, indicating that the approximate reversibility and causal effect strength are increased simultaneously. Then increase $a$ in (c) (f) to $a=2$. Due to the increase in non-degeneracy, the dimensionally averaged EI increases to $\mathcal{J}=1.576$, and the dimensional average reversible information increases to $\gamma=2.649$. From these simple cases, we can initially obtain three pieces of information: (1) $\mathcal{J}$ and $\gamma$ are only related to $p(x_{t+1}|x_t)$, and independent on the probability density $p(x_t)$ of $x_t$ can be defined arbitrarily; (2) $\mathcal{J}$ and $\gamma$ are both positively correlated with determinism and non-degeneracy, and increase synchronously with the increase of $a$ and decrease synchronously with the increase of $\sigma$; (3) $\mathcal{J}$ and $\gamma$ are positively correlated.
}
\label{fig:distribution}
\end{figure}

Dynamics expressed in equations Eq.(\ref{micro-dynamics}) or Eq.(\ref{micro-dynamics-TPM}) can be considered a causal model, the state $x_t$ at time $t$ causes changes in state $x_{t+1}$ at the subsequent time $t+1$. Thus, $x_t$ is the cause variable and $x_{t+1}$ is the effect variable. The causality discussed in this paper refers specifically to the strength of the causal effect of cause variables on effect variables when given their causal relationship, which is the degree of certainty that when the cause changes, the effect will also change accordingly.

Fig.\ref{fig:distribution} shows several examples of 1-dimensional linear GIS in Eq.(\ref{micro-dynamics}), demonstrating the process of causality change from left to right. Specifically, when the dynamic parameter $a$ in Eq.(\ref{micro-dynamics}) is larger or the noise standard deviation $\sigma$ is smaller, it is generally believed that the strength of the causal effect will be greater. Moreover, the strength of the causal effect should be independent of the input state distribution $p(x_t)$ and only related to the dynamics (or causal mechanism) $p(x_{t+1}|x_t)$ (compare the upper and lower figures in each column in Fig.\ref{fig:distribution}).

Tononi et al. and Hoel et al.\cite{tononi2003measuring, Hoel2013} invented effective information ($EI=I(x_t,x_{t+1}|do(x_t))$) to quantify the strength of causal effect for transitional matrices. Then Hoel et al. and Liu et al. \cite{Chvykov2020,Liu2024} extended EI to continuous systems. For linear GIS defined in Eq.(\ref{micro-dynamics}) and Eq.(\ref{micro-dynamics-TPM}), the dimensionally averaged EI  (derived in Appendix B.1-B.2) is defined as
\begin{eqnarray}\label{JGaussain}
		\mathcal{J}\equiv \frac{1}{n}I(x_t,x_{t+1}|do(x_t\sim \mathcal{U}([-L/2,L/2]^n)))=\ln\frac{{\rm pdet}(A^T\Sigma^{-1}A)^\frac{1}{2n}L}{(2\pi e)^{\frac{1}{2}}}=\frac{1}{2n}\sum_{i=1}^{r_s}\ln s_i+\ln\frac{L}{(2\pi e)^{\frac{1}{2}}},
	\end{eqnarray}
where, ${\rm pdet}(\cdot)$ represents the pseudo-determinant \cite{knill2014cauchy} of matrix $\cdot$. The operator $do(\cdot)$ signifies manual intervention, where $x_t\sim\mathcal{U}([- L/2, L/2]^n)$ after applying $do(\cdot)$ on $x_t$, with $L$ being the interval size. Eq.(\ref{JGaussain}) expands $\mathcal{J}$ into the sum of $\ln s_i$, where $s_1 \geq\cdots\geq s_{r_s}$ are the singular values of $A^T \Sigma^{-1}A$, and $r_s$ is its rank.  By comparing a-c and d-f in Fig.\ref{fig:distribution}, we can see that $\mathcal{J}$ decreases as the standard deviation $\sigma$ increases and increases as the parameter $a$ increases.  By comparing the upper and lower rows (Fig.\ref{fig:distribution}a and d, b and e, c and f), we can see that $\mathcal{J}$ is independent of the distribution $p(x_t)$ of the state $x_t$, so $\mathcal{J}$ can well measure the strength of the causal effect.

$\mathcal{J}$ in Eq.(\ref{JGaussain}) can be divided into two terms \cite {Hoel2013, Liu2024}: \textbf{determinism}
$Det=\ln\left[(2\pi e)^{-\frac{1}{2}}{\rm pdet}(\Sigma^{-1})^\frac{1}{2n}\right]$
and \textbf{non-degeneracy}
$Nondeg=\ln\left[{\rm pdet}(A^T\Sigma^{-1} A)^\frac{1}{2n}{\rm pdet}(\Sigma)^\frac{1}{2n}L\right]$.
These two indicators describe the degree of uncertainty of the effect variable when given the cause variable and the degree of uncertainty of the cause variable when given the effect variable.

\subsubsection{Approximate dynamical reversibility}\label{sec:reversibility}
\begin{figure}
    \centering
\includegraphics[width=0.9\textwidth,trim=300 700 330 20, clip]
{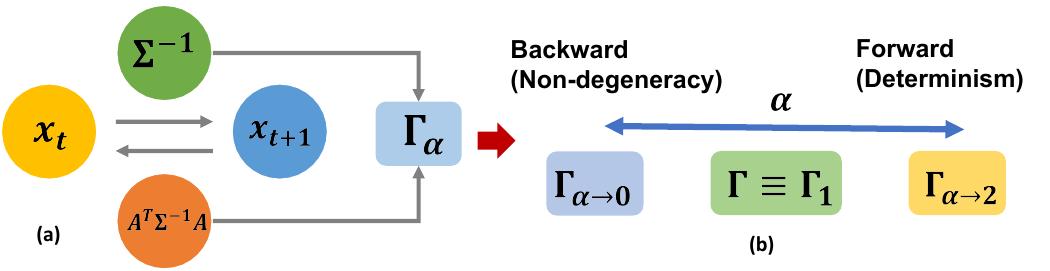}
    \caption{(a) For $p(x_{t+1}|x_t)=\mathcal{N}(Ax_t+a_0,\Sigma)$, the inverse covariance matrices of forward dynamics $x_{t+1}=a_0+Ax_t+\eta_t$ and backward dynamics $x_t=A^\dagger x_{t+1}-A^\dagger a_0-A^\dagger\eta_t$ are $\Sigma^{-1}$ and $A^T\Sigma^{-1}A$, respectively. $\Sigma^{-1}$ and $A^T\Sigma^{-1}A$ determine the approximate dynamical reversibility as $\Gamma_\alpha$ together. (b) $\alpha=1$ is often chosen to balance $\Gamma_\alpha$'s emphasis on both forward (determinism) and backward (non-degeneracy) dynamics. For $\alpha<1$, $\Gamma_\alpha$ tends to capture more information about backward dynamics $p(x_{t}|x_{t+1})$ (non-degeneracy). For $\alpha> 1$, $\Gamma_\alpha$ emphasizes more information about forward dynamics $p(x_{t+1}|x_{t})$ (determinism).}
\label{fig:concept}
\end{figure}

Another important property of dynamics that people are concerned about is dynamical reversibility. It is generally believed that the dynamics of the microscopic physical world (such as Newton's second law or the Schrödinger equation \cite{tsutsumi1987schrodinger} in quantum mechanics) are dynamically reversible, while the second law of thermodynamics in the macroscopic world stems from the irreversibility of macroscopic dynamic processes \cite{ao2008emerging, yuan2017sde}. In Eq.(\ref{micro-dynamics}), when $\eta_t=0$ and $f(x_t)=a_0+Ax_t$ is a reversible bijection mapping on $\mathcal{R}^{n}$, we assume that the dynamics of GIS described by Eq.(\ref{micro-dynamics}) is strictly reversible as $x_{t+1}=a_0+Ax_t$, and the corresponding backward dynamics of Eq.(\ref{micro-dynamics}) can be directly established as $x_{t}=A^{-1}(x_{t+1}-a_0)$. However, real-world dynamical systems around us are not so simple and perfect. 

The presence of random noise and the irreversibility of the matrix $A$ in mapping $f:\mathcal{R}^{n}\to\mathcal{R}^{n}$ can make the dynamics irreversible. However, systems may still exhibit approximate reversibility when noise is small. If $f(x_t)$ is bijective, then the dynamics with a smaller noise variance are closer to reversible. Therefore, we can use approximate dynamical reversibility to measure their closeness to reversible dynamics. In \cite{Zhang2025}, approximate dynamical reversibility can be quantified by the $\alpha$-powered sum of the singular values of the TPM (Appendix C.1). Here, we extend this concept to GIS like Eq.(\ref{micro-dynamics}) using singular value spectra from functional analysis \cite{Lax2014} and Fourier methods \cite{Bracewell1989}. We then derive the corresponding approximate dynamical reversibility indicator for GIS (derived in Appendix C.2-C.3), which is
 \begin{eqnarray}\label{Gamma_gaussian}
		\begin{aligned}
		\Gamma_\alpha=\left(\frac{2\pi}{\alpha}\right)^\frac{n}{2}{\rm pdet}(A^T\Sigma^{-1}A)^{\frac{1}{2}-\frac{\alpha}{4}}{\rm pdet}(\Sigma^{-1})^\frac{\alpha}{4}.
   \end{aligned}
\end{eqnarray} 
Apart from the constant term, the other two terms correspond to the pseudo-determinants of the inverse covariance matrices, i.e., $\Sigma^{-1}$ and $A^T\Sigma^{-1}A$, of forward and backward dynamics, respectively (see Fig.\ref{fig:concept}a). 

$\alpha\in(0,2)$ is a parameter to control the relative weight of the forward and backward inverse covariance matrices in $\Gamma_\alpha$ (see Fig.\ref{fig:concept}b). When $\alpha\to 0$, $\Gamma_{\alpha\to 0}=\left(\frac{2\pi}{\alpha}\right)^\frac{n}{2}_{\alpha\to 0}{\rm pdet}(A^T\Sigma^{-1}A)^\frac{1}{2}$ is dominated by the backward dynamics and reflects non-degeneracy after removing forward covariance interference. When $\alpha\to 2$, $\Gamma_{\alpha\to2}=\left(\pi\right)^\frac{n}{2}{\rm pdet}(\Sigma^{-1})^\frac{1}{2}$ is dominated by the forward dynamics and reflects determinism. In general, we set $\alpha=1$ gives $\Gamma\equiv\Gamma_1$, where both directions contribute equally. 

$\Gamma_\alpha$ quantifies approximate dynamical reversibility because when the uncertainties of the forward and backward dynamics are large, the pseudo-determinants of $\Sigma^{-1}$ and $A^T\Sigma^{-1}A$ are small, indicating strong irreversibility. Conversely, when these uncertainties are small (approaching zero), the pseudo-determinants grow large, implying approximate reversible dynamics.

To avoid the excessive influence of variable dimension on the reversibility of dynamics, we can use the dimensional average reversible information $\hat\gamma_\alpha\equiv\frac{1}{n}\ln\Gamma_\alpha$ to more effectively quantify the approximate dynamical reversibility. After removing the constant term, $\gamma_\alpha=\hat\gamma_\alpha-\frac{1}{2}\ln\left(\frac{2\pi}{\alpha}\right)$ is a variable only determined by the singular values after the SVD of $\Sigma^{-1}$ and $A^T\Sigma^{-1}A$ (Lemma Appendix C.2 in Appendix) as
 \begin{eqnarray}\label{dimensionaveraged}
		\begin{aligned}
		\gamma_\alpha=\frac{1}{n}\left[(\frac{1}{2}-\frac{\alpha}{4})\sum_{i=1}^{r_s}\ln s_i+\frac{\alpha}{4}\sum_{i=1}^{r_\kappa}\ln\kappa_i\right].
   \end{aligned}
\end{eqnarray}
Here $r_s,r_\kappa$ denote the ranks of $A^T\Sigma^{-1}A$ and $\Sigma^{-1}$, respectively. $s_{i}$ and $\kappa_i$ denote the $i$-th singular value of $A^T\Sigma^{-1}A$ and $\Sigma^{-1}$ respectively. Generally, we also set $\alpha=1$ and $\gamma\equiv\gamma_1$.

Below each $\mathcal{J}$ indicator in Fig.\ref{fig:distribution}, the corresponding $\gamma$ for each mapping is also shown (where $\alpha=1$). It can be seen that, same as $\mathcal{J}$, $\gamma$ decreases with the increase of the standard deviation $\sigma$ and increases with the increase of the dynamic parameter $a$. It is only related to the dynamics (the causal mechanism $p(x_{t+1}|x_t)$) and independent of the distribution $p(x_t)$. 

\subsubsection{Nearly linear relationship}
This observation is not accidental.
In fact, we have a general conclusion: $\gamma_\alpha$ is approximately linearly related to $\mathcal{J}$ (Theorem Appendix C.5 in Appendix). When the backward dynamics $p(x_t|x_{t+1})$ is close to a normalized normal distribution as $A^T\Sigma^{-1}A\approx I_n$, $\mathcal{J}$ and $\gamma_\alpha$ satisfies
\begin{eqnarray}\label{gammaabeic}
\begin{aligned}
		\gamma_\alpha\simeq (1-\frac{\alpha}{4})\mathcal{J}+C,
    \end{aligned}
\end{eqnarray}
where $C=(1-\frac{\alpha}{4})\ln(\frac{L}{\sqrt{2\pi e}})$ is a constant, and $\simeq$ denotes approximate equality, this equality holds strictly when $A$ is of full rank and $A^T\Sigma^{-1}A=I_n$, meaning the backward dynamics $p(x_t|x_{t+1})$ is a normalized normal distribution. Numerical experiments of randomly sampled $A$ and $\Sigma$ verify this linear relation, which is shown in Appendix C.4 and Figure 2 in the Appendix.

\subsection{Causal emergence (CE)}\label{sce:causalemergence}
CE refers to a special property of Markov dynamics: if the system's dynamics can yield a macroscopic dynamics with larger EI under a predefined coarse-graining strategy, then the system exhibits CE. However, this definition relies on the selection of coarse-graining strategies. Hoel et al. proposed the principle of maximizing EI \cite{Hoel2013, Hoel2017, Yang2024, Liu2024} to eliminate the arbitrariness of coarse-graining. However, this approach requires solving an often intractable optimization problem to obtain the optimal coarse-graining strategy and derive CE. To overcome this difficulty, Zhang et al. \cite{Zhang2025} introduced a new measure of CE based on approximate dynamical reversibility with SVD. Independent of coarse-graining, CE can be quantified directly from the singular value spectrum of the system’s TPM. In the following section, we will extend this framework to GIS.

\subsubsection{Causal emergence based on approximate dynamical reversibility with SVD}\label{sce:causalemergencesvd}

From the definition of dimensional average reversible information $\gamma_\alpha$ (Eq.(\ref{dimensionaveraged})) in the Section \ref{sec:reversibility}, we can obtain that its final value depends only on two spectra, one is the spectrum $s_1\geq\cdots\geq s_n$ of $A^T\Sigma^{-1}A$, and the other is the spectrum $\kappa_1\geq\cdots\geq \kappa_n$ of $\Sigma^{-1}$. Thus, CE can be defined using these singular value spectra: it occurs when most reversible information is concentrated in a few dimensions with large singular values, allowing the dynamics to be coarse-grained via SVD-based dimensional reduction.

Thus, we can quantify CE based on approximate dynamical reversibility as follows:
   \begin{eqnarray}
\label{eq:degree_vague_emergence}
        \begin{aligned}
\Delta\Gamma_\alpha(\epsilon)\equiv{\gamma}_\alpha(\epsilon)-\gamma_\alpha=\frac{1}{r_\epsilon}\left[(\frac{1}{2}-\frac{\alpha}{4})\sum_{i=1}^{r_\epsilon}\ln s_i+\frac{\alpha}{4}\sum_{i=1}^{r_\epsilon}\ln\kappa_i\right]-\frac{1}{n}\left[(\frac{1}{2}-\frac{\alpha}{4})\sum_{i=1}^{r_s}\ln s_i+\frac{\alpha}{4}\sum_{i=1}^{r_\kappa}\ln\kappa_i\right].
        \end{aligned}
        \end{eqnarray}
Here $\epsilon\geq0$ is a threshold for cutting the small singular values. $\Delta\Gamma_\alpha(\epsilon)$ is the strength of CE, and CE occurs with a threshold $\epsilon$ only if $\Delta \Gamma_{\alpha}(\epsilon)>0$. Specifically, we call \textbf{clear CE} occurs when $\epsilon=0$ (Definition Appendix D.1 in Appendix), and \textbf{vague CE} occurs in other cases (Definition Appendix D.2 in Appendix). If $A^T\Sigma^{-1}A$ or $\Sigma^{-1}$ is not full-rank, clear CE occurs; otherwise, vague CE may occur.

The basic idea behind this definition is that the differences of $\gamma$ before and after the small singular values (that is, satisfying $\kappa_i\leq\epsilon$ or $s_i\leq\epsilon$) are removed. And the number of remaining dimensions that contain two singular values is the effective rank $r_\epsilon$ as the dimension of the new system.

\begin{figure}[h]
    \centering
\includegraphics[width=0.8\textwidth,trim=21 400 280 30, clip]
{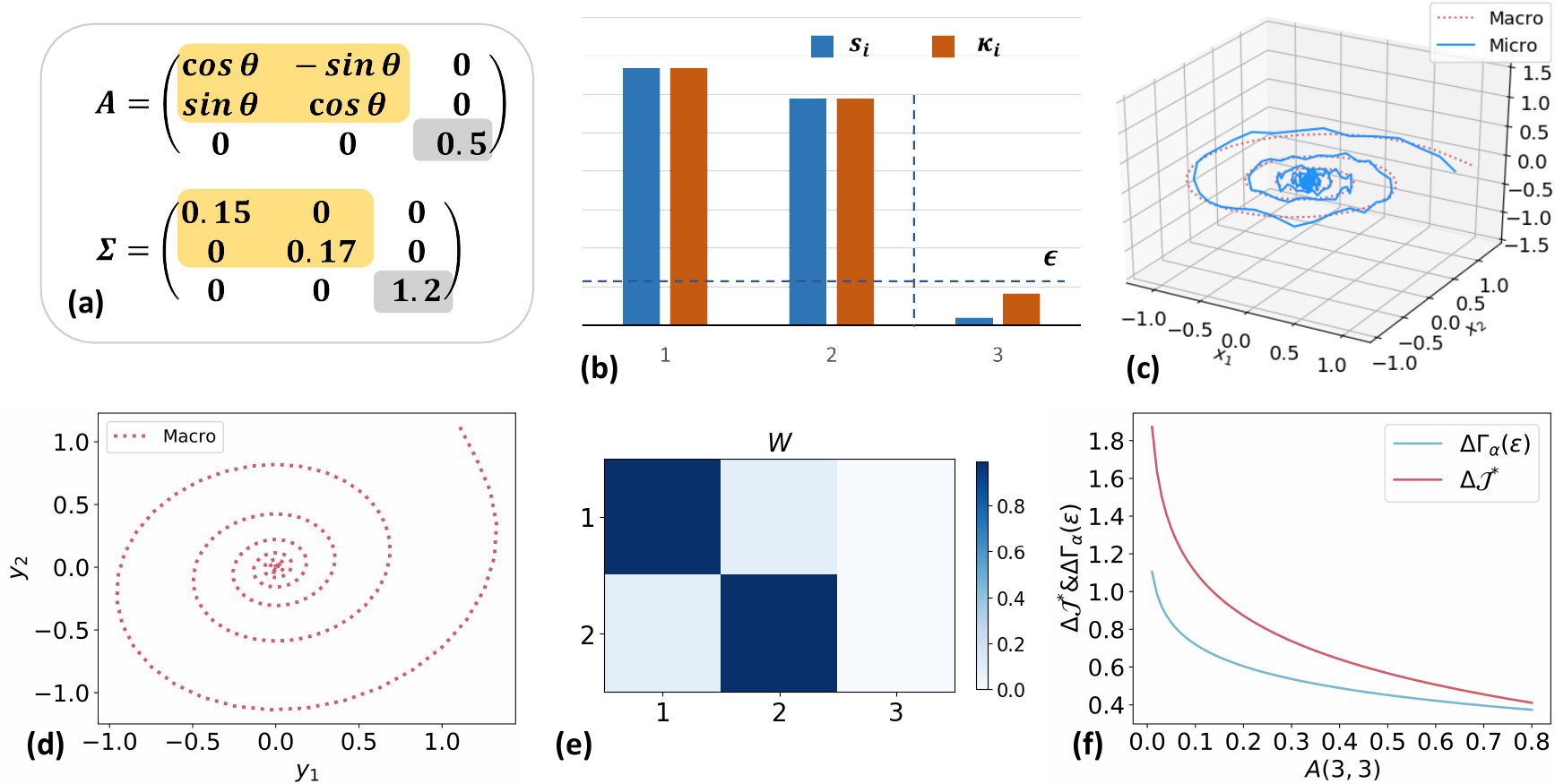}
    \caption{A case where vague CE occurs. $A$ and $\Sigma$ in (a) construct the GIS dynamics of a rotation model. In (b), the two singular values of the third dimension are less than the threshold $\epsilon=1.0$, so $\Delta\Gamma_\alpha(\epsilon)>0$ and vague CE occurs. In (c), the dynamic trajectory of macroscopic states in 3-dimensional Euclidean space $\mathcal{R}^3$ is shown. (d) and (e) correspond to the macroscopic dynamics trajectory in $\mathcal{R}^2$ obtained by the SVD method and the coarse-graining strategy based on which. (f) shows that as $A_{33}$, which represents the element of the 3rd row and 3rd column of $A$, increases. The strength of CE ($\Delta\Gamma_{\alpha}$) will decrease due to the decrease in the difference $s_2-s_3(>0)$ between singular values $s_2$ and $s_3$ of $A^T\Sigma^{-1}A$. As a comparison, the strength CE based on the maximizing EI framework ($\Delta\mathcal{J}$) also decreases similarly.}
\label{fig:singularvalues}
\end{figure}

A simple example is given to show the occurrence of vague CE in Fig.\ref{fig:singularvalues} with $\epsilon= 1.0$. This is a linear dynamical system with the given $A$ and $\Sigma$ as shown in Fig.\ref{fig:singularvalues}a, and its singular value spectra of forward and backward dynamics are shown in Fig.\ref{fig:singularvalues}b. The trajectory in three-dimensional space is shown in Fig.\ref{fig:singularvalues}c. From Fig.\ref{fig:singularvalues}b, we can observe that the spectra exhibit a sharp decay between dimensions 2 and 3, therefore CE occurs. This phenomenon can be quantified by $\gamma(\epsilon)$ by setting the threshold $\epsilon$ (horizontal dashed line) to truncate near-zero singular values (right of the vertical dashed line). Actually, zero or near-zero singular values indicate redundant dimensions with near-zero dimensional
average reversible information $\gamma$ (e.g., the $x_3$ dimension in Fig.\ref{fig:singularvalues}c). Removing these dimensions through a coarse-graining strategy (which will be mentioned in the following section) enhances $\gamma$, and a more informative dynamics (macro-dynamics) can be obtained as shown in Fig.\ref{fig:singularvalues}d. Furthermore, Fig.\ref{fig:singularvalues}f shows the significance of CE decreases with the increase of the value of the right-bottom element in $A$.

\subsubsection{Equivalence between the two frameworks}
\label{sce:causalemergenceratioality}
This framework is reasonable, and the most important reason is that our new definition of CE is consistent with Hoel et al.'s definition \cite{Hoel2013, Hoel2016, Hoel2017, Klein2020, Chvykov2020, Hoel2024,comolatti2025consilience} (Appendix D.1) based on maximizing EI. There is an approximate linear relationship, analytically, between CE based on SVD and CE based on maximizing EI (Theorem Appendix D.1 in Appendix). When the backward dynamics $p(x_t|x_{t+1})$ is close to a normalized normal distribution as $A^T\Sigma^{-1}A\approx I_n$ and $r_\epsilon=k$, $k$ is the dimension of macro dynamics we set in the framework of maximizing EI (The operational recipe for choosing the macro dimension $k$ is given in Appendix D.1.2), $\Delta\Gamma_\alpha(\epsilon)$ and $\Delta\mathcal{J}^{*}$ are approximately linear and positively correlated as
\begin{eqnarray}
\label{eq:dgammaanddJ}
        \begin{aligned}
\Delta\Gamma_\alpha(\epsilon)\simeq(1-\frac{\alpha}{4})\Delta\mathcal{J}^*.
        \end{aligned}
        \end{eqnarray}
The equal sign holds when the backward dynamics $p(x_t|x_{t+1})$ is a normalized normal distribution as $A^T\Sigma^{-1}A=I_n$. Numerical experiments of randomly sampled $A$ and $\Sigma$ verify this linear relation, which is shown in Appendix D.3 and Figure 2 in the Appendix. In Eq.(\ref{eq:dgammaanddJ}), $\Delta\mathcal{J}^{*}$ is the optimal result of CE based on the EI maximization framework proposed by Hoel et al. and Liu et al. \cite{Hoel2013,  Hoel2017, Yang2024, Liu2024}
defined as
\begin{eqnarray}\label{DeltaJstar}
\Delta\mathcal{J}^{*}=\frac{1}{2k}\sum_{i=1}^{k}\ln s_i-\frac{1}{2n}\sum_{i=1}^{r}\ln s_i,
\end{eqnarray}
in which $s_1\geq\cdots\geq s_n$ represents the singular values of the matrix $A^T\Sigma^{-1} A$, $r$ is its rank. In the simple example of CE shown in Fig. \ref{fig:singularvalues}, both $\Delta\Gamma_{\alpha}$ and $\Delta \mathcal{J}^*$ decrease with increasing $A_{33}$, which is the value of the 3-row and 3-column element of $A$.

From another perspective, our definition of CE is actually an approximate necessary condition for maximizing EI-based CE as defined by Hoel et al. \cite{Hoel2013,Hoel2017}. In other words, for any Markov dynamics, if we want to find a coarse-graining strategy that maximizes EI, we can actually perform SVD on the dynamics first. Although there is a high likelihood that the two frameworks overlap, they are not entirely equivalent because the coarse-graining strategy obtained by maximizing EI ignores the covariance of forward dynamics. So, we need to propose a method that constructs a coarse-graining strategy based on SVD, in which both forward and backward dynamics are taken into account.

\subsubsection{Coarse-graining strategy based on SVD}\label{Coarse-graining}
For any GIS, we can first perform SVD on both $A^T\Sigma^{-1}A$ and $\Sigma^{-1}$ to obtain the two spectra $s_1>\cdots>s_n$ and $\kappa_1>\cdots>\kappa_n$. We then need to consider both sets of spectra for truncation. However, since their corresponding singular vectors may not all be orthogonal, their subspaces may overlap or conflict. Therefore, we combine two singular vector matrices and perform a second round of SVD. Based on this, we can obtain the coarse-graining matrix and the corresponding macro dynamics accordingly by dimension reduction. After these two rounds of SVD, we obtain the final result (detailed in Appendix E.2).

\subsection{Extension for nonlinear system}\label{sce:localcausalemergence}
The content discussed above is for linear GIS. For the more universal form of Eq.(\ref{micro-dynamics normal}), we can perform a local Taylor expansion of $f(x)$, thereby transforming it into the linear form of Eq.(\ref{micro-dynamics}). In this way, we can calculate the quantification of CE ($\Delta\Gamma_{\alpha}^{x_t}(\epsilon)$) locally (which depends on $x_t$).

Since most of the dynamics we consider are stationary Markov processes, we can discuss their global CE properties. For this purpose, we can approximate the global properties of CE by averaging the local properties of dynamics.
For general problems, if $x_{t+1}$ is approximately distributed within the interval $\mathcal{X}$, and $A^T_{x_t}\Sigma^{-1}_{x_t}A_{x_t}$ and $\Sigma^{-1}_{x_t}$ do not fluctuate or diverge significantly, we can first randomly sample and then average the parameter matrix to obtain
$A^T\Sigma^{-1}A\approx\left<A^T_{x_t}\Sigma^{-1}_{x_t}A_{x_t}\right>_{x_t\in\mathcal{X}}$
and
$\Sigma^{-1}\approx\left<\Sigma^{-1}_{x_t}\right>_{x_t\in\mathcal{X}}$ to calculate the two global inverse covariance matrices. Finally, we can calculate global CE based on the averaged results of these two matrices (detailed in Appendix A.1).

The above is our theoretical part. In the next section, we show the results of applying our framework to three practical cases.

\begin{figure}[htbp]
    \centering
\includegraphics[width=0.9\textwidth,trim=0 65 470 0, clip]
{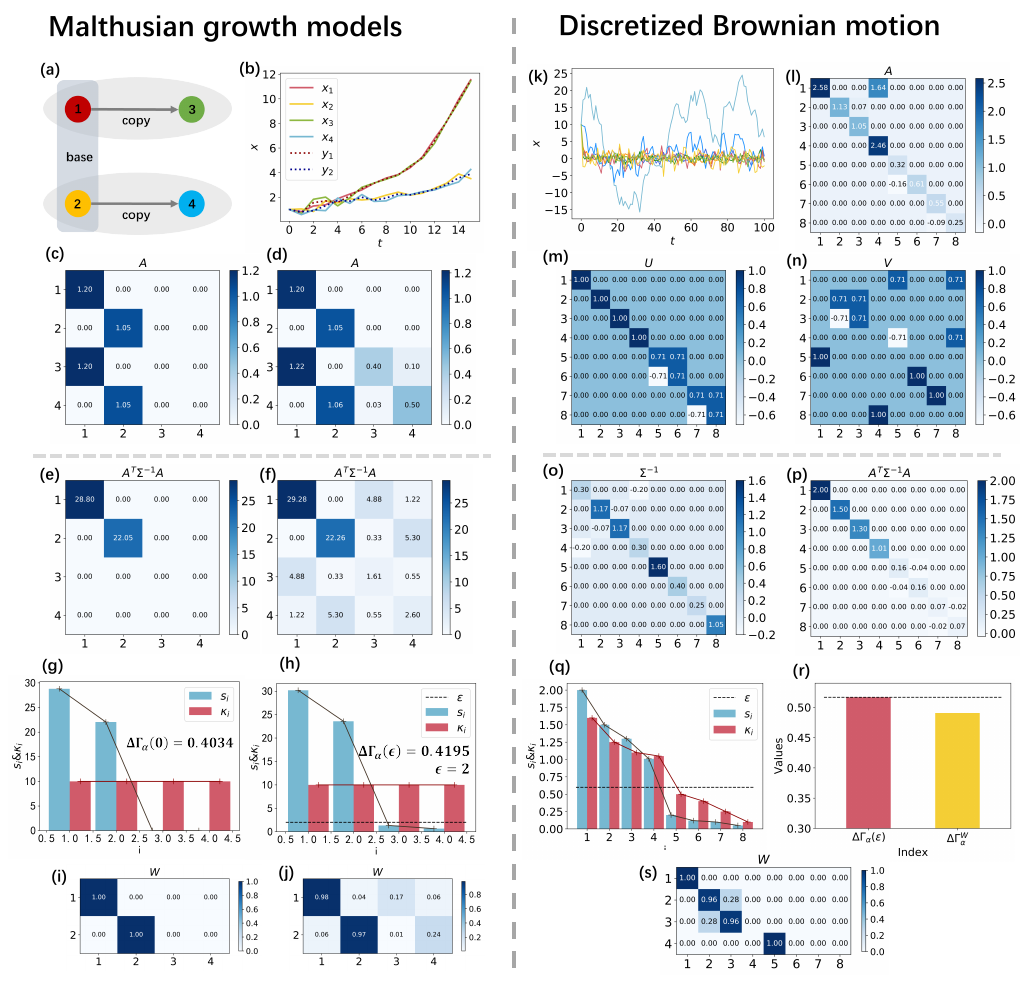}
    \caption{
    Two examples of dynamics where CE occurs, one's CE is primarily caused by the coefficient $A$, the other is primarily caused by the covariance $\Sigma$. (a) A Malthusian growth model where $x_3,x_4$ are copies of $x_1,x_2$. (b) A sample with growth rates of 0.2 and 0.05. $x_i$, $i=1,2,3,4$, are micro-states, while $y_1$ and $y_2$ are macro-states we derived by the method mentioned in Section \ref{sce:causalemergencesvd}. (c) The original parameter matrix $A$. (d) Parameter matrix $A$ with perturbations. (e) The original backward dynamics inverse covariance matrix $A^T\Sigma^{-1}A$ with rank $r_s=2$. (f) $A^T\Sigma^{-1}A$ after random perturbations to $A$. (g) The singular value spectrum of $A^T\Sigma^{-1}A$ and $\Sigma^{-1}$ when $A^T\Sigma^{-1}A$ only has two nonzero singular values. Clear CE occurs, and the strength can be calculated as $\Delta\Gamma_\alpha(0)=0.4034$. (h) The singular value spectrum of $A^T\Sigma^{-1}A$ and $\Sigma^{-1}$ when $A$ is perturbed. Vague CE occurs, and the strength of this CE is $\Delta\Gamma_\alpha(\epsilon)=0.4195$ when $\epsilon=2$. (i) The derived coarse-graining parameter $W$ according to the method mentioned in Section \ref{Coarse-graining} and Appendix E.2, where the number of columns represents the macroscopic dimensions, and the number of rows represents the microscopic dimensions. (j) The coarse-graining parameter $W$ for the vague CE case, the first macro-state dimension is determined by $x_1,x_3$, and the second macro-state dimension is determined by $x_2,x_4$. (k) The samples of trajectories of $x_t$ in Brownian motion. (l) Parameter $A$ of the drift term $f(x_t)$. (m) and (n) show singular matrices $U$ and $V$. (o) and (p) show the inverse of the covariance matrices of forward and backward dynamics, which are $\Sigma^{-1}$ and $A^T\Sigma^{-1}A$. (q) Singular value spectra of $\Sigma^{-1}$ and $A^T\Sigma^{-1}A$ with $\epsilon=0.6$. (r) Since $\Delta\Gamma_\alpha(\epsilon)$ is the theoretical value of CE, we can use this value to compare with the difference $\gamma_\alpha^W-\gamma_\alpha$ between the $\gamma^W_\alpha$ of our coarse-grained macro dynamics and the $\gamma_\alpha$ of the original micro dynamics. Approximate CE based on SVD $\Delta\Gamma_\alpha^W=\gamma_\alpha^W-\gamma_\alpha=0.4907$ as $W$ is derived from Section \ref{Coarse-graining}, which is close to the theoretical value $\Delta\Gamma_\alpha(\epsilon)=0.5167$. (s) The coarse-graining strategy parameter matrix $W$ derive from methods in Section \ref{Coarse-graining}, which preserves the non-conflicting 2nd and 3th dimensions, along with the 1st and 5th dimensions with larger singular values $\kappa_1$ and $s_1$.}
\label{fig:Known_model}
\end{figure}
\section{Results}
\label{results}
In this section, we present the results of three numerical experimental examples to illustrate the concepts and conclusions derived in the previous sections. The first two are for GIS with known models, and the third is the application on a neural network trained on time series data generated by a Susceptible-Infected-Recovered (SIR) model.
\subsection{Malthusian growth models}\label{growth models}

The first example contains 4 variables, in which the first two variables $x_1,x_2$ follow the Malthusian growth model \cite{Galor2000} with different growth rates of 0.2 and 0.05. To study CE, we define the other two variables $x_3,x_4$ as the copies of the first two variables shown as Fig.\ref{fig:Known_model}a, thus, they are redundant dimensions. 

If $x=(x_1,x_2,x_3,x_4)$, the evolution of $x$ is a GIS $x_{t+1}=a_0+Ax_t+\eta_t, \eta_t\sim\mathcal{N}(0,\sigma^2 I_4)$ as $x_t,x_{t+1}\in\mathcal{R}^{4}$, $\sigma^2=0.1$, $a_0=0$, and $A$ is shown in Fig.\ref{fig:Known_model}c. Dynamical trajectories of this model are shown in Fig.\ref{fig:Known_model}b, $y_1$ and $y_2$ are theoretical macro-states. In this model, $\Sigma$ is a full rank matrix, so we only need to study the inverse backward dynamics covariance matrix $A^T\Sigma^{-1}A$ (Fig.\ref{fig:Known_model}e). This matrix only has two singular values as the singular value spectrum is shown in Fig.\ref{fig:Known_model}g with $r=2<4$, the horizontal axis represents the sequence number of singular values arranged in descending order as $s_1\geq\dots\geq s_4$, while the vertical axis represents the magnitude of singular values $s_i$ and $\kappa_i$ as $\kappa_1=\cdots=\kappa_4=10$. So when $r_s=2<r_\kappa=4$, clear CE occurs and the strength can be calculated as $\Delta\Gamma_\alpha(0)=0.4034$. Fig.\ref{fig:Known_model}i shows the coarse-graining parameter $W\in\mathcal{R}^{2\times 4}$ obtained by methods in Section \ref{Coarse-graining}, which is equal to remove $x_3$ and $x_4$ and only retain $x_1$ and $x_2$ to describe the evolution of the growth model since variables $x_1$ and $x_2$ hold all the relevant information contained within $x_3$ and $x_4$.

To show the concept of vague CE, we can add some perturbations to $A$ in Fig.\ref{fig:Known_model}c and obtain $A$ in Fig.\ref{fig:Known_model}d. Then $A^T\Sigma^{-1}A$ (Fig.\ref{fig:Known_model}f) is a full rank matrix as $r_s=r_\kappa=n=4$, and clear CE occurs with the strength $\Delta\Gamma_\alpha(0)=0$. However, by observing the singular value spectrum in Fig.\ref{fig:Known_model}h, we can see that only two dimensions have significant impacts, so we need to select a threshold ($\epsilon=2$) to truncate the small singular values such that the occurrence of vague CE can be shown. Therefore, the CE strength can be calculated as $\Delta\Gamma_\alpha(\epsilon)=0.4195$ and $r_\epsilon=2$ can also be derived. According to Section \ref{Coarse-graining}, the coarse-graining parameter $W$ is shown in Fig.\ref{fig:Known_model}j, the columns represent the macroscopic dimensions, and the rows represent the microscopic dimensions. Due to disturbances, $x_3$ and $x_4$ also contain some independent information. Coarse-graining strategy $\phi(x)=Wx$ is to merge $x_1$ and $x_3$, $x_2$ and $x_4$ by weighted summation, and the two row-vectors that make up $W$ are exactly equal to the singular vectors corresponding to the two largest singular values $s_1, s_2$.

This case illustrates that sometimes high dimensions may reduce the efficiency of reversibility of a system due to dimensional redundancy. Retaining only the maximum $r_\epsilon$ singular values of the inverse covariance matrix can improve the efficiency of reversibility and generate CE.

\subsection{Discretized Brownian motion}\label{Discretized Brownian motion}
The design of this example is to demonstrate the need to simultaneously filter the two singular value spectra of $\Sigma^{-1}$ and $A^T\Sigma^{-1}A$.
Discretized Brownian motion is an approximation of continuous Brownian motion in discrete time, which is always used for numerical simulation and stochastic process modeling. Eq.(\ref{micro-dynamics}) can be regarded as a discrete version of the Ornstein-Uhlenbeck (OU) \cite{maller2009ornstein} process. In this model, $f(x_t)=a_0+Ax_t$ is the drift vector, which influences the evolution of the state, and $A$ is shown in Fig.\ref{fig:Known_model}l 
and $a_0=0$. Covariance matrix $\Sigma$ representing the diffusion coefficient, which determines the magnitude and correlation of random fluctuations across dimensions of $\eta_t$ like Fig.\ref{fig:Known_model}k.

As we set $A$ as shown in Fig.\ref{fig:Known_model}l, after perform SVD on $\Sigma^{-1}$ and $A^T\Sigma^{-1}A$ in Fig.\ref{fig:Known_model}o and Fig.\ref{fig:Known_model}p, as $A^T\Sigma^{-1}A=USU^T,\Sigma^{-1}=VKV^T$, we can obtain singular vector matrices $U=(u_1,\cdots,u_n)$ and $V=(v_1,\cdots,v_n)$ as shown in Fig.\ref{fig:Known_model}m and n and the diagonal matrices of singular values $S={\rm diag}(2,1.5,1.3,1.01,0.2,0.12,0.1,0.05)$, $K={\rm diag}(1.6,1.25,1.1,1.05,0.5,0.4,0.25,0.1)$. After obtaining the singular value spectra in Fig.\ref{fig:Known_model}q, we specify $\epsilon=0.6$ to get $\Delta\Gamma_\alpha(\epsilon)=0.5167$ and the number of macro-states is $r_\epsilon=4$. Ideally, $s_1,\cdots,s_4$ and $\kappa_1,\cdots,\kappa_4$ should be retained.

As we get the singular value spectrum of $\Sigma^{-1}$ and $A^T\Sigma^{-1} A$ in Fig.\ref{fig:Known_model}q, we can also calculate the coarse-graining strategy based on SVD. According to Section \ref{Coarse-graining}, we can obtain the optimal coarse-graining strategy parameter matrix $W$ in Fig.\ref{fig:Known_model}s, which preserves the non-conflicting 2nd and 3rd dimensions, along with the 1st and 5th dimensions with larger singular values $\kappa_1$ and $s_1$. So in reality, we retained $s_1,s_2,s_3,s_5$ and $\kappa_1,\kappa_2,\kappa_3,\kappa_5$.

 Since $\Delta\Gamma_\alpha(\epsilon)$ is the theoretical value of CE, we can use this value to compare with the difference $\gamma_\alpha^W(\epsilon)-\gamma_\alpha$ between the $\gamma^W_\alpha(\epsilon)$ of coarse-grained macro dynamics and the $\gamma_\alpha$ of the original micro dynamics. The macro dynamics obtained through $\phi=Wx_t$ have their own approximate dynamical reversibility $\gamma_\alpha^W$ as $W$ is derived from Section 2.3.3. We can obtain an approximate CE based on the coarse-graining strategy $\Delta\Gamma_\alpha^W=\gamma_\alpha^W-\gamma_\alpha=0.4907$, which is close to the theoretical $\Delta\Gamma_\alpha(\epsilon)=0.5167$. By comparing in Fig.\ref{fig:Known_model}r, we can see that $\Delta\Gamma_\alpha^W$ obtained by $\phi(x_t)=Wx_t$ is close to the theoretical value $\Delta\Gamma_\alpha(\epsilon)$.

\begin{figure}[htbp]
    \centering
\includegraphics[width=1\textwidth,trim=0 180 490 0, clip]
{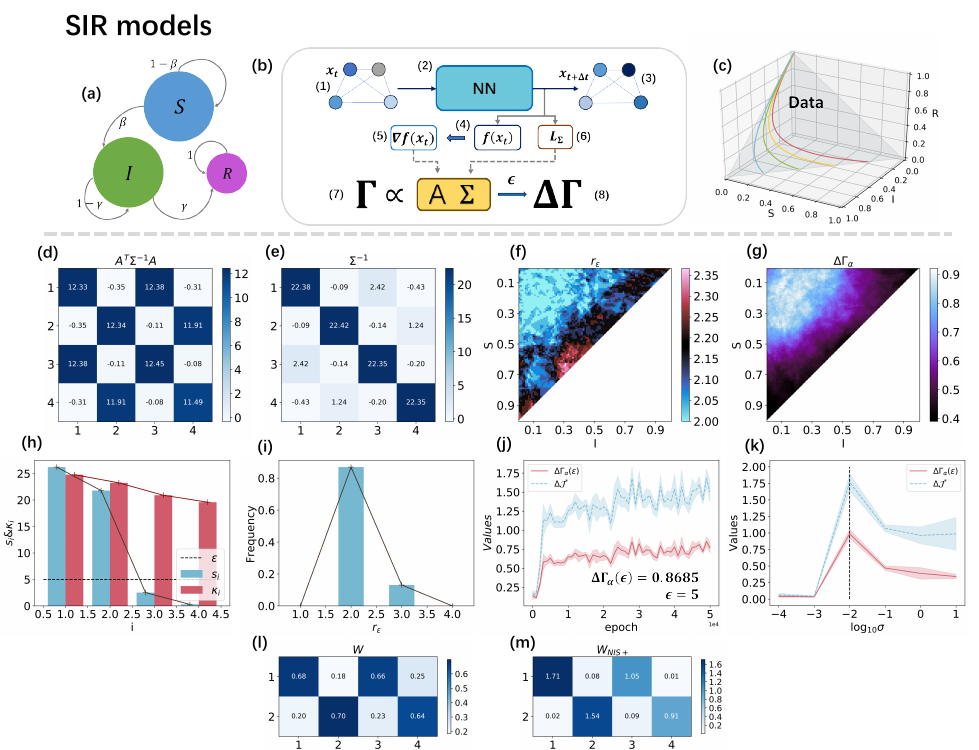}
    \caption{An example of applying machine learning to extract dynamics from data and measure CE of the dynamical system. (a) Schematic diagram of the SIR model. (b) Schematic diagram of learning dynamics and calculating $\Gamma$ through NN based on data. (1) and (3) are the variables $x_t$ and predicted $x_{t+\Delta t}$ at adjacent time points. (2) NN model for machine learning, whose structure is in Appendix F.3. (4) Dynamics $f(x_t)$ learned by NN model based on data of SIR. (5) $A_{x_t}=\nabla f(x_t)$ is the Jacobian matrix of the trained model at $x_t$. (6) The output $L_\Sigma$ of the Covariance Learner Network, the covariance matrix $\Sigma_{x_t}$ is derived from which. (7) The numerical solution of $\Gamma$, which is related to $A$ and $\Sigma$. (8) Applying $A^T\Sigma^{-1}A=\left<A^T_{x_t}\Sigma^{-1}_{x_t}A_{x_t}\right>_{x_t\in\mathcal{X}}$ and $\Sigma^{-1}=\left<\Sigma^{-1}_{x_t}\right>_{x_t\in\mathcal{X}}$, SVD-based CE $\Delta\Gamma(\epsilon)$ can be calculated under suitable $\epsilon$, $\mathcal{X}$ is the domain of SIR. (c) Training data which is the same as NIS+ \cite{Yang2024}. The full dataset (entire triangular region) used for training is displayed, along with four example trajectories with the same infection and recovery or death rates. The method of generating training data can be found in Appendix F.3. (d) $A^T\Sigma^{-1}A=\left<A^T_{x_t}\Sigma^{-1}_{x_t}A_{x_t}\right>_{x_t\in\mathcal{X}}$ derived by NN trained SIR model. (e) $\Sigma^{-1}=\left<\Sigma^{-1}_{x_t}\right>_{x_t\in\mathcal{X}}$ derived by NN trained SIR model. (f) Distribution of $r_\epsilon$ obtained by traverse the dynamical space of SIR with different $(S,I)$ and $x_t$ generated from which. (g) Distribution of $\Delta\Gamma_\alpha$ under different $(S,I)$ obtained by traverse the dynamical space. (h) The singular value spectra of $A^T\Sigma^{-1}A$ and $\Sigma^{-1}$ in NN trained SIR model. (i) The frequency of $r_\epsilon$ under different samples $x_t$ in test dataset. (j) Maximum $\Delta\Gamma_\alpha(\epsilon)=0.8685$ when the training period is around 50,000 as $\epsilon=5$. (k) The changing trend of CE under different $\sigma$, the threshold for trend change is around $\sigma=0.01$. (l) The coarse-graining matrix $W$ derived from Section \ref{Coarse-graining} and Appendix E.2 use $A^T\Sigma^{-1}A=\left<A^T_{x_t}\Sigma^{-1}_{x_t}A_{x_t}\right>_{x_t\in\mathcal{X}}$ and $\Sigma^{-1}=\left<\Sigma^{-1}_{x_t}\right>_{x_t\in\mathcal{X}}$. (m) The coarse-graining matrix $W_{NIS+}$ derived from NIS+. For the convenience of observation, we have taken absolute values for each dimension in $W$ and $W_{NIS+}$.}
\label{fig:SIR}
\end{figure}

\subsection{SIR based on NN}\label{NNSIR}
Except for analytical solutions for CE on known models, this framework can also be applied to observed time series data. Our third case is to show the phenomenon of CE obtained by a well-trained neural network (NN) on time series data generated by the Susceptible-Infected-Recovered (SIR) dynamical model \cite{Satsuma2004} in Fig.\ref{fig:SIR}a with the following dynamics:
\begin{eqnarray}\label{SIRmodel}
\begin{aligned}
\begin{cases}  
\frac{\mathrm{d}S}{\mathrm{d}t}=-\beta SI,  \\
\frac{\mathrm{d}I}{\mathrm{d}t}=\beta SI - \gamma I, \\
\frac{\mathrm{d}R}{\mathrm{d}t}= \gamma I,
\end{cases}
\end{aligned}
\end{eqnarray}
where $S,I,R\in[0,1]$ represent the rate of susceptible, infected, and recovered individuals in a population, $\beta=1$ and $\gamma=0.5$ are parameters for infection and recovery rates.

To generate the time series data of the micro-state, we adopt the same method as in our previous work of NIS+ \cite{Yang2024}. In which we duplicate the macro-state $(S_t,I_t)$ as shown in Fig.\ref{fig:SIR}c and add Gaussian random noise to form the micro-state $x_t$. By feeding the micro-state data into a forward neural network (NN) called \textbf{Covariance Learner Network} in Fig.\ref{fig:SIR}b, we can use this model to approximate the micro-dynamics of the SIR model  (Appendix F.3 shows the details of generating data and the structure of NN). 

We can obtain the CE identification result directly by calculating $A_{x_t}=\nabla f(x_t)$ as the Jacobian matrix, and $\Sigma_{x_t}$ is the covariance matrix that neural networks can directly output at sample $x_t$. We can randomly generate $x_t$ with a uniform distribution on the domain $\mathcal{X}$ of the SIR dynamics and take the average value as Section \ref{sce:localcausalemergence} to calculate global $A^T\Sigma^{-1} A$ and $\Sigma^{-1}$ in Fig.\ref{fig:SIR}d and Fig.\ref{fig:SIR}e. In Fig.\ref{fig:SIR}h, we can see the singular value spectrum of matrices $A^T\Sigma^{-1} A$ and $\Sigma^{-1}$ in which the horizontal axis represents the sequence number of singular values arranged in descending order, while the vertical axis represents the magnitude of singular values $s_i,\kappa_i$. Due to the existence of the copy operation, even the simplest NN can recognize only two dimensions with larger singular values as $r_\epsilon=2$ when $\epsilon=5$. In model training, we get the value of Vague CE as $\Delta\Gamma_\alpha(\epsilon)=0.8685$ as shown in Fig.\ref{fig:SIR}j.

Locally, based on directly calculating $r_\epsilon^{x_t}$ by $A^T_{x_t}\Sigma^{-1}_{x_t}A_{x_t}$ and $\Sigma^{-1}_{x_t}$ at samples $x_t$ obtained by traverse the entire SIR dynamic space $\mathcal{X}$, the resulting matrix shows $r_\epsilon = 2$ at most positions in Fig.\ref{fig:SIR}f. Frequency analysis further confirms a similar observation, revealing that $r_\epsilon = 2$ occurs with the highest probability across different samples in the test data of NN training, as shown in Fig.\ref{fig:SIR}i. Computing $\Delta\Gamma_\alpha(\epsilon)$ traverses the entire space and visualizing it in Fig.\ref{fig:SIR}g, we observe that the regions of highest $\Delta\Gamma_\alpha(\epsilon)$ align closely with the most stable $r_\epsilon$ areas in Fig.\ref{fig:SIR}f. Both are concentrated within a circle of radius 0.5 centered at $(0,0)$. It can be observed that due to the stability of the dynamical space, the local and global measures of CE are similar.

We can also compare the results of our framework with those of NIS+ mentioned in \cite{Yang2024}. From Fig.\ref{fig:SIR}l and Fig.\ref{fig:SIR}m, we can see that $W$ obtained by methods in Section \ref{Coarse-graining} and $W_{NIS+}$ of the NIS+ are similar as both the values of $W$ and $W_{NIS+}$ are consistent with the copy method (Fig.\ref{fig:Known_model}a). Additionally, we can examine the changing trend of CE under different $\sigma$ values. From Fig.\ref{fig:SIR}k, we can see that the turning point for CE is around $\sigma=0.01$, which is consistent with the value obtained by NIS+ \cite{Yang2024}. 

This simple case integrates our new framework with machine learning, providing a fundamental reference for addressing real-world problems with real data.

\section{Conclusion and discussion}
This work introduces the notion of approximate dynamical reversibility $\Gamma_\alpha$ and an alternative exact quantification of CE based on it for GIS. $\Gamma_\alpha$ depends only on the singular values of $\Sigma^{-1}$ and $A^T\Sigma^{-1}A$ and independent of predefined coarse-graining method. By retaining the top $r_\epsilon$ singular values of $\Sigma^{-1}$ and $A^T\Sigma^{-1}A$, CE can be quantified directly as $\Delta\Gamma_\alpha(\epsilon)$. Compared to EI-based CE, this SVD-based approach offers higher accuracy and computational efficiency, laying the foundation for detecting CE in complex systems.

Meanwhile, the coarse-graining strategy can also be derived directly from the singular vector matrices of $\Sigma^{-1}$ and $A^T\Sigma^{-1}A$, balancing the preservation of key singular values of two matrices and improving $\Delta\Gamma_\alpha(\epsilon)$. The resulting strategy reveals natural aggregation strategies: correlated dimensions merge (like the copied dimension in the growth model and SIR model), and weak ones are discarded (like the discarded 4th, 6th, 7th, and 8th dimensions with larger noise in Brownian motion). Unlike conventional CE frameworks that maximize EI through manually designed coarse-graining strategies, this method simplifies computation and produces more interpretable, physically grounded macroscopic predictions.

Compared with the SVD-based CE of TPM, $\Delta\Gamma_\alpha$ in GIS more clearly reveals the link between SVD-based and EI-based CE. Analytical results show that $\Delta\Gamma_\alpha(\epsilon)\simeq(1-\frac{\alpha}{4})\Delta\mathcal{J}^*$, with equality when the backward dynamics follow a normalized normal distribution. Randomly sampled $A$ and $\Sigma$ confirm this linear relation. The parameter $\alpha$ further modulates determinism and non-degeneracy. In continuous space, our method can provide theoretical support for some discrete TPM results that can only be obtained through numerical experiments, and can be applied to more real-world systems.


Beyond introducing the CE based on dynamical reversibility, this work also innovates in the use of SVD. SVD has long served as a key tool for analyzing system dynamics—from detecting critical transitions in Earth and market systems \cite{Sun2021,Hu2019,fan2021statistical,sornette2014physics} used ensemble singular values, to model reduction in control theory, neural networks, even the Koopman operator \cite{Gallivan1994,Gugercin2008,Antoulas2005,schmid2010dynamic,williams2015data}. Inspired by these foundations, we extend SVD to stochastic dynamics by applying it to continuous Markovian dynamics, which jointly captures the deterministic mapping $A$ and stochastic covariance $\Sigma$. Furthermore, our framework treats the inverse covariance matrices $A^T\Sigma^{-1}A$ and $\Sigma^{-1}$ as intrinsic information carriers, simultaneously accounting for both forward and backward dynamics. The singular values of the inverse covariance matrices reveal the reversible information embedded in the system, offering a unified view that encompasses certainty, stochasticity, and reversibility to expose the fundamental degrees of freedom driving the evolution of complex systems.

While our approach has made progress, several challenges remain unresolved. The first limitation is that our model is currently restricted to linear GIS, with nonlinear GIS approximated in a locally linear form. However, this approximation introduces stability issues, as the gradient of nonlinear functions may be ill-conditioned. In particular, cases where $\nabla f(x)=0$ or $\nabla f(x)\to\infty$ lead to the breakdown of our framework. To address this, incorporating higher-order derivatives as a refined CE metric warrants further investigation.

The second issue arises when time is continuous rather than discrete; the existing CE quantification lacks objective formulations. The approach in this study discretizes time, converting differential equations into difference equations, where the choice of hyperparameter $\Delta t$ strongly influences CE. As $\Delta t\to 0$, state transitions $x_t\to x_{t+\Delta t}$ exhibit minimal variation, causing the rate of change to vanish. For continuous-time stochastic differential equations, a more principled CE measure is required to account for infinitesimal evolution dynamics.

The third issue is that both SVD-based and EI-based CE quantification methods require training an NN to infer dynamics when the governing equations are unknown. However, NN-based approaches are data-dependent and prone to parameter estimation errors, particularly in capturing interdimensional correlations. In our case, a multivariate Gaussian model approximates both the dynamical function and covariance. Yet, under high noise or limited data, the learned dynamics may deviate from the true system, leading to unreliable CE estimates. For systems with unknown models, it is crucial to develop representations that jointly approximate both the underlying dynamics and noise structure.

Future work will focus on optimizing the existing framework and extending its application to more complex systems. In numerical simulations, machine learning can be leveraged to learn intricate dynamical models, such as Vicsek and Kuramoto, which are analytically intractable. This allows for data-driven CE estimation and the exploration of its relationship with critical states. Additionally, our approach can be applied to real-world datasets, such as meteorological and EEG data, to identify practical problems where CE provides meaningful insights.

\section{Supplementary Material}
The supplementary material provides detailed calculation processes for effective information $EI$ and approximate dynamical reversibility $\Gamma_\alpha$, from the simplest 1-dimensional system to general high-dimensional situations. Especially for $EI$ and $\Gamma_\alpha$ in cases where $A$ is a full-rank or non-square matrix, we provide computational references. After providing a detailed proof process, we use randomly generated scatter plots to show the linear correlation between two indicators $\Gamma_\alpha$ and $\mathcal{J}$, as well as $\Delta\Gamma_\alpha(\epsilon)$ and $\Delta\mathcal{J}^{*}$ based on which. The original coarse-graining strategy and the coarse-graining strategy of the standardized model, which were not mentioned in the article, have also been discussed in the supplementary material. In the end, we add the case of the Markov Gaussian system, providing ideas for combining our framework with integrated information theory, Markov chain with TPM, and even neural data.\\

\textbf{Acknowledgments:} This work is inspired by the insightful discussions held in the "Causal Emergence" series reading groups organized by Swarma Club and the "Swarma-Kaifeng" Workshop. Sincere thanks to Mr. Jiayun Wu, Ms. Ruiyang Feng, Mr. Zhang Zhang, Ms. Ruyi Tao, Mr. Keng-Hou Leong, Ms. Wanting Su, Mr. Aobo Lyu, and Mr. Muyun Mou for their suggestions and support on theory, methods, and writing. We would also like to extend our sincere gratitude to Prof. Yizhuang You, Dr. Peng Cui, Dr. Jing He, Dr. Chaochao Lu, and Dr. Yanbo Zhang for their invaluable contributions and insightful discussions. We express our appreciation for the support from the Save 2050 Programme, a joint initiative of Swarma Club and X-OrderSwarma Research. Additionally, we are grateful for the support received from Swarma Research. We also thank the large language models ChatGPT and Deepseek for their assistance in refining our paper writing. \\

\textbf{Data Availability Statement:} All the codes and data are available at: \url{https://github.com/kilovoltage/SVD-based_CE_Gaussian}.


\bibliographystyle{ieeetr}
\bibliography{jiaoda11}

\newpage

\section*{\centering{Supplementary Information for SVD-based Causal Emergence for Gaussian Iterative Systems}}

\appendix
\renewcommand{\thesection}{Appendix \Alph{section}}
\renewcommand\theequation{\Alph{section}.\arabic{equation}} 
The supplementary material provides detailed calculation processes for effective information $EI$ and approximate dynamical reversibility $\Gamma_\alpha$, from the simplest one-dimensional system to general high-dimensional situations. Especially for $EI$ and $\Gamma_\alpha$ in cases where $A$ is full rank or non-square matrices, we provide computational references. After providing a detailed proof process, we use randomly generated scatter plots to show the linear correlation between two indicators $\Gamma_\alpha$ and $\mathcal{J}$, as well as $\Delta\Gamma_\alpha(\epsilon)$ and $\Delta\mathcal{J}^{*}$ based on which. The original coarse-graining strategy and the coarse-graining strategy of the standardized model, which were not mentioned in the article, have also been discussed in the supplementary material. In the end, we add the case of the Markov Gaussian system, providing ideas for combining our framework with integrated information theory, Markov chains with TPM, and even neural data.
\section{Linear Gaussian iterative systems}\label{Fundamentaltheories-appendix}

In this Section, we provide some supplements to the content about dynamics in the main text. Many real-world complex systems exhibit dynamical behavior constructed by the interplay of deterministic laws and random disturbances as
\begin{equation}
\label{micro-dynamics normal-appendix}
x_{t+1}=f(x_t)+\eta_t, \eta_t\sim\mathcal{N}(0,\Sigma), 
\end{equation}
$x_t\in\mathcal{R}^n, \Sigma\in\mathcal{R}^{n\times n}$, which we called Gaussian iterative systems (GIS) \cite{Dunsmuir1976}. $f(x_t): \mathcal{R}^n\to \mathcal{R}^n$ captures deterministic dynamics, $\eta_t$ denotes random noise. Some continuous dynamic systems can be approximated in this form.

This paper mainly focuses on linear high-dimensional stochastic iteration systems \cite{Hannan1984} to reduce our calculation.  For variable $x_t\in\mathcal{R}^n$ at time $t$ in the stochastic iteration system, the evolution of $x_t$ follows the stochastic iterative equation The GIS we discuss is presented as 
\begin{equation}
\label{micro-dynamics-appendix}
x_{t+1}=a_0+Ax_t+\eta_t, \eta_t\sim\mathcal{N}(0,\Sigma), 
\end{equation}
where, $a_0,x_t\in\mathcal{R}^n, A,\Sigma\in\mathcal{R}^{n\times n}$. 
We specify the ranks of matrices ${\rm rk}(A)={\rm rk}(\Sigma)=n$. ${\rm rk}(\cdot)$ means the rank of a matrix. We can easily observe that the probability of $x_{t+1}$ falling in a region centered with $Ax_{t}$.
\subsection{Linearization of nonlinear dynamics}\label{Nonlineardynamics}

Most of the analysis in the article is based on linear models, but there are also many nonlinear iterative models in reality. We can apply our framework to nonlinear models under certain conditions. Nonlinear stochastic iterative systems like $x_{t+1}=f(x_t)+\eta_t$, $\eta_t\sim\mathcal{N}(0,\Sigma)$, $x_t\in\mathcal{R}^n$, $\Sigma\in\mathcal{R}^{n\times n}$, do not have the same known parameter matrix $A$ as linear stochastic iterative systems. However, when $f:\mathcal{R}^n\to \mathcal{R}^n$ and spatially continuous, the Taylor expansion $f(x)=f(x^{'})+\nabla f(x^{'})(x-x^{'})+o(x-x^{'})$ is very similar to linear function $Ax$ around $x^{'}$. By using the definition that $A$ is the slope, we can obtain an approximate expression of $A_{x^{'}}$ when $x\approx x^{'}$ as
\begin{eqnarray}\label{afx}
    A_{x^{'}}=\frac{f(x)-f(x^{'})}{x-x^{'}}\approx\nabla f(x^{'}).
\end{eqnarray}
So when iterative models are nonlinear, we can replace $A$ with the gradient matrix as $A_{x_t}=\nabla f(x_t)\in \mathcal{R}^{n\times n}$ at $x_t$ to calculate $\Delta\mathcal{J}$ or $\Delta\Gamma_\alpha$. Since causal emergence is related to $x_t$, to determine the CE of the entire system, we can take the average of $\Sigma^{-1}$ and $A^T\Sigma^{-1}A$ in $x_t$'s space $\mathcal{X}$ as
\begin{eqnarray}\label{afxbar}
    \Sigma^{-1}=\frac{1}{|\mathcal{X}|}\displaystyle\int_{\mathcal{X}}\Sigma^{-1}_{x_t}dx_t
\end{eqnarray}
and
\begin{eqnarray}\label{afxbar}
    A^T\Sigma^{-1}A=\frac{1}{|\mathcal{X}|}\displaystyle\int_{\mathcal{X}}\nabla f(x_t)^T\Sigma^{-1}_{x_t}\nabla f(x_t) dx_t
\end{eqnarray}
to calculate the CE of the entire system, $|\mathcal{X}|$ represents the size of $\mathcal{X}$ and $\Sigma_{x_t}$ is the covariance at $x_t$.

For the case of sampling, we can change integration to summation. Since most of the dynamics we consider are stationary Markov processes, we can discuss their overall approximate dynamical reversibility and CE properties. For this purpose, we can approximate the overall properties by averaging the properties of each local point in a certain interval.
For general problems, if $x_{t+1}$ is approximately distributed within the interval $\mathcal{X}$, and $A^T_{x_t}\Sigma^{-1}_{x_t}A_{x_t}$ and $\Sigma^{-1}_{x_t}$ do not fluctuate or diverge significantly, we can first randomly sample and then average the parameter matrix to obtain
\begin{eqnarray}
A^T\Sigma^{-1}A\approx\left<A^T_{x_t}\Sigma^{-1}_{x_t}A_{x_t}\right>_{x_t\in\mathcal{X}}
\end{eqnarray}
and
\begin{eqnarray}
\Sigma^{-1}\approx\left<\Sigma^{-1}_{x_t}\right>_{x_t\in\mathcal{X}}
\end{eqnarray}
to calculate the two global inverse covariance matrices. Both matrices are dependent only on the spatial $\mathcal{X}$ and are independent of time, deriving the same results regardless of the system's evolution. Finally, we can calculate global CE based on the definition of $\Gamma_\alpha$ and empirically verify that it better captures the essence of the system's dynamics. Under the condition that $\mathcal{X}$ is stable and limited, this result is approximately equal to the result of calculating the average value of local $\Gamma_\alpha^{x_t}$ first.
\subsection{Discretization of continuous time}\label{SED}
Another case requiring special handling is time-continuous systems, where Fokker-plank equations describe dynamics \cite{Maoutsa2020} as
\begin{eqnarray}\label{dxdeltax}
    \frac{ \partial p(x_t) }{ \partial t }=-\nabla\cdot \left[f(x_t)p(x_t)-\frac{1}{2}\Sigma\nabla p(x_t)\right],
\end{eqnarray}
or equivalent stochastic differential equations \cite{Evans2012} as 
\begin{eqnarray}\label{dxdeltax}
    \frac{dx_t}{dt}=f(x_t)+\Sigma^\frac{1}{2}\eta_t
\end{eqnarray}
which is also expressed as
$dx_t=f(x_t)dt+\Sigma^\frac{1}{2}dW_t$, where
$x_t\in\mathcal{R}^n$, $\eta_t=dW_t/dt\in\mathcal{R}^n$
and $dW_t\sim\mathcal{N}(0,dtI_n)$. We can approximate the differential of Brownian motion $W_t\in\mathcal{R}^n$ as $dW_t\approx\sqrt{dt}1_n$, $1_n=(1,\cdots,1)_n^T$.

The common method for dealing with this situation is to use finite difference $\Delta x_t$ approximation for differentiation \cite{Ikeda2012} $dx_t$ as 
$\Delta x_t=f(x_t)\Delta t+\Sigma^\frac{1}{2}\Delta W_t\sim\mathcal{N}(f(x_t)\Delta t,\Delta t\Sigma)$. In this way, we can also approximate the stochastic differential equation in the form of a GIS as
\begin{eqnarray}\label{dxdeltax}
    x_{t+\Delta t}\approx x_t+\Delta x_t\sim\mathcal{N}(x_t+f(x_t)\Delta t,\Delta t\Sigma).
\end{eqnarray}
We can calculate CE by setting $A=I_n+\nabla f(x_t)\Delta t$ and covariance $\Delta t\Sigma\in\mathcal{R}^{n\times n}$. Due to the need for discretization in continuous models in machine learning, this method can also provide a reference for the effectiveness of machine learning models. However this method has a big problem in that the $\Delta t$ has a significant impact on the calculation of CE. 

Ao et al. \cite{ao2008emerging,yuan2017sde} presented that through a transformation to a force equation, decomposition of the SDE into three components: potential function $f(x_t)$, dissipative matrix $R(x_t)$, and transverse matrix $T(x_t)$ as
\begin{eqnarray}\label{dxdeltax}
     [R(x_t)+T(x_t)]\frac{dx_t}{dt}=-\nabla f(x_t)+R^\frac{1}{2}(x_t)\eta_t.
\end{eqnarray}
$\nabla f(x_t)$ and $R(x_t)$ happen to be highly correlated in their impact on our matrix $A$ and $\Sigma$. So in the future, we can try to find more effective methods just based on three components to find a CE quantization scheme that does not rely on $\Delta t$.

\subsection{Backward dynamics of GIS}
The third consistent concept is backward dynamics. We can analogy to the approximate dynamical reversibility \cite{Zhang2025} of TPM to derive the approximate dynamical reversibility of GIS determined by Gaussian mapping \cite{Lifshits2013} like 
\begin{equation}
\label{pGaussian}
    p(x_{t+1}|x_t)=\mathcal{N}(Ax_{t}+a_0,\Sigma)\equiv \frac{1}{(2\pi)^\frac{n}{2}\det(\Sigma)^\frac{1}{2}}\exp\left\{-\frac{1}{2}(x_{t+1}-Ax_t-a_0)^T\Sigma^{-1}(x_{t+1}-Ax_t-a_0)\right\},
\end{equation} 
where, $A\in \mathcal{R}^{n\times n}$ and $\Sigma\in\mathcal{R}^{n\times n}$. In a Markov chain, its TPM is a permutation matrix when the dynamics are reversible. This reversibility can be analogized to GIS. If we treat the Gaussian map defined in Equation (\ref{pGaussian}) as a TPM, the state mapping between $x_t$ and $x_{t+1}$ is bijective if the TPM is reversible. Here, we need to define the backward dynamics for GIS in Definition \ref{S_ASA}. 

\begin{definition}  
\label{S_ASA}
(Backward dynamics): For a GIS $x_{t+1}=a_0+Ax_t+\eta_t,\eta_t\sim\mathcal{N}(0,\Sigma)$, also presented as $p(x_{t+1}|x_t)=\mathcal{N}(Ax_t+a_0,\Sigma)$, where $x_t\in\mathcal{R}^n$ to $x_{t+1}\in\mathcal{R}^n$, $A\in \mathcal{R}^{n\times n}$ and $\Sigma\in\mathcal{R}^{n\times n}$, it has a unique backward dynamics 
\begin{eqnarray}\label{BD}
x_t=A^\dagger x_{t+1}-A^\dagger a_0-A^\dagger\eta_t,A^\dagger\eta_t\sim\mathcal{N}(0,A^\dagger\Sigma(A^\dagger)^T).
\end{eqnarray}
$A^\dagger\Sigma(A^\dagger)^T$ is the covariance matrix of backward dynamics as $A^\dagger$ is the Moore-Penrose generalized inverse matrix of $A$.
\end{definition}

In continuous space, if the state mapping between two consecutive time points $x_t$ and $x_{t+1}$ is bijective, $r(A)=n$, $\Sigma=0$ and $A^\dagger\Sigma(A^\dagger)^T=0$ needs to be satisfied, then the probability distribution would be a Dirac distribution \cite{Baldiotti2015}, i.e., all the probability mass is concentrated at a single point as
\begin{eqnarray}\label{ED}
		\begin{aligned}
		p(x_{t+1}|x_t)=\delta(x_{t+1}-Ax_t-a_0)=\begin{cases}
		     \infty, &x_{t+1}=Ax_{t}+a_0  \\
		     0, &x_{t+1}\neq Ax_{t}+a_0
		\end{cases}
   \end{aligned}
\end{eqnarray}
in which $\int_{-\infty}^{+\infty}p(x_{t+1}|x_t)dx_{t+1}^n=1$, $a_0,x_{t},x_{t+1}\in\mathcal{R}^n$, $A,\Sigma\in\mathcal{R}^{n\times n}$. Backward dynamics are the same as $p(x_t|x_{t+1})=\delta(x_t-A^\dagger x_{t+1}+A^\dagger a_0)$. Then, both mappings can be directly written as linear functions $x_{t+1}=Ax_t+a_0$ and $x_t=A^{\dagger}x_{t+1}-A^{\dagger}a_0$, where $A^\dagger=A^{-1}$ in this case.

When $A$ is irreversible or $A^\dagger\Sigma(A^\dagger)^T,\Sigma>0$, the closer $p(x_{t+1}|x_t)$ and $p(x_t|x_{t+1})$ to Dirac distributions and $A$ to a full-rank matrix, the stronger $p(x_{t+1}|x_t)$'s reversibility. Therefore, when the covariance of both forward and backward dynamics satisfy $\Sigma\to 0$, $A^\dagger\Sigma(A^\dagger)^T\to 0$ and $A$ is close to a full-rank matrix, $x_{t+1},x_t$ are close to bijective and $x_{t+1}=a_0+Ax_t+\eta_t\to Ax_t+a_0$, $x_{t}=A^\dagger x_{t+1}-A^\dagger a_0-A^\dagger\eta_t\to A^\dagger x_t-A^\dagger a_0$ are approximate reversible dynamics. From this, we know that we need to find an indicator that includes both $\Sigma$ and $A^\dagger\Sigma(A^\dagger)^T$ to quantify the approximate reversibility of $p(x_{t+1}|x_t)$. In the next subsection, we provide this indicator $\Gamma_\alpha$.

It is not difficult to find that for Gaussian distributions, $\mathcal{N}(0,\sigma^2)$ can be defined for cases where the variance is equal to $\sigma=0$ or $\sigma=\infty$, rather than simply a singularity. $\sigma=0$ means that the Gaussian density function is a Dirac distribution, and $\sigma=\infty$ means the probability density function can be seen as a uniform distribution $\mathcal{U}(-\infty,+\infty)$ with a probability density of 0 everywhere. For this, we can specifically define an inverse matrix for the covariance matrix.

\begin{definition}  
\label{sigma-1}
If $\Sigma$ is a covariance matrix with a normal distribution and $\Sigma=\tilde{V} \Lambda \tilde{V}^T$, is its eigenvalue (singular value) decomposition $\Lambda={\rm diag}(\lambda_1,\cdots, \lambda_n)$, $\tilde{V}=(\tilde{v}_1,...,\tilde{v}_n)$, note that singular values here can be set to 0 or infinity as $\infty\geq\lambda_1\geq\cdots\geq\lambda_n\geq0$. And we define $\kappa_i=1/\lambda_{n-i+1}$, if $\lambda_{n-i+1}=0$ then $\kappa_i=\infty$, and if $\lambda_{n-i+1}=\infty$ then $\kappa_i=0$. We can obtain the singular value matrix of $\Sigma^{-1}$ as $K={\rm diag}(\kappa_1,\cdots, \kappa_n)$, so
\begin{eqnarray}\label{sigma-1eq}
		\begin{aligned}
		 \Sigma^{-1} = VKV^T
   \end{aligned}
\end{eqnarray}
and $V=(v_1,...,v_n)$ as $v_i=\tilde{v}_{n-i+1}$. So, even if the covariance matrix or inverse covariance matrix is not full-rank, we can still perform inverse operations.

\end{definition}

The results obtained in this section are also applicable to SDE and NN, but require special processing.

\section{Causality}
Here, we briefly introduce the theory of causality provided by Hoel et al., which is based on the information-theoretic measure known as Effective Information (EI). Concepts about EI were initially introduced by Tononi et al. in \cite{tononi2003measuring} and later used by Hoel et al. to quantify CE in \cite{Hoel2013, Hoel2016, Hoel2017, Klein2020, Chvykov2020, Comolatti2022,Hoel2024}. For a given transitional probability $P(x_{t+1}|x_t)$,
\begin{equation}
\label{original_EI}
    EI=I(\tilde{x}_{t+1},x_t|do(x_t\sim \mathcal{U}(X))),
\end{equation}
where $x_t, \tilde{x}_{t+1}, \forall t\geq 0$ represent the state variables defined on $\mathcal{S}$ at time step $t$ and $t+1$, respectively. $\mathcal{U}(X)$ means the uniform distribution or maximum entropy distribution on $X$. The intervention is formalized using Judea Pearl's theory \cite{Pearl2009} of causality, particularly through $do(\cdot)$ operations, which means artificially defining the probability distribution space of a variable. Consequently, the distribution of $x_{t+1}$ will be indirectly altered by the intervention on $x_t$ through the causal mechanism $P(x_{t+1}|x_t)$. Therefore, $\tilde{x}_{t+1}$ denotes the random variable representing $x_{t+1}$ following the intervention on $x_t$. To be noticed, Equation (\ref{original_EI}) is only an operational definition and the intervention of $do(\cdot)$ operator is just an imaginary operation to calculate $EI$ but has no real physical meanings. In other words, $do(x_t\sim \mathcal{U}(X))$ is an operator to intervene in the input variable $x_t$ to follow a uniform distribution in its domain $X$ and keep the causal mechanism $P(x_{t+1}|x_t)$ unchanged.

We will introduce the calculation of effective information in different situations in the following subsections.

\subsection{One-dimensional variables}
The simplest way to express it is when the variable is in a one-dimensional space as $A=a,\Sigma=\sigma^2$ and $a,\sigma,x_t\in\mathcal{R}^1$. It is particularly emphasized that the effective information can be expressed as
 \begin{eqnarray}\label{EIDeterminismnon-degeneracy}
    \begin{aligned}
    	EI(a,\sigma^2)
     &=\ln\frac{|a|L}{\sqrt{2\pi e}\sigma},
     \end{aligned}
    \end{eqnarray}
as $do(x_t\sim \mathcal{U}(-L/2, L/2)$, which is the simplest form of effective information for Gaussian iterative systems. 

$EI$ can be decomposed into two terms, determinism $-\left<H(p(x_{t+1}|x_t))\right>$ and non-degeneracy $H(E_D(x_{t+1}))$ and $EI=-\left<H(p(x_{t+1}|x_t))\right>+H(E_D(x_{t+1}))$. Determinism 
 \begin{eqnarray}\label{EIDeterminismnon-degeneracy}
    \begin{aligned}
    	EI_{1}(a,\sigma^2)
     &=\ln\frac{1}{\sqrt{2\pi e}\sigma},
     \end{aligned}
    \end{eqnarray}
and non-degeneracy
 \begin{eqnarray}\label{EIDeterminismnon-degeneracy}
    \begin{aligned}
    	EI_{2}(a,\sigma^2)
     &=\ln |a|L.
     \end{aligned}
    \end{eqnarray}
From the formula, we can directly see that the effective information is positively correlated with the parameter $a$ and negatively correlated with the covariance $\sigma$.
\subsection{Multidimensional variables}

By referring to the one-dimensional calculation method, effective information can be extended to multiple dimensions. Considering the possibility that $A$ may not be a full rank matrix as $r\equiv r(A^{T}\Sigma^{-1}A)\leq r(A)\leq n$.
\begin{eqnarray}\label{ASAEI}
		EI=\ln\left(\frac{|{\rm pdet}(A^T\Sigma^{-1}A)|^\frac{1}{2}L^n}{(2\pi e)^\frac{n}{2}}\right).
	\end{eqnarray}
  \begin{definition}
	\label{thm.Effective-information}
	(Dimension averaged effective information for GIS):
	For GIS like $x_{t+1}=a_0+Ax_t+\eta_t, \eta_t\sim\mathcal{N}(0,\Sigma)$, $a_0,x_t\in\mathcal{R}^n, A,\Sigma\in\mathcal{R}^{n\times n}$ and $r\equiv r(A^{T}\Sigma^{-1}A)\leq r(A)\leq n$, the \textbf{dimension averaged effective information} of the dynamical system is calculated as
\begin{eqnarray}\label{JGaussain-appendix}
		\mathcal{J}=\frac{EI}{n}=\ln\left(\frac{{\rm pdet}(A^T\Sigma^{-1}A)^\frac{1}{2n}L}{\sqrt{2\pi e}}\right).
	\end{eqnarray}
    ${\rm pdet}(\cdot)$ represents the generalized determinant value corresponding to matrix $\cdot$ with rank $r(\cdot)$ and singular values $s_i(\cdot),i=1,\dots,r_{(\cdot)}$, ${\rm pdet}(\cdot)=s_1(\cdot)s_2(\cdot)\dots s_{r(\cdot)}(\cdot)$. $L$ represents the size of the probability space with a uniform distribution determined by the do-operator $do(\cdot)$, which is an intervention that enforces $x_t\sim \mathcal{U}([-L/2,L/2]^n)$, where $\mathcal{U}$ represents uniform distribution.
\end{definition}
\begin{proof}
For the linear stochastic iteration system like $y=Ax+\eta,\eta\sim\mathcal{N}(0,\Sigma)$, $y\in \mathcal{R}^m$ follows a normal distribution about $x\in \mathcal{R}^n$, so $y\sim \mathcal{N}(Ax,\Sigma)$, $A\in\mathcal{R}^{m\times n}$, $\Sigma\in\mathcal{R}^{m\times m}$,
\begin{eqnarray}\label{p(xtp1|doxt)}
		p(y|x)=\displaystyle\frac{1}{\det(\Sigma)^\frac{1}{2}(2\pi)^\frac{m}{2}}\exp\left\{-\frac{1}{2}(y-Ax)^{T}\Sigma^{-1}(y-Ax)\right\}
	\end{eqnarray}
	Following the calculation method of causal geometry, assuming $\eta_t$ is small, $do(x\sim U([-L/2,L/2]^n))\equiv do(x\sim U)$, then $p(x)=1/L^n, L>0$, we can calculate the effect distribution
	\begin{eqnarray}\label{ED}
		\begin{aligned}	E_D(y)&=\displaystyle\int_{\mathcal{X}} p(y|x)p(x)d^n x\\
			&=\displaystyle\int_{\mathcal{X}}  \displaystyle\frac{1}{\det(\Sigma)^\frac{1}{2}(2\pi)^\frac{m}{2}}\exp\left\{-\frac{1}{2}(y-Ax)^{T}\Sigma^{-1}(y-Ax)\right\}\frac{1}{L^n}d^nx\\      &=\displaystyle\int_{\mathcal{X}}\displaystyle\frac{1}{\det(\Sigma)^\frac{1}{2}(2\pi)^\frac{m}{2}}\exp\left\{-\frac{1}{2}(x-A^\dagger y)^{T}A^T\Sigma^{-1}A(x-A^\dagger y)\right\}\frac{1}{L^n}d^nx\\
			&=\frac{(2\pi)^\frac{n-m}{2}}{|{\rm pdet}(A^T\Sigma^{-1}A)|^\frac{1}{2}\det(\Sigma)^\frac{1}{2}}\displaystyle\int_{\mathcal{X}}\displaystyle\frac{1}{|{\rm pdet}(A^{'}\Sigma^{-1} A)|^{-\frac{1}{2}}(2\pi)^\frac{n}{2}}\exp\left\{-\frac{1}{2}(x-A^\dagger y)^{T}A^T\Sigma^{-1}A(x-A^\dagger y)\right\}\frac{1}{L^n}d^nx\\
   &\approx \frac{(2\pi)^\frac{n-m}{2}}{|{\rm pdet}(A^T\Sigma^{-1}A)|^\frac{1}{2}\det(\Sigma)^\frac{1}{2}L^n}
		\end{aligned}
	\end{eqnarray}
	According to the properties of information entropy, determinism information	\begin{eqnarray}\label{Determinism}
		\begin{aligned}
			-\left<H(p(y|x))\right>&=-\left<-\int_{\mathcal{X}}p(y|x)\ln(p(y|x))d^my\right>\\
			&=\displaystyle\int_{\mathcal{X}}\frac{d^n x}{L^n}\displaystyle\int_{\mathcal{X}}  \displaystyle\frac{1}{\det(\Sigma)^\frac{1}{2}(2\pi)^\frac{m}{2}}\exp\left\{-\frac{1}{2}(x_{t+1}-Ax_t)^{T}\Sigma^{-1}(x_{t+1}-Ax_t)\right\}\\&\left[-\ln\left(\det(\Sigma)^\frac{1}{2}(2\pi)^\frac{m}{2}\right)-\frac{1}{2}(y-Ax)^{T}\Sigma^{-1}(y-Ax)\right]d^my\\
			&=\left[-\ln\left(\det(\Sigma)^\frac{1}{2}(2\pi)^\frac{m}{2}\right)-\frac{m}{2}\right]\\
			&=\ln\frac{1}{\det(\Sigma)^\frac{1}{2}(2\pi e)^\frac{m}{2}},
		\end{aligned}
	\end{eqnarray}
	degeneracy information	\begin{eqnarray}\label{Degeneracy}
		\begin{aligned}	-H(E_D(y))&=\int_{\mathcal{X}}E_D(y)\ln(E_D(y))d^my\\
			&=\displaystyle\int_{\mathcal{X}}E_D(y)\ln\left(\frac{(2\pi)^\frac{n-m}{2}}{|{\rm pdet}(A^T\Sigma^{-1}A)|^\frac{1}{2}\det(\Sigma)^\frac{1}{2}L^n}\right)\\&\approx\ln\left(\frac{(2\pi)^\frac{n-m}{2}}{|{\rm pdet}(A^T\Sigma^{-1}A)|^\frac{1}{2}\det(\Sigma)^\frac{1}{2}L^n}\right).
		\end{aligned}
	\end{eqnarray}
	It's obvious that	\begin{eqnarray}\label{DegeneracyDeterminismEI}
		EI=-\left<H(p(x_{t+1}|x_t))\right>-(-H(E_D(x_{t+1})))=\ln\left(\frac{|{\rm pdet}(A^T\Sigma^{-1}A)|^\frac{1}{2}L^n}{(2\pi)^\frac{n}{2}e^\frac{m}{2}}\right).
	\end{eqnarray}
When $m=n$, we can directly calculate dimension averaged effective information for GIS in Eq.(\ref{JGaussain-appendix}).
\end{proof}
 According to the properties of information entropy, $\mathcal{J}$ can be decomposed into two terms as $\mathcal{J}=\mathcal{J}_1+\mathcal{J}_2$, determinism 
\begin{eqnarray}\label{DegeneracyDeterminismEI1}
		\mathcal{J}_1=-\ln\left(\sqrt{2\pi e}\det(\Sigma)^\frac{1}{2n}\right)
	\end{eqnarray}
measures how the current state $x_t$ can deterministically (sufficiently) influence the state $x_{t+1}$ in the future, non-degeneracy
\begin{eqnarray}\label{DegeneracyDeterminismEI2}
		\mathcal{J}_2=\ln\left({\rm pdet}(A^T\Sigma^{-1} A)^\frac{1}{2n}\det(\Sigma)^\frac{1}{2n}L\right),
	\end{eqnarray}
measures how exactly we can infer (necessarily) the state $x_{t-1}$ in the past from the current state $x_t$.

Meanwhile, if matrix $A$ is not of rank, we can directly use the dimension of matrix rank for averaging on $EI$. $EI$ on a special macro-state dynamics can be obtained when the micro-dynamics matrix $A$ is not full-rank in the following corollary.

\begin{corollary}
For the linear stochastic iteration systems like Equation (\ref{micro-dynamics-appendix}), maximum effective information of the system after coarse-graining  $y_t=\phi(x_t)=Wx_t$, $W\in \mathcal{R}^{r\times n}$, is calculated as 
\begin{eqnarray}\label{DegeneracyDeterminismEI}
		\mathcal{J}_R=\frac{EI_R}{r}=\ln\left(\frac{|{\rm pdet}(A^T\Sigma^{-1}A)|^\frac{1}{2r}L}{(2\pi e)}\right).
	\end{eqnarray}
\end{corollary}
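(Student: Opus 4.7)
The plan is to realize the claimed $\mathcal{J}_R$ by constructing an explicit coarse-graining that projects onto the $r$-dimensional "active" subspace of the dynamics, then verifying that the pseudo-determinant of the macro-level forward--backward matrix equals that of the micro-level one. The starting observation is that $A^{T}\Sigma^{-1}A$ has rank $r$, and its kernel coincides with $\ker A$ (since $x^{T}A^{T}\Sigma^{-1}Ax=0$ forces $\Sigma^{-1/2}Ax=0$, hence $Ax=0$). Thus all causal information from $x_t$ to $x_{t+1}$ lives on an $r$-dimensional subspace, and any coarse-graining finer than this simply pads with null directions, while any coarser coarse-graining destroys signal. This is what makes $r$ the natural target dimension.

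First I would perform the SVD $A^{T}\Sigma^{-1}A = USU^{T}$ with $S = \mathrm{diag}(s_1,\ldots,s_r,0,\ldots,0)$, and define $W = U_r^{T}\in\mathbb{R}^{r\times n}$, where $U_r$ collects the first $r$ columns of $U$. Because $\ker W = \mathrm{span}\{u_{r+1},\ldots,u_n\}\subseteq\ker A$, the quantity $WAx_t$ depends on $x_t$ only through $y_t = Wx_t$, so there exists $B\in\mathbb{R}^{r\times r}$ with $WA = BW$; explicitly $B = WAW^{\dagger}$. The macro system is then a linear GIS
\begin{equation}
y_{t+1} = Wa_0 + By_t + \zeta_t, \qquad \zeta_t \sim \mathcal{N}(0, W\Sigma W^{T}),
\end{equation}
and $do(x_t\sim\mathcal{U}([-L/2,L/2]^n))$ induces (up to the volume factor handled by the dimension-averaged normalization) $do(y_t\sim\mathcal{U}([-L/2,L/2]^r))$ on its image.

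Next I would apply the multidimensional formula (\ref{ASAEI}) from Appendix B.2 to this macro system, obtaining $EI_R = \ln\bigl(\mathrm{pdet}(B^{T}(W\Sigma W^{T})^{-1}B)^{1/2} L^{r}/(2\pi e)^{r/2}\bigr)$. The crux is to show
\begin{equation}
\mathrm{pdet}\bigl(B^{T}(W\Sigma W^{T})^{-1}B\bigr) \;=\; \mathrm{pdet}(A^{T}\Sigma^{-1}A).
\end{equation}
Using $WA = BW$, substitute $B = WAW^{\dagger}$ and exploit that on the active subspace the map $W$ acts as an isometry (its rows are orthonormal singular vectors), so $W\Sigma W^{T}$ plays the role of the effective forward covariance and $W^{T}W$ is the orthogonal projector onto $\mathrm{range}(U_r)$. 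Combining this with the cyclic and restriction properties of $\mathrm{pdet}$ for symmetric matrices of rank $r$, one checks that the product of nonzero singular values is preserved. Dividing by $r$ then yields exactly $\mathcal{J}_R = \ln\bigl(\mathrm{pdet}(A^{T}\Sigma^{-1}A)^{1/(2r)} L/\sqrt{2\pi e}\bigr)$ (matching the stated formula up to what appears to be a typographical $(2\pi e)$ vs.\ $\sqrt{2\pi e}$ in the corollary).

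The last step, and the main obstacle, is the \emph{maximality} claim implicit in "$\mathcal{J}_R$ is the maximum effective information." I would argue this in two pieces. For any $W'\in\mathbb{R}^{r\times n}$ whose image misses a direction with $s_i\neq 0$, the induced $B'^{T}(W'\Sigma W'^{T})^{-1}B'$ loses that singular value, so its pseudo-determinant is strictly smaller; this follows from an interlacing/monotonicity argument on $\mathrm{pdet}$ under projection of positive semidefinite matrices. Conversely, for any $W'$ with $r' > r$ rows, at least $r'-r$ singular values are zero and the dimension-averaged quantity $EI_{R'}/r'$ is dragged down. Hence restricting to exactly the active $r$-dimensional subspace is optimal. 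The delicate point I expect to spend the most time on is making the pseudo-determinant identity rigorous when $W$ restricts to a non-orthogonal subspace — one must handle the interaction between the projection geometry and the metric induced by $\Sigma^{-1}$, which is precisely why the SVD of $A^{T}\Sigma^{-1}A$ (rather than of $A$ alone) is the correct object to diagonalize.
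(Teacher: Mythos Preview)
Your constructive approach has a genuine gap at the step you yourself flagged as ``the crux'': the identity
\[
\mathrm{pdet}\bigl(B^{T}(W\Sigma W^{T})^{-1}B\bigr)=\mathrm{pdet}(A^{T}\Sigma^{-1}A)
\]
is \emph{false} for the SVD-based $W=U_r^{T}$ unless $A$ and $\Sigma$ are simultaneously diagonalizable. Take $n=2$, $r=1$, $A=\mathrm{diag}(a,0)$ and $\Sigma=\begin{pmatrix}\sigma_1^2&\rho\\ \rho&\sigma_2^2\end{pmatrix}$ with $\rho\neq 0$. Then $A^{T}\Sigma^{-1}A=\dfrac{a^{2}\sigma_2^{2}}{\det\Sigma}\,e_1e_1^{T}$, so $U_r=e_1$, $W=(1,0)$, $B=a$, $W\Sigma W^{T}=\sigma_1^{2}$, and $B^{2}/\Sigma_M=a^{2}/\sigma_1^{2}$, whereas $\mathrm{pdet}(A^{T}\Sigma^{-1}A)=a^{2}\sigma_2^{2}/(\sigma_1^{2}\sigma_2^{2}-\rho^{2})$. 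These differ whenever $\rho\neq 0$, and in fact the macro quantity is strictly \emph{smaller}. So your plan to prove equality by ``cyclic and restriction properties of $\mathrm{pdet}$'' cannot succeed: the interaction with the $\Sigma^{-1}$-metric that you worried about is fatal, not merely delicate.

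The paper does not give a standalone proof of this corollary; it is stated as an immediate consequence of replacing the averaging dimension $n$ by the rank $r$ in the EI formula. The substantive justification appears only later, in the proof of Eq.~(\ref{DeltaJstar}), and proceeds by a chain of \emph{inequalities} (Jensen on $\det(W\Sigma W^{\dagger})^{-1/2}$, then two matrix-theoretic bounds), yielding $\mathcal{J}_M\le \mathcal{J}_R$ for every orthonormal $W$, with equality only when $A$ and $\Sigma$ share eigenvectors and $W$ is built from them. In other words, $\mathcal{J}_R$ is an upper bound that is attained only under an extra structural hypothesis, not a value realized by an explicit $W$ for arbitrary $(A,\Sigma)$. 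Your proposal should therefore abandon the attempt to exhibit a $W$ achieving the bound and instead reproduce (or invoke) that inequality chain with $k=r$.
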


\section{Dynamical reversibility}
Another important property of dynamics that people are concerned about is dynamical reversibility. It is generally believed that the dynamics of the microscopic physical world (such as Newton's second law or the Schrödinger equation in quantum mechanics) are dynamically reversible, while the second law of thermodynamics in the macroscopic world stems from the irreversibility of macroscopic dynamic processes. 

\subsection{Approximate dynamical reversibility of TPM}
Zhang's previous work \cite{Zhang2025} has provided an indicator of CE for Markov chains called approximate dynamical reversibility, which describes the proximity of a transitional probability matrix(TPM) to a permutation matrix. This new framework for CE is distinctive for its independence from coarse-grained strategies. The calculation of the approximate dynamical reversibility of TPM $P$ requires solving its singular values as $P\cdot P^T=VZV^T$ and $Z ={\rm diag} (\zeta_1,\zeta_2,\cdots,\zeta_N)$ as $\zeta_1\geq \zeta_2\geq\cdots\geq\zeta_N\geq 0$, then the $\alpha$-ordered approximate dynamical reversibility of $P$ is defined as:
    \begin{equation}
        \label{eqn.formal_def_gamma}
        \Gamma_{\alpha} \equiv \sum_i^N \zeta_i^{\alpha},
    \end{equation}
    where $\alpha\in (0,2)$ is a parameter. The maximum $\Gamma_{\alpha}$ can be achieved if $P$ is a permutation matrix. Therefore, $\Gamma_{\alpha}$ can be used as an index of the reversibility of TPM and an approximate relationship exists as\cite{Zhang2025} 
\begin{equation}
\label{eq:EI_Gamma_Approx}
    EI\sim \log\Gamma_{\alpha}.
\end{equation}
However, this approximate equivalence can only be confirmed through numerical experiments, and there is currently no clear mathematical analytical proof.

\subsection{Approximate dynamical reversibility of GIS in $\mathcal{R}^1$}
In Eq.(\ref{micro-dynamics-appendix}), when $\eta_t=0$ and $f(x_t)=a_0+Ax_t$ is a reversible bijection mapping on $\mathcal{R}^{n}$, we assume that the dynamics of GIS described by Eq.(\ref{micro-dynamics-appendix}) is strictly reversible as $x_{t+1}=a_0+Ax_t$, and the corresponding backward dynamics of Eq.(\ref{micro-dynamics-appendix}) can be directly established as $x_{t}=A^{-1}(x_{t+1}-a_0)$. However, real-world dynamical systems around us are not so simple and perfect. 

Similar to the framework in Zhang's article \cite{Zhang2025}, to calculate approximate reversibility, it is necessary to obtain the singular value spectrum of TPM, and the same applies to GIS. For GIS, we need to treat $p(x_{t+1}|x_t)$ as a TPM with continuous states and apply operations from functional analysis \cite{Lax2014} to calculate its singular value spectrum. Firstly, we calculate the Gaussian kernel which corresponds to $P\cdot P^T$ of TPM in \cite{Zhang2025} when the probability space is continuous as
\begin{eqnarray}\label{ED}
		\begin{aligned}
  K(\textbf{x},\textbf{y})\equiv\mathcal{K}(\textbf{x}-\textbf{y})=(2\pi)^{-\frac{n}{2}}\det(2\Sigma)^{-\frac{1}{2}}\exp\left\{-\frac{1}{4}(\textbf{x}-\textbf{y})^T(A^T\Sigma^{-1}A)(\textbf{x}-\textbf{y})\right\}.
   \end{aligned}
\end{eqnarray}
Suppose $\zeta$ is a singular value of $p(x_{t+1}|x_t)$, then $\zeta^2$ is the eigenvalue of $K(\textbf{x},\textbf{y})=\mathcal{K}(\textbf{x}-\textbf{y})$ and $\psi(\textbf{x})$ is the eigenfunction corresponding to the eigenvalue $\zeta^2$. By analogy with the theorem of the spectrum of stochastic integral operators in functional analysis \cite{Lax2014}, we can derive $\zeta^2$ by $(\textbf{K}\psi)(\textbf{x})=\int_{-\infty}^\infty K(\textbf{x},\textbf{y})\psi(\textbf{y})d\textbf{y}=\zeta^2 \psi(\textbf{x})$.
Based on the properties of the Fourier transform of the convolution function in the integral transform, we can consider the above integral in $(\textbf{K}\psi)(\textbf{x})$ as a convolution of $\mathcal{K}(\textbf{y})$ and $\psi(\textbf{y})$. Then the eigenvalue spectrum $\zeta^2(\omega)$ in the frequency space is obtained through Fourier transform \cite{Bracewell1989} as
$\hat{\mathcal{K}}(\omega)\hat{\psi}(\omega)=\mathcal{F}\left\{\mathcal{K}(\mathbf{y})\right\}\mathcal{F}\left\{\psi(\mathbf{y})\right\}=\zeta^2\hat{\psi}(\omega).$ From this, we can also obtain the singular value spectrum $\zeta(\omega)=\sqrt{\hat{\mathcal{K}}(\omega)}$ of $p(x_{t+1}|x_t)$ in the frequency space as $\omega\in\mathcal{R}^n$.

\begin{definition}\label{dfn.Gamma_alpha}
(Approximate reversibility): Suppose the singular value spectrum of GIS $p(x_{t+1}|x_t)=\mathcal{N}(Ax_t+a_0,\Sigma)$ is $\zeta(\omega)$, $a_0,\omega,x_t,x_{t+1}\in\mathcal{R}^n$, $A,\Sigma\in\mathcal{R}^{n\times n}$, then the $\alpha$-ordered approximate dynamical reversibility of $p(x_{t+1}|x_t)$ is defined as
\begin{eqnarray}\label{ED}
		\begin{aligned}
		\Gamma_\alpha&\equiv\int_{-\infty}^\infty\zeta^\alpha(\omega)d\omega
   \end{aligned}
\end{eqnarray}
where $\alpha\in(0,2)$.
\end{definition}
According to Definition \ref{dfn.Gamma_alpha}, the approximate dynamical reversibility $\Gamma_\alpha$ of GIS $x_{t+1}=a_0+Ax_t+\eta_t,\eta_t\sim\mathcal{N}(0,\Sigma)$ is given in following theorems.  We can start the deduction process from situations of GIS in $\mathcal{R}^1$.
\begin{lem}
For Gaussian kernels like
\begin{eqnarray}\label{ED}
		\begin{aligned}
			K(\textbf{x},\textbf{y})=\frac{1}{\sqrt{2\pi}\sigma}\exp\left\{-\frac{(\textbf{y}-\textbf{x})^2}{2\sigma^2}\right\}
   \end{aligned}
\end{eqnarray}
Based on the structure of a Gaussian kernel $K(\textbf{x},\textbf{y})=\mathcal{K}(\textbf{x}-\textbf{y})$, we can perform a convolution operation on the integral, which allows us to perform a Fourier transform on the integral as directly
\begin{eqnarray}\label{ED}
		\begin{aligned}
			\hat{\mathcal{K}}(\omega)\hat{\psi}(\omega)=\lambda\hat{\phi}(\omega)
   \end{aligned}
\end{eqnarray}
In frequency space, the eigenvalue spectrum can be directly expressed as
\begin{eqnarray}\label{ED}
		\begin{aligned}
			\lambda(\omega)=\hat{K}(\omega)=\exp\left\{-\frac{\omega^2\sigma^2}{2}\right\}.
   \end{aligned}
\end{eqnarray}
\end{lem}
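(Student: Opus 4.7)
The plan is to exploit the translation-invariant structure of the kernel and diagonalize the integral operator by passing to the Fourier side. Since $K(\mathbf{x},\mathbf{y})=\mathcal{K}(\mathbf{x}-\mathbf{y})$, the action of the operator on any square-integrable $\psi$ is a convolution, namely $(\mathbf{K}\psi)(\mathbf{x})=\int_{-\infty}^{\infty}\mathcal{K}(\mathbf{x}-\mathbf{y})\psi(\mathbf{y})\,d\mathbf{y}=(\mathcal{K}\ast\psi)(\mathbf{x})$. Applying the Fourier transform to the eigenvalue equation $(\mathbf{K}\psi)(\mathbf{x})=\lambda\psi(\mathbf{x})$ and invoking the convolution theorem $\mathcal{F}\{\mathcal{K}\ast\psi\}=\hat{\mathcal{K}}\cdot\hat{\psi}$ converts it into the pointwise multiplicative relation $\hat{\mathcal{K}}(\omega)\hat{\psi}(\omega)=\lambda\hat{\psi}(\omega)$, which identifies $\lambda$ with the symbol $\hat{\mathcal{K}}(\omega)$.

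The remaining step is a direct Gaussian Fourier integral. Writing
\[
\hat{\mathcal{K}}(\omega)=\int_{-\infty}^{\infty}\frac{1}{\sqrt{2\pi}\,\sigma}\exp\!\left(-\frac{z^{2}}{2\sigma^{2}}\right)e^{-i\omega z}\,dz,
\]
I would complete the square in the exponent via $-\frac{z^{2}}{2\sigma^{2}}-i\omega z=-\frac{1}{2\sigma^{2}}(z+i\omega\sigma^{2})^{2}-\frac{\omega^{2}\sigma^{2}}{2}$, shift the contour from $\mathcal{R}$ to $\mathcal{R}-i\omega\sigma^{2}$ (harmless thanks to the Gaussian decay and entirety of the integrand), and apply the standard normalization $\int_{-\infty}^{\infty}\frac{1}{\sqrt{2\pi}\,\sigma}e^{-u^{2}/(2\sigma^{2})}\,du=1$ to conclude $\hat{\mathcal{K}}(\omega)=\exp(-\omega^{2}\sigma^{2}/2)$. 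Hence $\lambda(\omega)=\hat{\mathcal{K}}(\omega)=\exp(-\omega^{2}\sigma^{2}/2)$, as claimed.

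The main obstacle is not computational but interpretive: the would-be eigenfunctions $e^{i\omega x}$ do not lie in $L^{2}(\mathcal{R})$, so $\lambda(\omega)$ is strictly a continuous spectrum (the multiplier symbol of the convolution operator) rather than a family of classical $L^{2}$-eigenvalues. I would address this by framing the argument in the sense of tempered distributions, or equivalently through the spectral theorem: after Fourier conjugation the operator $\mathbf{K}$ becomes multiplication by $\hat{\mathcal{K}}(\omega)$, whose essential range is precisely $\{\exp(-\omega^{2}\sigma^{2}/2):\omega\in\mathcal{R}\}$. This generalized spectrum is the correct object for the subsequent use of $\zeta(\omega)=\sqrt{\hat{\mathcal{K}}(\omega)}$ as the singular value spectrum of $p(x_{t+1}|x_{t})$ and for the integral formulation $\Gamma_{\alpha}=\int\zeta^{\alpha}(\omega)\,d\omega$.
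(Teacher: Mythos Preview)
Your proposal is correct and follows essentially the same approach as the paper: recognize the integral operator as a convolution, apply the Fourier transform with the convolution theorem to obtain $\hat{\mathcal{K}}(\omega)\hat{\psi}(\omega)=\lambda\hat{\psi}(\omega)$, and then quote the Fourier transform of the Gaussian density to read off $\lambda(\omega)=\exp(-\omega^{2}\sigma^{2}/2)$. The paper merely asserts the last step as ``the characteristic function of a Gaussian,'' whereas you spell out the completion of the square and add the (valid and useful) caveat about $e^{i\omega x}\notin L^{2}(\mathcal{R})$ and the generalized/continuous spectrum interpretation; this extra care is welcome but does not change the route.
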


\begin{proof}
We can perform a Fourier transform on the integral as
\begin{eqnarray}\label{ED}
		\begin{aligned}
			\hat{K}(\omega)\hat{\psi}(\omega)&=\mathcal{F}\left\{(\textbf{K}\psi)(\textbf{x})\right\}=\mathcal{F}\left\{\int_{-\infty}^\infty K(\textbf{x},\textbf{y})\psi(\textbf{y})d\textbf{y}\right\}=\mathcal{F}\left\{\int_{-\infty}^\infty \frac{1}{\sqrt{2\pi}\sigma}\exp\left\{-\frac{(\textbf{x}-\textbf{y})^2}{2\sigma^2}\right\}\psi(\textbf{y})d\textbf{y}\right\}\\
           &=\mathcal{F}\left\{\frac{1}{\sqrt{2\pi}\sigma}\exp\left\{-\frac{\textbf{y}^2}{2\sigma^2}\right\}\right\}\mathcal{F}\left\{\psi(\textbf{y})\right\}=\exp\left\{-\frac{\omega^2\sigma^2}{2}\right\}\hat{\psi}(\omega).
   \end{aligned}
\end{eqnarray} 
This calculation process is actually a Fourier transform of Gaussian functions, and it is almost the same as the process of finding the characteristic function for Gaussian distribution.
\end{proof}
Based on the solution of the eigenvalues of the Gaussian kernel function $K(\textbf{x},\textbf{y})=\mathcal{K}(\textbf{x}-\textbf{y})$ mentioned above, we can obtain the singular value of TPM $p(y|x)$.
\begin{thm}
    \label{dfn.gamma}
    Suppose the transitional probability from $x\in\mathcal{R}$ to $y\in\mathcal{R}$ is
    \begin{eqnarray}\label{ED}
		\begin{aligned}
			p(y|x)=\frac{1}{\sqrt{2\pi}\sigma}\exp\left\{-\frac{(y-ax)^2}{2\sigma^2}\right\}
		\end{aligned}
	\end{eqnarray}
    its singular value spectrum is:
    \begin{eqnarray}\label{ED}
		\begin{aligned}
		\zeta^2(\omega)=\frac{1}{|a|}\exp\left\{-\frac{\omega^2}{2}\frac{2\sigma^2}{a^2}\right\}
   \end{aligned}
\end{eqnarray}
then the $2$-ordered approximate dynamical reversibility of $p(y|x)$ is defined as:
\begin{eqnarray}\label{ED}
		\begin{aligned}
		\Gamma_2&=\int_{-\infty}^\infty\zeta^2(\omega)d\omega=\frac{\sqrt{\pi}}{\sigma}
   \end{aligned}
\end{eqnarray}
\end{thm}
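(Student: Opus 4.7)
The plan is to identify the singular values of the transition kernel $p(y|x)$ as the eigenvalues of a self-adjoint integral operator with a \emph{translation-invariant} kernel, then diagonalize that operator by Fourier transform. Concretely, I would work with the composition $\mathbf{P}^{*}\mathbf{P}$, whose eigenvalues are exactly the $\zeta^{2}(\omega)$, rather than with $p(y|x)$ itself (which is not symmetric in $x$ and $y$ when $a\neq 1$).

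First I would write down the kernel of $\mathbf{P}^{*}\mathbf{P}$:
\begin{equation*}
K(x_{1},x_{2})=\int_{-\infty}^{\infty}p(z|x_{1})\,p(z|x_{2})\,dz
=\frac{1}{2\pi\sigma^{2}}\int_{-\infty}^{\infty}\exp\!\left\{-\frac{(z-ax_{1})^{2}+(z-ax_{2})^{2}}{2\sigma^{2}}\right\}dz.
\end{equation*}
Completing the square in $z$ splits the exponent into a quadratic in $z-a(x_{1}+x_{2})/2$ plus the $z$-independent term $-a^{2}(x_{1}-x_{2})^{2}/(4\sigma^{2})$. Evaluating the resulting Gaussian integral yields
\begin{equation*}
K(x_{1},x_{2})=\mathcal{K}(x_{1}-x_{2})=\frac{1}{2\sqrt{\pi}\sigma}\exp\!\left\{-\frac{a^{2}(x_{1}-x_{2})^{2}}{4\sigma^{2}}\right\},
\end{equation*}
which is precisely the Gaussian convolution kernel used in Section C.2 (with the 1D specialization $A=a$, $\Sigma=\sigma^{2}$).

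Second, since $K$ depends only on $x_{1}-x_{2}$, the operator is diagonalized by the Fourier transform, so the spectrum is $\zeta^{2}(\omega)=\hat{\mathcal{K}}(\omega)$. Applying the standard Gaussian Fourier identity $\mathcal{F}\{e^{-\alpha t^{2}}\}(\omega)=\sqrt{\pi/\alpha}\,e^{-\omega^{2}/(4\alpha)}$ with $\alpha=a^{2}/(4\sigma^{2})$ gives
\begin{equation*}
\zeta^{2}(\omega)=\frac{1}{2\sqrt{\pi}\sigma}\cdot\frac{2\sqrt{\pi}\sigma}{|a|}\exp\!\left\{-\frac{\sigma^{2}\omega^{2}}{a^{2}}\right\}=\frac{1}{|a|}\exp\!\left\{-\frac{\omega^{2}}{2}\cdot\frac{2\sigma^{2}}{a^{2}}\right\},
\end{equation*}
matching the claimed spectrum.

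Finally, I would integrate: $\Gamma_{2}=\int_{-\infty}^{\infty}\zeta^{2}(\omega)\,d\omega=\frac{1}{|a|}\int_{-\infty}^{\infty}e^{-\sigma^{2}\omega^{2}/a^{2}}\,d\omega=\frac{1}{|a|}\cdot\frac{|a|\sqrt{\pi}}{\sigma}=\frac{\sqrt{\pi}}{\sigma}$, and the factor of $|a|$ cancels cleanly, as expected from $\Gamma_{2}$ being a pure reversibility measure that for $\alpha=2$ loses the $A$-dependence. The main obstacle is not conceptual but bookkeeping: one must be careful that the non-symmetry of $p(y|x)$ forces the use of $\mathbf{P}^{*}\mathbf{P}$ rather than $\mathbf{P}$ directly, and one must track the $|a|$ factors arising from the Jacobian of the $z$-integration (or, equivalently, from the Gaussian Fourier identity) to verify both that the dependence on $a$ disappears in $\Gamma_{2}$ and that the $n=1$ specialization of the general formula in Eq.\,(\ref{Gamma_gaussian}) reproduces $\sqrt{\pi}/\sigma$.
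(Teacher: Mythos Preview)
Your proposal is correct and follows essentially the same route as the paper: compute the kernel $K(x_1,x_2)=\int p(z|x_1)p(z|x_2)\,dz$ of $PP^{T}$, observe it is translation-invariant, diagonalize via Fourier transform to read off $\zeta^{2}(\omega)$, and integrate. The paper carries out exactly these steps with the same completion of the square and the same Gaussian Fourier computation, so there is nothing to add.
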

\begin{proof}
According to the calculation process of calculating the singular values for discrete TPM $P$ is equal to calculating the eigenvalues of $PP^T$. In continuous space, $PP^T$ for $p(y|x)$ is equal to the Gaussian kernel
\begin{eqnarray}\label{ED}
		\begin{aligned}
			K(\textbf{x},\textbf{y})&=\int_{-\infty}^\infty p(\textbf{z}|\textbf{x})p(\textbf{z}|\textbf{y})d\textbf{z}=\int_{-\infty}^\infty \frac{1}{2\pi\sigma^2}\exp\left\{-\frac{(\textbf{z}-a\textbf{x})^2+(\textbf{z}-a\textbf{y})^2}{2\sigma^2}\right\}d\textbf{z}\\
   &=\int_{-\infty}^\infty \frac{1}{2\pi\sigma^2}\exp\left\{-\frac{2(\textbf{z}-\frac{a(\textbf{x}+\textbf{y})}{2})^2+\frac{a^2(\textbf{x}-\textbf{y})^2}{2}}{2\sigma^2}\right\}d\textbf{z}\\
   &=\frac{1}{\sqrt{2\pi}\sigma}\exp\left\{-\frac{a^2(\textbf{x}-\textbf{y})^2}{4\sigma^2}\right\}\int_{-\infty}^\infty\frac{1}{\sqrt{2\pi}\sigma}\exp\left\{-\frac{(\textbf{z}-\frac{a(\textbf{x}+\textbf{y})}{2})^2}{\sigma^2}\right\}d\textbf{z}\\
   &=\frac{1}{\sqrt{2\pi}(\sqrt{2}\sigma)}\exp\left\{-\frac{a^2(\textbf{x}-\textbf{y})^2}{4\sigma^2}\right\}\int_{-\infty}^\infty\frac{1}{\sqrt{2\pi}(\sigma/\sqrt{2})}\exp\left\{-\frac{(\textbf{z}-\frac{a(\textbf{x}+\textbf{y})}{2})^2}{2(\sigma/\sqrt{2})^2}\right\}d\textbf{z}\\
   &=\frac{1}{|a|}\frac{1}{\sqrt{2\pi}(\sqrt{2}\sigma/|a|)}\exp\left\{-\frac{(\textbf{x}-\textbf{y})^2}{2(\sqrt{2}\sigma/|a|)^2}\right\}.
   \end{aligned}
\end{eqnarray}

It's also important to be reminded that this is only a special case of linear dynamics. In nonlinear dynamics, we can only get
\begin{eqnarray}\label{ED}
		\begin{aligned}
			K(\textbf{x},\textbf{y})&=\frac{1}{\sqrt{2\pi}(\sqrt{2}\sigma)}\exp\left\{-\frac{(f(\textbf{x})-f(\textbf{y}))^2}{2(\sqrt{2}\sigma)^2}\right\}.
   \end{aligned}
\end{eqnarray}

Suppose $\zeta$ is a singular value of $p(y|x)$, then $\zeta^2$ is the eigenvalue of $K(\textbf{x},\textbf{y})=\mathcal{K}(\textbf{x}-\textbf{y})$ and $\psi(\textbf{x})$ is the eigenfunction corresponding to the eigenvalue $\zeta^2$. By analogy with the theorem of the spectrum of stochastic integral operators in functional analysis, we can derive $\zeta^2$ by
\begin{eqnarray}\label{ED}
		\begin{aligned}
			(\textbf{K}\psi)(x)=\zeta^2 \psi(x),
   \end{aligned}
\end{eqnarray}
from which we can get
\begin{eqnarray}\label{ED}
		\begin{aligned}
		\zeta^2(\omega)=\frac{1}{|a|}\exp\left\{-\frac{\omega^2}{2}\frac{2\sigma^2}{a^2}\right\}.
   \end{aligned}
\end{eqnarray}
We can see that the calculation process is actually a Fourier transform of Gaussian functions. With integral calculation, we get $\Gamma_\alpha$ when $\alpha=2$ as
\begin{eqnarray}\label{ED}
		\begin{aligned}
		\Gamma_2&=\int_{-\infty}^\infty\zeta^2(\omega)d\omega=\int_{-\infty}^\infty\frac{1}{|a|}\exp\left\{-\frac{\omega^2}{2}\frac{2\sigma^2}{a^2}\right\}d\omega\\
  &=\frac{1}{|a|}\frac{|a|}{\sqrt{2}\sigma}\int_{-\infty}^\infty\exp\left\{-\frac{\omega^2}{2}\frac{2\sigma^2}{a^2}\right\}d\left(\frac{\sqrt{2}\sigma}{|a|}\omega\right)\\
  &=\frac{\sqrt{2\pi}}{\sqrt{2}\sigma}=\frac{\sqrt{\pi}}{\sigma}.
   \end{aligned}
\end{eqnarray}
Then we can derive the general value of approximate dynamical reversibility $\Gamma_\alpha$ according to $\Gamma_2$.
\end{proof}
After knowing the quantitative index of approximate reversibility in the case of $\alpha=2$, we can extend it to the general case as $\alpha\in(0,2)$.
\begin{thm}
    \label{dfn.gamma}
    Suppose the transitional probability from $x\in\mathcal{R}$ to $y\in\mathcal{R}$ is
    \begin{eqnarray}\label{ED}
		\begin{aligned}
			p(y|x)=\frac{1}{\sqrt{2\pi}\sigma}\exp\left\{-\frac{(y-ax)^2}{2\sigma^2}\right\}
		\end{aligned}
	\end{eqnarray}
    its $\alpha$-ordered singular value spectrum is:
    \begin{eqnarray}\label{ED}
		\begin{aligned}
		\zeta^\alpha(\omega)=|a|^{-\frac{\alpha}{2}}\exp\left\{-\frac{\alpha}{2}\left(\frac{\omega\sigma}{a}\right)^2\right\}
   \end{aligned}
\end{eqnarray}
then the $\alpha$-ordered approximate dynamical reversibility of $p(y|x)$ is defined as:
\begin{eqnarray}\label{ED}
		\begin{aligned}
		\Gamma_\alpha&=\int_{-\infty}^\infty\zeta^\alpha(\omega)d\omega=\sqrt{\frac{2\pi}{\alpha}}\frac{|a|^{1-\frac{\alpha}{2}}}{\sigma}
   \end{aligned}
\end{eqnarray}
\end{thm}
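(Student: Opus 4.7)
The plan is to bootstrap from the previously established $\alpha=2$ case rather than redoing the Fourier analysis. Since the prior theorem already supplies the pointwise value $\zeta^2(\omega) = |a|^{-1}\exp\{-\omega^2\sigma^2/a^2\}$ coming from the Fourier transform of the convolution kernel $K(\mathbf{x},\mathbf{y}) = \mathcal{K}(\mathbf{x}-\mathbf{y})$, everything reduces to two elementary steps: take a positive square root to isolate $\zeta(\omega)$, and then evaluate a single Gaussian integral.

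First I would extract the singular value spectrum itself. Since $\zeta(\omega) \geq 0$ by definition (it is a singular value), I take the positive square root to obtain
\begin{equation*}
\zeta(\omega) = |a|^{-1/2}\exp\!\left\{-\frac{\omega^2\sigma^2}{2a^2}\right\},
\end{equation*}
and then raising both sides to the $\alpha$ power yields the claimed formula
\begin{equation*}
\zeta^{\alpha}(\omega) = |a|^{-\alpha/2}\exp\!\left\{-\frac{\alpha}{2}\left(\frac{\omega\sigma}{a}\right)^2\right\}.
\end{equation*}
This is purely algebraic once the $\alpha=2$ lemma is in hand.

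Second, I would plug this into the definition of $\Gamma_\alpha$ from Definition C.1 and evaluate the Gaussian integral. The integrand is proportional to $\exp\{-c\omega^2\}$ with $c = \alpha\sigma^2/(2a^2) > 0$ for $\alpha\in(0,2)$, so by the standard identity $\int_{-\infty}^{\infty}\exp\{-c\omega^2\}\,d\omega = \sqrt{\pi/c}$ I obtain
\begin{equation*}
\Gamma_\alpha = |a|^{-\alpha/2}\sqrt{\frac{2\pi a^2}{\alpha\sigma^2}} = \sqrt{\frac{2\pi}{\alpha}}\,\frac{|a|^{1-\alpha/2}}{\sigma},
\end{equation*}
after absorbing $|a|$ out of the square root (noting $\sqrt{a^2}=|a|$). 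A quick consistency check recovers $\Gamma_2 = \sqrt{\pi}/\sigma$ at $\alpha=2$, matching the previous theorem.

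The main obstacle is conceptual rather than computational: one must justify that the continuous singular value spectrum $\zeta(\omega)$, obtained via Fourier diagonalization of the kernel $K$, can be meaningfully raised to a fractional power $\alpha\in(0,2)$ and integrated, in analogy with the discrete sum $\sum_i\zeta_i^\alpha$ from \cite{Zhang2025}. The integrability requires $\alpha>0$ to keep the Gaussian decaying (which it does for all $\alpha\in(0,2)$), and the absolute value $|a|$ must be handled consistently since $a$ may be negative while $\zeta$ is nonnegative by construction. Beyond these bookkeeping points, the argument is a direct Gaussian integration.
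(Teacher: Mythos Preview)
Your proposal is correct and follows essentially the same approach as the paper: both start from the previously computed $\zeta^2(\omega)$, raise to the $\alpha/2$ power to obtain $\zeta^\alpha(\omega)$, and then evaluate the resulting Gaussian integral. The paper performs the integral via an explicit substitution $\omega\mapsto\frac{\sqrt{\alpha}\sigma}{|a|}\omega$ while you invoke the closed-form $\int_{-\infty}^\infty e^{-c\omega^2}\,d\omega=\sqrt{\pi/c}$ directly, but this is the same computation.
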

\begin{proof}
The $\alpha$-ordered approximate dynamical reversibility of $p(y|x)$ is defined as:
\begin{eqnarray}\label{ED}
		\begin{aligned}
		\zeta^\alpha(\omega)=|a|^{-\frac{\alpha}{2}}\exp\left\{-\frac{\alpha}{2}\left(\frac{\omega\sigma}{a}\right)^2\right\}.
   \end{aligned}
\end{eqnarray}
So we can get the index of approximate dynamical reversibility as
\begin{eqnarray}\label{ED}
		\begin{aligned}
		\Gamma_\alpha&=\int_{-\infty}^\infty\zeta^\alpha(\omega)d\omega=\int_{-\infty}^\infty |a|^{-\frac{\alpha}{2}}\exp\left\{-\frac{\omega^2}{2}\frac{\alpha\sigma^2}{a^2}\right\}d\omega\\
  &=\frac{|a|^{1-\frac{\alpha}{2}}}{\sqrt{\alpha}\sigma}\int_{-\infty}^\infty\exp\left\{-\frac{\omega^2}{2}\frac{k\sigma^2}{a^2}\right\}d\left(\frac{\sqrt{\alpha}\sigma}{|a|}\omega\right)\\
  &=\frac{\sqrt{2\pi}}{\sqrt{\alpha}\sigma}|a|^{1-\frac{\alpha}{2}}=\sqrt{\frac{2\pi}{\alpha}}\frac{|a|^{1-\frac{\alpha}{2}}}{\sigma}.
   \end{aligned}
\end{eqnarray}
This is the indicator expression for approximate reversibility in one-dimensional situations.
\end{proof}
Fig.\ref{basicdatatest}b shows the logarithmic relationship between $EI$ and $\Gamma_\alpha$. It can be seen that for different noise covariances $\sigma$, as $\Gamma_\alpha$ increases, $EI$ shows an approximate logarithmic growth relationship with $\Gamma_\alpha$ as $\alpha=1$. The dimensional average reversible information $\ln\Gamma_\alpha$ and dimensional average effective information $\mathcal{J}$ (Fig.\ref{basicdatatest}c), as well as their corresponding determinism (Fig.\ref{basicdatatest}d) and non-degeneracy (Fig.\ref{basicdatatest}e), are linearly related.
\subsection{Approximate dynamical reversibility of GIS in $\mathcal{R}^n$}
After knowing the approximate reversibility index of random variables in one-dimensional space, we can extend it to multi-dimensional variables as $x\in\mathcal{R}^n$ and $y\in\mathcal{R}^n$. First, we can study the case of the full rank matrix $A$ and $\Sigma$.
\begin{thm}
    \label{dfn.gamma}
    Suppose the transitional probability from $x\in\mathcal{R}^n$ to $y\in\mathcal{R}^n$ is
    \begin{eqnarray}\label{ED}
		\begin{aligned}
			p(y|x)=\frac{1}{(2\pi)^\frac{n}{2}\det(\Sigma)^\frac{1}{2}}\exp\left\{-\frac{1}{2}(y-Ax)^T\Sigma^{-1}(y-Ax)\right\}
		\end{aligned}
	\end{eqnarray}
 $x,y\in\mathcal{R}^n, A\in \mathcal{R}^{n\times n}$, its $\alpha$-ordered singular value spectrum is:
    \begin{eqnarray}\label{zetafullarank}
		\begin{aligned}
		\zeta^\alpha(\omega)=|\det(A)|^{-\frac{\alpha}{2}}\exp\left\{-\frac{\alpha}{2}\left(\omega^T(A^{-1})^T\Sigma A^{-1}\omega\right)\right\}
   \end{aligned}
\end{eqnarray}
   $\omega\in\mathcal{R}^n$, the $\alpha$-ordered approximate dynamical reversibility of $p(y|x)$ is defined as:
 \begin{eqnarray}\label{Gammafullrank}
		\begin{aligned}
		\Gamma_\alpha=\left(\frac{2\pi}{\alpha}\right)^\frac{n}{2}{\rm det}(A^T\Sigma^{-1}A)^{\frac{1}{2}-\frac{\alpha}{4}}{\rm det}(\Sigma^{-1})^\frac{\alpha}{4}
   \end{aligned}
\end{eqnarray}
Besides the first constant, this formula contains two terms: ${\rm det}(\cdot)$ of the inverse covariance matrices of the forward and backward dynamics.
\end{thm}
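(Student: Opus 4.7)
The plan is to reproduce, in $\mathcal{R}^n$, the three-step derivation used for the scalar case in the previous theorem: (i) form the symmetric kernel $K(x,y) = \int p(z|x)p(z|y)\,dz$ that plays the role of ``$PP^T$'' for the continuous transition $p(y|x)$, (ii) diagonalize the resulting translation-invariant convolution operator via the multivariate Fourier transform to read off $\zeta^2(\omega)$, (iii) raise eigenvalues to the $\alpha/2$-power, and (iv) integrate over frequency space using the Gaussian formula $\int_{\mathcal{R}^n} e^{-\omega^T N \omega}\,d^n\omega = \pi^{n/2}\det(N)^{-1/2}$ with $N$ positive definite.

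\textbf{Step 1: Kernel computation.} The product of the two Gaussian densities $p(z|x)p(z|y)$ has quadratic exponent that, by completing the square in $z$ around $\mu = A(x+y)/2$, splits into
\[
-(z-\mu)^T\Sigma^{-1}(z-\mu) \;-\; \tfrac14 (x-y)^T A^T\Sigma^{-1}A (x-y).
\]
The $z$-part integrates to $\pi^{n/2}\det(\Sigma)^{1/2}$, and after collecting the constants this recovers exactly the Gaussian kernel $\mathcal{K}(x-y)$ stated in the paragraph preceding the theorem.

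\textbf{Step 2: Fourier diagonalization.} Because $K(x,y)$ depends only on $x-y$, the operator $(\mathbf{K}\psi)(x) = \int K(x,y)\psi(y)\,dy$ is a convolution and is therefore diagonalized by plane waves $\psi_\omega(x) = e^{i\omega\cdot x}$ with eigenvalue $\zeta^2(\omega) = \widehat{\mathcal{K}}(\omega)$. Applying the closed-form Fourier transform of a multivariate Gaussian, together with the identity $(A^T\Sigma^{-1}A)^{-1} = A^{-1}\Sigma (A^{-1})^T$ for the covariance appearing in the exponent, yields equation (\ref{zetafullarank}). The prefactor $|\det(A)|^{-\alpha/2}$ arises because $\mathcal{K}$ is not itself a normalized density: the residual Jacobian from the change of variables that normalizes $\mathcal{K}$ against a Gaussian of covariance $2(A^T\Sigma^{-1}A)^{-1}$ is exactly $|\det(A)|^{-1}$, and raising the full eigenvalue to the $\alpha/2$-power produces the claimed exponent.

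\textbf{Step 3: Integration and simplification.} Setting $N = (\alpha/2)(A^T\Sigma^{-1}A)^{-1}$ and evaluating the Gaussian integral yields
\[
\Gamma_\alpha \;=\; |\det(A)|^{-\alpha/2} \cdot \pi^{n/2}\det(N)^{-1/2} \;=\; \left(\frac{2\pi}{\alpha}\right)^{\!n/2}\frac{|\det(A)|^{\,1-\alpha/2}}{\det(\Sigma)^{1/2}}.
\]
The final step is to repackage this scalar into the symmetric form (\ref{Gammafullrank}) using the elementary identity $\det(A^T\Sigma^{-1}A) = \det(A)^2\det(\Sigma^{-1})$: raising it to the exponent $\tfrac12 - \tfrac\alpha4$ and multiplying by $\det(\Sigma^{-1})^{\alpha/4}$ reproduces exactly $|\det(A)|^{\,1-\alpha/2}\det(\Sigma^{-1})^{1/2}$. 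I expect the main obstacle to be bookkeeping rather than deep mathematics: tracking the half- and quarter-integer exponents on the three determinants without sign errors, and using positive-definiteness of $A^T\Sigma^{-1}A$ and $\Sigma^{-1}$ (guaranteed by the full-rank assumption) to drop absolute values cleanly. A secondary technical point is justifying the plane-wave diagonalization rigorously on the non-compact domain $\mathcal{R}^n$; since $\mathcal{K}$ is Schwartz, Fourier inversion on tempered distributions legitimizes the otherwise heuristic eigenfunction step.
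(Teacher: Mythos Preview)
Your proposal is correct and follows essentially the same route as the paper's proof: compute the symmetric kernel $K(x,y)=\int p(z|x)p(z|y)\,dz$ by completing the square, diagonalize the resulting convolution operator via the multivariate Fourier transform to obtain $\zeta^\alpha(\omega)$, evaluate the Gaussian integral $\int\zeta^\alpha(\omega)\,d\omega$, and then rewrite $|\det(A)|^{1-\alpha/2}\det(\Sigma)^{-1/2}$ in the symmetric form~(\ref{Gammafullrank}) using $\det(A^T\Sigma^{-1}A)=\det(A)^2\det(\Sigma^{-1})$. Your outline is in fact slightly more careful than the paper's (you make the Schwartz-class justification and the determinant bookkeeping explicit), but there is no substantive difference in strategy.
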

\begin{proof}
Suppose $y=Ax+\eta,\eta\sim\mathcal{N}(0,\Sigma)$ as the transitional probability from $x\in\mathcal{R}^m$ to $y\in\mathcal{R}^n$ is 
\begin{eqnarray}\label{ED}
		\begin{aligned}
			p(y|x)=\frac{1}{(2\pi)^\frac{n}{2}\det(\Sigma)^\frac{1}{2}}\exp\left\{-\frac{1}{2}(y-Ax)^T\Sigma^{-1}(y-Ax)\right\}.
		\end{aligned}
	\end{eqnarray}
We calculate the Gaussian kernel, which corresponds to $P\cdot P^T$ of TPM when the probability space is continuous, as
 \begin{eqnarray}\label{ED}
		\begin{aligned}
			K(\textbf{y},\textbf{x})=\frac{1}{|\det(A)|}\frac{1}{(2\pi)^\frac{n}{2}\det(2(A^{-1})^T\Sigma A^{-1})^\frac{1}{2}}\exp\left\{-\frac{1}{2}(\textbf{y}-\textbf{x})^TA^T\Sigma^{-1}A(\textbf{y}-\textbf{x})\right\}
		\end{aligned}
	\end{eqnarray}
According to $\hat{\mathcal{K}}(\omega)\hat{\psi}(\omega)=\mathcal{F}\left\{\mathcal{K}(\mathbf{y})\right\}\mathcal{F}\left\{\psi(\mathbf{y})\right\}=\zeta^2\hat{\psi}(\omega)$  we can get Eq.(\ref{zetafullarank}). We can see that the calculation process is actually a Fourier transform of Gaussian functions, so both low dimensional and high-dimensional can directly borrow the form of characteristic functions. When $A$ is full rank, the $\alpha$-ordered approximate dynamical reversibility of $p(y|x)$ as
 \begin{eqnarray}\label{Gammaalphaprocess}
		\begin{aligned}
		\Gamma_\alpha&= \int_{\mathcal{X}}\zeta^\alpha(\omega)d\omega= \int_{\mathcal{X}}|\det(A)|^{-\alpha/2} \exp\left\{ -\frac{\alpha}{2} \omega^T A^T\Sigma^{-1}A \omega \right\} d\omega\\
&= |\det(A)|^{-\alpha/2} \int_{\mathcal{X}} \exp\left\{ -\frac12 \omega^T (\alpha A^T\Sigma^{-1}A) \omega \right\}d\omega\\&= |\det (A)|^{-\alpha/2} \cdot \frac{(2\pi)^{n/2}}{\alpha^{n/2} \cdot \dfrac{(\det \Sigma)^{1/2}}{|\det(A)|}} = |\det(A)|^{-\alpha/2} \cdot \frac{(2\pi)^{n/2} \, |\det(A)|}{\alpha^{n/2} \, (\det \Sigma)^{1/2}} \\
&=\left(\frac{2\pi}{\alpha}\right)^\frac{n}{2}\frac{|\det(A)|^{1-\frac{\alpha}{2}}}{\det(\Sigma)^\frac{1}{2}},
   \end{aligned}
\end{eqnarray}
which can be transformed into the form of Eq.(\ref{Gammafullrank}). 
\end{proof}
When we assume $A$ and $\Sigma$ are reversible matrices, the forward dynamics is $x_{t+1}=a_0+Ax_t+\eta_t$, and the corresponding backward dynamics is $x_{t}=A^{-1}(x_{t+1}-a_0-\eta_t)$. The corresponding conditional probability distributions are $p(x_{t+1}|x_t)=\mathcal{N}(a_0+Ax_t,\Sigma)$ and $p(x_{t}|x_{t+1})=\mathcal{N}(A^{-1}(x_{t+1}-a_0),A^{-1}\Sigma (A^{-1})^T)$, it can be seen that $\Sigma^{-1}$ and $A^T\Sigma^{-1}A$ represent the inverse covariance matrices of the forward and backward dynamics, respectively. 

We can also deal with the case of a non-full rank matrix $A$, and this method can better reflect the essence of approximate reversibility $\Gamma_\alpha$.
\begin{thm}
    \label{dfn.gamma}
    Suppose the transitional probability from $x\in\mathcal{R}^m$ to $y\in\mathcal{R}^n$ is
    \begin{eqnarray}\label{ED}
		\begin{aligned}
			p(y|x)=\frac{1}{(2\pi)^\frac{n}{2}\det(\Sigma)^\frac{1}{2}}\exp\left\{-\frac{1}{2}(y-Ax)^T\Sigma^{-1}(y-Ax)\right\}
		\end{aligned}
	\end{eqnarray}
 $x\in\mathcal{R}^m, y\in\mathcal{R}^n, A\in \mathcal{R}^{n\times m},\Sigma\in \mathcal{R}^{n\times n}$, its $\alpha$-ordered singular value spectrum is:
    \begin{eqnarray}\label{ED}
		\begin{aligned}
		\zeta^\alpha(\omega)=\left\{(2\pi)^{\frac{n-m}{2}}{\rm det}(\Sigma)^{\frac{1}{2}}{\rm pdet}(A^T\Sigma^{-\frac{1}{2}} A)^{\frac{1}{2}}\right\}^{-\frac{\alpha}{2}}\exp\left\{-\frac{\alpha}{2}\left(\omega^T(A^T\Sigma^{-1}A)^{\dagger}\omega\right)\right\}
   \end{aligned}
\end{eqnarray}
   $\omega\in\mathcal{R}^m $, the $\alpha$-ordered approximate dynamical reversibility of $p(y|x)$ is defined as:
 \begin{eqnarray}\label{gammagaussian}
		\begin{aligned}
		\Gamma_\alpha&=(2\pi)^{\frac{\alpha(m-n)}{4}}\left(\frac{2\pi}{\alpha}\right)^\frac{m}{2}{\rm pdet}(A^T\Sigma^{-1}A)^{\frac{1}{2}-\frac{\alpha}{4}}{\rm det}(\Sigma^{-1})^\frac{\alpha}{4}.\\
   \end{aligned}
\end{eqnarray}
After that, we can get the indicator for local approximate dynamical reversibility in a general GIS defined as
 \begin{eqnarray}\label{Gamma_gaussian}
		\begin{aligned}
		\Gamma_\alpha=\left(\frac{2\pi}{\alpha}\right)^\frac{n}{2}{\rm pdet}(A^T\Sigma^{-1}A)^{\frac{1}{2}-\frac{\alpha}{4}}{\rm pdet}(\Sigma^{-1})^\frac{\alpha}{4},
   \end{aligned}
\end{eqnarray} 
Besides the first constant, this formula contains two terms: ${\rm pdet}(\cdot)$ of the inverse covariance matrices of the forward and backward dynamics.
\end{thm}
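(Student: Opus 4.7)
The plan is to mirror the full-rank derivation given just above (equations $(\ref{zetafullarank})$--$(\ref{Gammafullrank})$), replacing every occurrence of $\det$ by $\mathrm{pdet}$ and every ordinary inverse by a Moore--Penrose pseudoinverse on the appropriate range spaces. Concretely, I would carry out the same three-step recipe: (i) form the autocorrelation kernel $K(\mathbf{x},\mathbf{y})=\int p(\mathbf{z}\mid\mathbf{x})p(\mathbf{z}\mid\mathbf{y})d\mathbf{z}$, (ii) diagonalize it via Fourier transform to read off the singular value spectrum $\zeta(\omega)$, and (iii) integrate $\zeta^\alpha$. The departure from the earlier proof is purely algebraic, since $A\in\mathcal{R}^{n\times m}$ is now rectangular and $A^T\Sigma^{-1}A$ may be rank-deficient of rank $r\le\min(m,n)$.

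First I would compute $K(\mathbf{x},\mathbf{y})$. Writing the product $p(\mathbf{z}\mid\mathbf{x})p(\mathbf{z}\mid\mathbf{y})$ and completing the square in $\mathbf{z}\in\mathcal{R}^n$, the $\mathbf{z}$-integral becomes a standard Gaussian integral over $\mathcal{R}^n$, while the remainder yields an $\mathbf{x}-\mathbf{y}$ Gaussian governed by the quadratic form $\tfrac{1}{2}(\mathbf{x}-\mathbf{y})^T A^T\Sigma^{-1}A(\mathbf{x}-\mathbf{y})$. The key bookkeeping is the normalization: when $A$ is not square the $\mathbf{z}$-marginalization produces a factor $(2\pi)^{(n-m)/2}\det(\Sigma)^{1/2}\mathrm{pdet}(A^T\Sigma^{-1/2}A)^{1/2}$ in the denominator, which is exactly the constant appearing in the stated $\zeta^\alpha(\omega)$.

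Next, I would apply the Fourier transform to the shift-invariant kernel $K(\mathbf{x},\mathbf{y})=\mathcal{K}(\mathbf{x}-\mathbf{y})$ using the same convolution trick as in Theorem C.3/C.4. For a rank-deficient positive semidefinite quadratic form $Q=A^T\Sigma^{-1}A$, I would use its spectral decomposition $Q=U\Lambda U^T$ with $\Lambda=\mathrm{diag}(\lambda_1,\ldots,\lambda_r,0,\ldots,0)$, change coordinates to align with the eigenbasis, and Fourier-transform block-by-block. On the range of $Q$ this gives a Gaussian in $\omega$ with inverse-covariance $\alpha Q$; on the kernel of $Q$ the exponent is identically zero. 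The resulting spectrum is precisely $\zeta^\alpha(\omega)=C^{-\alpha/2}\exp\{-\tfrac{\alpha}{2}\omega^T Q^\dagger\omega\}$ with the advertised constant $C$.

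Finally, I would integrate $\zeta^\alpha(\omega)$ to get $\Gamma_\alpha$. This is where the main obstacle lies: the integral over the null space of $Q$ of a constant is formally infinite, so one must either interpret $\Gamma_\alpha$ as the restriction to the effective subspace (the range of $Q$ of dimension $r$) or regularize with a vanishing quadratic penalty and take the limit. Either route turns the $r$-dimensional Gaussian integral into $(2\pi/\alpha)^{r/2}\mathrm{pdet}(Q)^{-1/2}$, which combines with the prefactor from step two to give Eq.~$(\ref{gammagaussian})$ after straightforward algebra using $\mathrm{pdet}(Q)^{1/2}\det(\Sigma^{-1})^{1/2}=\mathrm{pdet}(A^T\Sigma^{-1/2}A)^{1/2}$. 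The final identity Eq.~$(\ref{Gamma_gaussian})$ for the square case $m=n$ then follows by setting $m=n$ (killing the $(2\pi)^{\alpha(m-n)/4}$ factor) and, in the possibly singular $\Sigma$ case, invoking Definition~\ref{sigma-1} so that $\det(\Sigma^{-1})$ is replaced by $\mathrm{pdet}(\Sigma^{-1})$. The hardest step is the careful treatment of this pseudoinverse integration and the matching of normalization constants across the range--kernel decomposition; once that is handled, the rest is bookkeeping parallel to the full-rank proof.
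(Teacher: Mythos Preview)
Your three-step plan (autocorrelation kernel $\to$ Fourier diagonalization $\to$ integrate $\zeta^\alpha$) is exactly what the paper does; the paper's proof simply re-runs the full-rank computation with $\det$ replaced by $\mathrm{pdet}$ and the inverse by the Moore--Penrose pseudoinverse, citing ``the third line of Eq.~(\ref{Gammaalphaprocess}), which changes to'' and writing the answer. Your treatment of the null space is in fact more explicit than the paper's, which does not discuss the divergence of the $\omega$-integral on $\ker(A^T\Sigma^{-1}A)$ at all.

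One bookkeeping point you should watch: you say the restricted Gaussian integral yields $(2\pi/\alpha)^{r/2}\,\mathrm{pdet}(Q)^{-1/2}$ and that this ``combines with the prefactor\ldots\ to give Eq.~(\ref{gammagaussian})''. But Eq.~(\ref{gammagaussian}) carries the exponent $m/2$, not $r/2$, on $(2\pi/\alpha)$. The paper itself sidesteps this by formally integrating over all of $\mathcal{R}^m$ and absorbing the degenerate directions into the $\mathrm{pdet}$ convention rather than restricting to the range of $Q$. If you genuinely restrict to the $r$-dimensional range you will not land on the stated power of $2\pi/\alpha$; you need to adopt the same formal convention as the paper (treat the flat directions via the $\mathrm{pdet}$ bookkeeping, not by dimensional restriction) for the constants to match. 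This is not a conceptual error in your strategy, but your two proposed resolutions are not equivalent, and only the ``regularize then take the limit with $\mathrm{pdet}$'' route reproduces the paper's exponent.
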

\begin{proof}
Suppose $y=Ax+\eta,\eta\sim\mathcal{N}(0,\Sigma)$ as the transitional probability from $x\in\mathcal{R}^m$ to $y\in\mathcal{R}^n$ is 
\begin{eqnarray}\label{ED}
		\begin{aligned}
			p(y|x)=\frac{1}{(2\pi)^\frac{n}{2}\det(\Sigma)^\frac{1}{2}}\exp\left\{-\frac{1}{2}(y-Ax)^T\Sigma^{-1}(y-Ax)\right\}.
		\end{aligned}
	\end{eqnarray}
When $A$ is not a full rank matrix, we need to calculate the Gaussian kernel which corresponds to $P\cdot P^T$ of TPM when probability space is continuous as
\begin{eqnarray}\label{ED}
		\begin{aligned}
			K(\textbf{x},\textbf{y})&=\int_{-\infty}^\infty p(\textbf{z}|\textbf{x})p(\textbf{z}|\textbf{y})d\textbf{z}=\frac{1}{(2\pi)^\frac{n}{2}\det(2\Sigma)^\frac{1}{2}}\exp\left\{-\frac{1}{4}(\textbf{x}-\textbf{y})^T(A^T\Sigma^{-1}A)(\textbf{x}-\textbf{y})\right\}\\&=\frac{{\rm pdet}(2(A^T\Sigma^{-1}A)^\dagger)^\frac{1}{2}}{(2\pi)^\frac{n}{2}\det(2\Sigma)^\frac{1}{2}{\rm pdet}(2(A^T\Sigma^{-1}A)^\dagger)^\frac{1}{2}}\exp\left\{-\frac{1}{4}(\textbf{x}-\textbf{y})^T(A^T\Sigma^{-1}A)(\textbf{x}-\textbf{y})\right\}\\&=\frac{{\rm pdet}(A^T\Sigma^{-1}A)^{-\frac{1}{2}}}{(2\pi)^\frac{n-m}{2}\det(2\Sigma)^\frac{1}{2}}\frac{1}{(2\pi)^\frac{m}{2}{\rm pdet}(2(A^T\Sigma^{-1}A)^\dagger)^\frac{1}{2}}\exp\left\{-\frac{1}{4}(\textbf{x}-\textbf{y})^T(A^T\Sigma^{-1}A)(\textbf{x}-\textbf{y})\right\},
   \end{aligned}
\end{eqnarray}
By analogy with the theorem of the spectrum of stochastic integral operators in functional analysis, we can derive $\zeta^2$ by
\begin{eqnarray}\label{ED}
		\begin{aligned}
			\textbf{K}\psi(x)=\zeta^2 \psi(x)
   \end{aligned}
\end{eqnarray}
and we can get
\begin{eqnarray}\label{ED}
		\begin{aligned}
		\zeta^\alpha(\omega)=\left\{(2\pi)^{\frac{n-m}{2}}{\rm det}(\Sigma)^{\frac{1}{2}}{\rm pdet}(A^T\Sigma^{-\frac{1}{2}} A)^{\frac{1}{2}}\right\}^{-\frac{\alpha}{2}}\exp\left\{-\frac{\alpha}{2}\left(\omega^T(A^T\Sigma^{-1}A)^{\dagger}\omega\right)\right\}
   \end{aligned}
\end{eqnarray}
the $\alpha$-ordered approximate dynamical reversibility of $p(y|x)$ is derived from the third line of Eq.(\ref{Gammaalphaprocess}), which changes to
\begin{eqnarray}\label{ED}
		\begin{aligned}
		\Gamma_\alpha&=\left(\frac{2\pi}{\alpha}\right)^\frac{n}{2}\left(\frac{{\rm pdet}(A^T\Sigma^{-1}A)^{-\frac{1}{2}}}{(2\pi)^\frac{n-m}{2}\det(\Sigma)^\frac{1}{2}}\right)^\frac{\alpha}{2}{\rm pdet}(A^T\Sigma^{-1}A)\\&=(2\pi)^{\frac{\alpha(m-n)}{4}}\left(\frac{2\pi}{\alpha}\right)^\frac{m}{2}{\rm pdet}(A^T\Sigma^{-1}A)^{\frac{1}{2}-\frac{\alpha}{4}}{\rm det}(\Sigma^{-1})^\frac{\alpha}{4}.
   \end{aligned}
\end{eqnarray}
In general dynamics, $m=n$. Considering the situation in Definition \ref{sigma-1}, we can directly use the form in Eq.(\ref{Gamma_gaussian})
to measure the approximate dynamical reversibility of the dynamics.
\end{proof}
$\Gamma_\alpha$ can be considered a measure of approximate dynamical reversibility because when the uncertainty in the system's forward and backward dynamics (i.e., the pseudo-determinant of the covariance matrix) is large, the pseudo-determinants of $\Sigma^{-1}$ and $A^T\Sigma^{-1}A$ become very small, and the dynamics become highly irreversible. Conversely, if the uncertainty in the system's forward and backward dynamics is very small, or even approaches zero, the pseudo-determinants of $\Sigma^{-1}$ and $A^T\Sigma^{-1}A$ become very large, or even approach infinity. In this case, the system's dynamics are very close to reversible dynamics, and using the de-noised reversible dynamics to approximate the GIS becomes effective.

To avoid the excessive influence of variable dimension on the reversibility of dynamics, we can use the dimensional average reversible information $\hat\gamma_\alpha\equiv\frac{1}{n}\ln\Gamma_\alpha$ to more effectively quantify the approximate dynamical reversibility. After removing the constant term, $\gamma_\alpha=\hat\gamma_\alpha-\frac{1}{2}\ln\left(\frac{2\pi}{\alpha}\right)$ is a variable only determined by the singular values after the SVD of $\Sigma^{-1}$ and $A^T\Sigma^{-1}A$.
\begin{lem}   
\label{dimensionaveragedinf}
(Dimensional average reversible information based on SVD) For GIS like Eq.(\ref{micro-dynamics-appendix}), we only need to obtain the singular value spectra of inverse covariance matrices $A^T\Sigma^{-1}A$ and $\Sigma^{-1}$ to quantify the approximate dynamical reversibility. The corresponding indicator is the dimensional average reversibility information after removing the constant term, defined as
 \begin{eqnarray}\label{dimensionaveraged-appendix}
		\begin{aligned}
		\gamma_\alpha=\frac{1}{n}\left[(\frac{1}{2}-\frac{\alpha}{4})\sum_{i=1}^{r_s}\ln s_i+\frac{\alpha}{4}\sum_{i=1}^{r_\kappa}\ln\kappa_i\right]
   \end{aligned}
\end{eqnarray}
Here $r_s,r_\kappa$ denote the ranks of matrices $A^T\Sigma^{-1}A, \Sigma^{-1}$, respectively. $s_{i}$ and $\kappa_i$ denote the $i$-th singular value of matrices $A^T\Sigma^{-1}A$ and matrix $\Sigma^{-1}$ respectively.
\end{lem}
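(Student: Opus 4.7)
The plan is to derive the claimed formula directly from the closed form of $\Gamma_\alpha$ established earlier in Eq.(\ref{Gamma_gaussian}), namely
\begin{equation*}
\Gamma_\alpha=\left(\frac{2\pi}{\alpha}\right)^{\frac{n}{2}}{\rm pdet}(A^T\Sigma^{-1}A)^{\frac{1}{2}-\frac{\alpha}{4}}{\rm pdet}(\Sigma^{-1})^{\frac{\alpha}{4}}.
\end{equation*}
The strategy is purely algebraic: take the natural logarithm of both sides, normalize by $n$, and then strip off the dimension-independent constant to isolate a quantity that depends only on the two singular-value spectra.

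First, taking $\ln(\cdot)$ on both sides and dividing by $n$ yields the un-normalized dimensional average reversible information
\begin{equation*}
\hat\gamma_\alpha=\frac{1}{n}\ln\Gamma_\alpha=\frac{1}{2}\ln\!\left(\frac{2\pi}{\alpha}\right)+\frac{1}{n}\left(\frac{1}{2}-\frac{\alpha}{4}\right)\ln{\rm pdet}(A^T\Sigma^{-1}A)+\frac{1}{n}\cdot\frac{\alpha}{4}\ln{\rm pdet}(\Sigma^{-1}).
\end{equation*}
By the definition $\gamma_\alpha\equiv\hat\gamma_\alpha-\tfrac{1}{2}\ln(2\pi/\alpha)$, the constant prefactor cancels, leaving a clean expression purely in terms of the two pseudo-determinants.

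The final step is to rewrite each pseudo-determinant as a product of nonzero singular values. By definition, ${\rm pdet}(M)=\prod_{i=1}^{r(M)} s_i(M)$, so $\ln{\rm pdet}(A^T\Sigma^{-1}A)=\sum_{i=1}^{r_s}\ln s_i$ and $\ln{\rm pdet}(\Sigma^{-1})=\sum_{i=1}^{r_\kappa}\ln\kappa_i$, where $r_s,r_\kappa$ are the ranks of the two inverse covariance matrices and $s_i,\kappa_i$ are their ordered singular values. Substituting these identities into the previous display gives exactly
\begin{equation*}
\gamma_\alpha=\frac{1}{n}\left[\left(\frac{1}{2}-\frac{\alpha}{4}\right)\sum_{i=1}^{r_s}\ln s_i+\frac{\alpha}{4}\sum_{i=1}^{r_\kappa}\ln\kappa_i\right],
\end{equation*}
as claimed.

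There is no genuine analytic obstacle here; the lemma is essentially a bookkeeping corollary of Eq.(\ref{Gamma_gaussian}). The only subtlety that deserves a one-line remark is the use of the pseudo-determinant in the rank-deficient regime: when $A^T\Sigma^{-1}A$ or $\Sigma^{-1}$ fails to be full rank, one must adopt the convention in Definition \ref{sigma-1} so that zero singular values are excluded from the product (rather than forcing $\ln 0=-\infty$). This ensures that the sums over $i=1,\ldots,r_s$ and $i=1,\ldots,r_\kappa$ are well defined, and it is consistent with the derivation of $\Gamma_\alpha$ itself through ${\rm pdet}(\cdot)$ in the preceding theorem.
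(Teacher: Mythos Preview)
Your proposal is correct and follows essentially the same approach as the paper: start from the closed form $\Gamma_\alpha=(2\pi/\alpha)^{n/2}{\rm pdet}(A^T\Sigma^{-1}A)^{1/2-\alpha/4}{\rm pdet}(\Sigma^{-1})^{\alpha/4}$, take $\tfrac{1}{n}\ln(\cdot)$, subtract the constant $\tfrac{1}{2}\ln(2\pi/\alpha)$, and expand each $\ln{\rm pdet}$ as a sum over nonzero singular values. Your added remark about the pseudo-determinant convention in the rank-deficient case is a useful clarification that the paper's proof leaves implicit.
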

\begin{proof}
Due to the fact that $\Gamma_ {\alpha}$ is related to the dimensionality of the input variable and is influenced by a constant that is not related to the variable, we need to average the dimensionality of the input variable by taking the logarithm and eliminate the constant term as
 \begin{eqnarray}\label{ED}
		\begin{aligned}
		\gamma_\alpha&\equiv\frac{1}{n}\ln\Gamma_\alpha-\frac{1}{2}\ln\left(\frac{2\pi}{\alpha}\right)\\&
        =\frac{1}{n}\ln{\rm pdet}(A^T\Sigma^{-1}A)^{\frac{1}{2}-\frac{\alpha}{4}}+\frac{1}{n}\ln{\rm pdet}(\Sigma^{-1})^\frac{\alpha}{4}.
        \\&=\frac{1}{n}(\frac{1}{2}-\frac{\alpha}{4})\sum_{i=1}^{r_s}\ln s_{i}+\frac{\alpha}{4n}\sum_{i=1}^{r_\kappa}\ln \kappa_i
   \end{aligned}
\end{eqnarray}
to characterize dimensionally independent approximate dynamic reversibility for a more reasonable comparison of iterative systems of different dimensions. As $r(A^T\Sigma^{-1} A)\equiv r_s$, $r(\Sigma^{-1})\equiv r_\kappa$, $s_{i}(A^T\Sigma^{-1} A)\equiv s_i$ and $s_{i}(\Sigma^{-1})\equiv\kappa_i$, we can get our final results.
\end{proof}
\subsubsection{Determinism and degeneracy based on $\Gamma_\alpha$}
As pointed out by \cite{Zhang2025}, $\Gamma_\alpha$'s reflection on determinism in Eq.(\ref{DegeneracyDeterminismEI1}) and degeneracy in Eq.(\ref{DegeneracyDeterminismEI2}) mainly depends on the transformation of hyperparameter $\alpha$. By adjusting the parameter $\alpha\in(0,2]$, we can make $\Gamma_\alpha$ reflect the \textbf{determinism} or \textbf{degeneracy} of $p(x_{t+1}|x_t)=\mathcal{N}(Ax_t+a_0,\Sigma)$. 

As $\alpha\to 0$,
\begin{eqnarray}\label{ED}
		\begin{aligned}
		\Gamma_{\alpha\to 0}=\left(\frac{2\pi}{\alpha}\right)_{\alpha\to 0}^\frac{n}{2}{\rm pdet}(A^T\Sigma^{-1}A)^{\frac{1}{2}},\\
   \end{aligned}
\end{eqnarray} 
in which ${\rm pdet}(A^T\Sigma^{-1}A)$ is the Moore-Penrose generalized inverse matrix of $A^\dagger\Sigma (A^\dagger)^T$. $A^\dagger\Sigma (A^\dagger)^T$ is the covariance matrix of the inverse dynamics $x_{t}=A^\dagger x_{t+1}-A^\dagger a_0-A^\dagger\eta_t$ of $x_{t+1}=a_0+Ax_t+\eta_t$. In the calculation of $\mathcal{J}$, non-degeneracy $\mathcal{J}_2$ in Eq.(\ref{DegeneracyDeterminismEI2}) describes how exactly we can infer the state $x_{t-1}$ in previous time step from the current state $x_{t}$ which is also the predictability of backward dynamics, so $\Gamma_{\alpha\to 0}$ resembles the non-degeneracy term in the definition of $\mathcal{J}$. When ${\rm pdet}(A^T\Sigma^{-1}A)$ becomes smaller, $p(x_{t+1}|x_t)$ becomes more degeneracy. 

Similarly, as $\alpha\to2$, since
\begin{eqnarray}\label{ED}
		\begin{aligned}
		\Gamma_{\alpha\to2}&=\left(\pi\right)^\frac{n}{2}{\rm det}(\Sigma^{-1})^\frac{1}{2},\\
   \end{aligned}
\end{eqnarray}
and covariance matrix $\Sigma$ of $x_{t+1}=a_0+Ax_t+\eta_t$ directly determine the determinism term $\mathcal{J}_1$ in $\mathcal{J}$ which measures how the current state $x_t$ can deterministically influences the state $x_{t+1}$ in next time step, $\Gamma_2$ is comparable with the determinism term. An increase of ${\rm det}(\Sigma^{-1})$ leads to higher maximum transition probabilities, reflecting stronger determinism in the underlying dynamics. When ${\rm det}(\Sigma^{-1})\to\infty$, $p(x_{t+1}|x_t)$ approaches the Dirac distribution, the determinism will tend towards $\infty$.

In practice, $\alpha=1$ is often chosen to balance $\Gamma_\alpha$'s emphasis on both determinism and degeneracy. For $\alpha<1$, $\Gamma_\alpha$ tends to capture more of the non-degeneracy of $p(x_{t+1}|x_t)$. In contrast, for $\alpha> 1$, $\Gamma_\alpha$ emphasizes the determinism of $p(x_{t+1}|x_t)$. Given the significance of $\alpha=1$, we primarily present results for this case, and we will denote 
\begin{eqnarray}\label{ED}
		\begin{aligned}
		\Gamma\equiv\Gamma_1&=(2\pi)^{\frac{n}{2}}{\rm pdet}(A^T\Sigma^{-1}A)^{\frac{1}{4}}{\rm det}(\Sigma^{-1})^\frac{1}{4}\\
   \end{aligned}
\end{eqnarray}
in the subsequent discussion. 

\subsection{Nearly linear relationship}
To study the relationship between two CE indicators, we first need to understand the relationship between $EI$ and $\Gamma_\alpha$. In continuous systems, approximate equivalence can be analytically proven.

\begin{thm}\label{thm.gammaabeic}
(Correlation between $\Gamma_\alpha$ and $EI$): When the backward dynamics $p(x_t|x_{t+1})=\mathcal{N}(A^\dagger x_{t+1}-A^\dagger a_0,A^\dagger\Sigma(A^\dagger)^T)$ is close to a normalized normal distribution as $A^\dagger\Sigma(A^\dagger)^T\approx I_n$, $\Gamma_\alpha$ and dimension averaged $EI$ are positively correlated as
\begin{eqnarray}\label{gammaabeic}
\begin{aligned}
		\ln\Gamma_\alpha&\simeq n(1-\frac{\alpha}{4})\mathcal{J}+C.
    \end{aligned}
\end{eqnarray}
$\ln\Gamma_\alpha$ is the reversible information and measures the degree of approximate reversibility. $C=\frac{n}{2}\ln\left(\frac{2\pi}{\alpha}\right) -n(1-\frac{\alpha}{4})\ln(\frac{L}{\sqrt{2\pi e}})$ is a constant term independent of $A$ and $\Sigma$. When the backward dynamics $p(x_t|x_{t+1})$ is a normalized normal distribution, i.e. $A\in\mathcal{R}^{m\times n}$ is reversible and $A^\dagger\Sigma(A^\dagger)^T=A^{-1}\Sigma(A^{-1})^T=I_n$, the equal sign holds.
\end{thm}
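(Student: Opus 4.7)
The plan is to obtain the linear relation by direct manipulation of the closed-form expressions for $\Gamma_\alpha$ and $\mathcal{J}$ already established in this appendix, and then to collapse the two pseudo-determinant contributions in $\ln\Gamma_\alpha$ into a single coefficient of $\mathcal{J}$ using the normalization hypothesis $A^\dagger\Sigma(A^\dagger)^T\approx I_n$.

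First, I would take the logarithm of the closed-form expression for $\Gamma_\alpha$ to split $\ln\Gamma_\alpha$ into a constant, a forward contribution and a backward contribution,
\begin{equation*}
\ln\Gamma_\alpha=\tfrac{n}{2}\ln\tfrac{2\pi}{\alpha}+\bigl(\tfrac{1}{2}-\tfrac{\alpha}{4}\bigr)\ln\mathrm{pdet}(A^T\Sigma^{-1}A)+\tfrac{\alpha}{4}\ln\mathrm{pdet}(\Sigma^{-1}),
\end{equation*}
and in parallel I would rearrange the formula for the dimension-averaged $\mathcal{J}$ to read off the identity $\ln\mathrm{pdet}(A^T\Sigma^{-1}A)=2n\mathcal{J}-2n\ln(L/\sqrt{2\pi e})$. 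Substituting this immediately converts the forward contribution into a linear function of $\mathcal{J}$, leaving the residual $\tfrac{\alpha}{4}\ln\mathrm{pdet}(\Sigma^{-1})$ as the only term on which the hypothesis has to act.

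The main obstacle is controlling this residual. Assuming $A$ is invertible so that $A^\dagger=A^{-1}$, the hypothesis $A^{-1}\Sigma(A^{-1})^T\approx I_n$ yields $\Sigma\approx AA^T$; combined with the identity $\det(A^T\Sigma^{-1}A)=\det(A)^2\det(\Sigma^{-1})$ for square invertible $A$, this relates $\ln\mathrm{pdet}(\Sigma^{-1})$ to $\ln\mathrm{pdet}(A^T\Sigma^{-1}A)$ up to the scalar correction $-2\ln|\det A|$. Under the normalization implicit in the equality regime ($|\det A|\to 1$), the residual reduces to $\tfrac{1}{2}\ln\mathrm{pdet}(A^T\Sigma^{-1}A)$, which combined with the explicit coefficient produces the factor $\tfrac{1}{2}-\tfrac{\alpha}{8}=\tfrac{1}{2}(1-\tfrac{\alpha}{4})$ in front of $\ln\mathrm{pdet}(A^T\Sigma^{-1}A)$; re-substituting the identity for $\ln\mathrm{pdet}(A^T\Sigma^{-1}A)$ then collects everything into $n(1-\alpha/4)\mathcal{J}+C$ with the advertised constant $C=\tfrac{n}{2}\ln(2\pi/\alpha)-n(1-\tfrac{\alpha}{4})\ln(L/\sqrt{2\pi e})$.

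The equality clause is then immediate for the specialization $\Sigma=I_n$, $A\in O(n)$, where $A^T\Sigma^{-1}A=I_n$ exactly, both pdets equal $1$, and each side collapses to $\tfrac{n}{2}\ln(2\pi/\alpha)$. The step I expect to be the most delicate is justifying that the scalar correction $\ln|\det A|$ is negligible under $A^\dagger\Sigma(A^\dagger)^T\approx I_n$ alone, since that hypothesis only constrains the shape of the backward covariance up to a global dilation of $A$; this is presumably why the statement is phrased as an approximation, with the Appendix~C.4 scatter plot on randomly sampled $A,\Sigma$ serving as the empirical backstop.
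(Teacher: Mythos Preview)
Your route has an arithmetic gap that matters. If you carry out the substitution you describe---using the exact identity $\ln\det(\Sigma^{-1})=\ln\det(A^T\Sigma^{-1}A)-2\ln|\det A|$ for invertible $A$ and then dropping $\ln|\det A|$---the two pseudo-determinant terms combine as
\[
\Bigl(\tfrac{1}{2}-\tfrac{\alpha}{4}\Bigr)\ln\mathrm{pdet}(A^T\Sigma^{-1}A)+\tfrac{\alpha}{4}\ln\mathrm{pdet}(A^T\Sigma^{-1}A)=\tfrac{1}{2}\ln\mathrm{pdet}(A^T\Sigma^{-1}A),
\]
not $(\tfrac{1}{2}-\tfrac{\alpha}{8})\ln\mathrm{pdet}(A^T\Sigma^{-1}A)$. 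After re-substituting your exact identity for $\ln\mathrm{pdet}(A^T\Sigma^{-1}A)$ in terms of $\mathcal{J}$, the slope you obtain is $n$, not $n(1-\alpha/4)$. So the direct-substitution path does not recover the claimed coefficient; the $\tfrac{1}{2}-\tfrac{\alpha}{8}$ you wrote down does not follow from the manipulations you listed.

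The paper's argument is structurally different and this difference is what produces the $1-\alpha/4$. It first uses the exact factorization $\det(A^T\Sigma^{-1}A)=\det(A^TA)\det(\Sigma^{-1})$ to rewrite both $\ln\Gamma_\alpha$ and $EI$ as affine functions of the two-vector $\mathbf{d}=\bigl(\tfrac{1}{2}\ln\mathrm{pdet}(A^TA),\ \tfrac{1}{2}\ln\det(\Sigma^{-1})\bigr)^T$, with coefficient vectors $\mathbf{a}=(1-\tfrac{\alpha}{2},\,1)$ and $\mathbf{b}=(1,\,1)$ respectively. It then inverts $EI=C_2+\mathbf{b}\mathbf{d}$ via the Moore--Penrose pseudo-inverse $\mathbf{b}^\dagger=(\tfrac12,\tfrac12)^T$, which amounts to replacing each component of $\mathbf{d}$ by their common average. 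Substituting back gives the slope $\mathbf{a}\mathbf{b}^\dagger=\tfrac12(1-\tfrac{\alpha}{2})+\tfrac12=1-\tfrac{\alpha}{4}$. The averaging step is exact precisely when the two components of $\mathbf{d}$ coincide, i.e.\ when $\mathrm{pdet}(A^TA)=\det(\Sigma)^{-1}$; your final paragraph correctly senses that the hypothesis $A^\dagger\Sigma(A^\dagger)^T\approx I_n$ alone does not pin this down, but the paper's device handles it by symmetrizing rather than by dropping $\ln|\det A|$.
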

\begin{proof}
If we only study a single random mapping $p(y|x)=\mathcal{N}(Ax,\Sigma)$, then $A$ may not be a full-rank square matrix. However, coarsening strategies are generally applied to Gaussian iterative systems $x_{t+1}=Ax_t+\eta_t$, where $A$ must be a square matrix to ensure the stability of the system's dimensions. So we need to set $m=n$ and $A,\Sigma\in\mathcal{R}^{n\times n}$. From Eq.(\ref{ASAEI}) and (\ref{gammagaussian}), we can obtain two approximate correlations as
\begin{eqnarray}\label{aproxEIandGamma}
		\begin{aligned}
		\ln\Gamma_\alpha&
        \simeq\frac{n}{2}\ln\left(\frac{2\pi}{\alpha}\right)+(1-\frac{\alpha}{2})\frac{1}{2}\ln{\rm pdet}(A^TA)+\frac{1}{2}\ln\det(\Sigma)^{-1}=C_1 + \textbf{a}\textbf{d},\\
        EI&\simeq n\ln\left(\frac{L}{\sqrt{2\pi e}}\right)+\frac{1}{2}\ln{\rm pdet}(A^TA)+\frac{1}{2}\ln\det(\Sigma)^{-1}=C_2 + \textbf{b}\textbf{d},
   \end{aligned}
\end{eqnarray}
where $\textbf{a}=\left((1-\frac{\alpha}{2}),1\right)$, $\textbf{b}=\left(1,1\right)$ and $\textbf{d}=\left(\frac{1}{2}\ln{\rm pdet}(A^TA),\frac{1}{2}\ln\det(\Sigma)^{-1}\right)^T$. 
The equal sign holds when $A$ is a full rank matrix as $r(A)=n$ or $\Sigma$ is an identity matrix as $\Sigma=I_n$. At the same time, we can use the generalized inverse matrix to obtain an approximation $\textbf{d}\simeq \textbf{b}^\dagger(EI-C_2)$ as $\textbf{b}^\dagger=(1/2,1/2)^T$. When ${\rm pdet}(A^TA)=\det(\Sigma)^{-1}$, the equal sign holds; the closer ${\rm pdet}(A^TA)$ and $\det(\Sigma)^{-1}$ is, the higher the degree of approximation. So we conclude
\begin{eqnarray}\label{gammaabeic}
\begin{aligned}
		\ln\Gamma_\alpha&\simeq C_1+\textbf{a}\textbf{b}^\dagger(EI-C_2)=(1-\frac{\alpha}{4})EI+C.
    \end{aligned}
\end{eqnarray}
$C_1, C_2$ and $C$ are all constants. Fig.\ref{basicdatatest}c presents the linear relationship between two indicators as $\alpha=1$.
\end{proof}

\section{Causal Emergence (CE)}

CE refers to a special property of Markov dynamics: if the system's dynamics can yield a macroscopic dynamics with larger EI under a predefined coarse-graining strategy, then the system exhibits CE. However, this definition relies on the selection of coarse-graining strategies. Hoel et al. proposed the principle of maximizing EI \cite{Hoel2013, Hoel2017, Yang2024, Liu2024} to eliminate the arbitrariness of coarse-graining. However, this approach requires solving an often intractable optimization problem to obtain the optimal coarse-graining strategy and derive CE. To overcome this difficulty, Zhang et al. \cite{Zhang2025} introduced a new measure of CE based on approximate dynamical reversibility with SVD. Independent of coarse-graining, CE can be quantified directly from the singular value spectrum of the system’s TPM. In the following section, we will extend this framework to GIS.

\subsection{Causal emergence based on maximizing effective information}
We have extended the CE framework to Gaussian iterative systems under Gaussian noise called GIS in the previous article \cite{Liu2024}, and many definitions and conclusions were drawn from this. A GIS \cite {Dunsmuir1976} is a sequence of random variables $x_t$ at different times $t$ which usually represent the dynamic evolution of a certain quantity, such as stock prices, temperature changes, traffic flows, etc. 

Based on micro-states like Eq.(\ref{micro-dynamics-appendix}), we define the macro-state $y_t=\phi(x_t)=Wx_t$, where $W\in\mathcal{R}^{k\times n}$ is the parameter of coarse-graining strategy $\phi(x_t)$. The derived macro-state dynamic is,
\begin{equation}
\label{macro-dynamics}
y_{t+1}=a_{M,0}+A_My_t+\eta_{M,t},\eta_{M,t}\sim\mathcal{N}(0,\Sigma_M),
\end{equation}
where $a_{M,0}=Wa_0$, $A_M=WAW^\dagger$, $\Sigma_M=W\Sigma W^T$ and $k<n$. In this article, $(\cdot)^\dagger$ is the Moore-Penrose generalized inverse matrix \cite{Barata2012,Horn2012} of $\cdot$.

Obtaining $\mathcal{J}$, we still have a free parameter $L$, which is artificially assumed and greatly influences the results of $\mathcal{J}$. This hyperparameter can be subtracted by calculating dimensional averaged CE as
\begin{eqnarray}\label{dimensionally_averaged_CE}
	\Delta\mathcal{J} = \mathcal{J}_M-\mathcal{J},
\end{eqnarray}
where $\mathcal{J}_M\equiv \mathcal{J}(A_M,\Sigma_M)$ is the effective information for the macro-dynamics and $\Delta\mathcal{J}$ is the degree of CE. 

After introducing CE $\Delta\mathcal{J}$ for GIS based on EI, it is obvious that one of its shortcomings is the high dependence on coarse-graining $\phi(x)=Wx$, $W\in\mathcal{R}^{k\times n}$, as 
\begin{eqnarray}\label{dimensionally_averaged}
\mathcal{J}_M\equiv \mathcal{J}(A_M,\Sigma_M)=\mathcal{J}(WAW^\dagger,W\Sigma W^T).
\end{eqnarray}
Quantifying CE by $\Delta\mathcal{J}$ requires pre-setting coarse-graining strategies $\phi$ and optimizing its parameters $W\in\mathcal{R}^{k\times n}$, $k$ is the pre-defined macro dimension as $k<n$. Then, our problem can be formulated as
\begin{eqnarray}
\begin{aligned}
\label{eq:maximize_ei_general} 
\max\quad&\Delta\mathcal{J}(W)\\
s.t\quad&W\in\mathcal{W}\subseteq\mathcal{R}^{k\times n}.
\end{aligned}
\end{eqnarray}

\subsubsection{Maximizing effective information by optimizing coarse-graining}

When optimizing $W$, we must specify a subset $\mathcal{W}$ of $\mathcal{R}^{k\times n}$ in advance as a limited range to avoid $\mathcal{J}_M$ being divergent in Eq.(\ref{eq:maximize_ei_general}), which is very subjective and complicated. At the end of \cite{Liu2024}, a coarse-grained strategy that ensures stable information entropy is presented as $W^\dagger=W^T$. In this case, we can optimize $W$ to obtain optimal macro-states and theoretical maximum $\Delta\mathcal{J}^{*}$ in \cite{Liu2024} as
\begin{eqnarray}\label{DeltaJstar}
\Delta\mathcal{J}^{*}=\frac{1}{2k}\sum_{i=1}^{k}\ln s_i-\frac{1}{2n}\sum_{i=1}^{n}\ln s_i,
\end{eqnarray}
where $s_i,i=1,\cdots,n$, are the singular values of $A^T\Sigma^{-1}A$. The pre-setting and optimization of $W$ increases computational complexity and reduces accuracy. Even after the optimization of $W$, it is difficult to obtain the optimal $\Delta\mathcal{J}^{*}$ in Eq.(\ref{DeltaJstar}) numerically based on data. We need to find a way to explore the potential of CE in GIS directly. 

It is worth noting that Erik Hoel's effective information method suffers from a flaw in its calculation of non-degeneracy. This definition of degeneracy essentially involves iterating the Markov dynamics to obtain $x_{t+1}$ when $x_t$ is known, and then calculating the determinism of the state $\tilde{x}_t$ at the previous moment, rather than directly inferring the previous moment $x_t$ from the knowledge of the next moment $x_{t+1}$. This lack of effective simplification makes it difficult to detect the problem when calculating the discrete TPM matrix EI \cite{Hoel2013, Hoel2016, Hoel2017}. However, this flaw is particularly evident in Eq.(\ref{DegeneracyDeterminismEI2}), which first multiplies the covariance matrix of the forward dynamics and then divides it by the covariance matrix of the inverse dynamics. Consequently, when the determinism and non-degeneracy are added together, the covariance matrix of the forward dynamics is eliminated. When $W$ is a unitary matrix, the SVD decomposition matrix $A^T\Sigma^{-1}A$ is actually the inverse covariance matrix of the inverse dynamics. The coarse-graining strategy obtained by maximizing EI ignores the covariance of the forward dynamics in the original dynamics, resulting in a suboptimal coarse-graining of the forward dynamics in the new macro dynamics. Therefore, blindly maximizing effective information as in \cite{Chvykov2020,Liu2024} is not always reasonable.\\

\textbf{Proof of Equation (\ref{dimensionally_averaged_CE})}
\begin{proof}
	For the micro dynamical systems like Equation (\ref{micro-dynamics-appendix}) and macro dynamical systems like Equation (\ref{macro-dynamics}) after coarse-graining, we can calculate the micro effective information $\mathcal{J}_m$ and macro effective information $\mathcal{J}_M$ of two kinds of dynamical systems separately. In this way, causal emergence
	\begin{eqnarray}\label{dCE}
		\begin{aligned}
			\Delta\mathcal{J}&=\mathcal{J}_M-\mathcal{J}_m=\frac{EI_M}{k}-\frac{EI_m}{n}\\
			&=\frac{1}{k}\ln\displaystyle\frac{|\det(A_M)|L^k}{\displaystyle \det(\Sigma_M)^\frac{1}{2}(2\pi e)^\frac{k}{2}}-\frac{1}{n}\ln\displaystyle\frac{|\det(A)|L^n}{\displaystyle \det(\Sigma)^\frac{1}{2}(2\pi e)^\frac{n}{2}}\\
			&=\frac{1}{k}\mathop{\left(\ln\displaystyle\frac{1}{\displaystyle \det(\Sigma_M)^\frac{1}{2}(2\pi e)^\frac{k}{2}}+\ln\displaystyle\left(|\det(A_M)|L^k\right)\right)}_{Determinism Information \quad Degeneracy Information }\\&-\frac{1}{n}\left(\ln\displaystyle\frac{1}{\displaystyle \det(\Sigma)^\frac{1}{2}(2\pi e)^\frac{n}{2}}+\ln\displaystyle\left(|\det(A)|L^n\right)\right)\\
			&=\mathop{\ln\frac{|\det(WAW^\dagger)|^\frac{1}{k}}{|\det(A)|^\frac{1}{n}}}_{Degeneracy Emergence}+\mathop{\ln\frac{|\det(\Sigma)|^\frac{1}{2n}}{|\det(W\Sigma W^{T})|^\frac{1}{2k}}}_{Determinism Emergence}.
		\end{aligned}
	\end{eqnarray}
\end{proof}

\textbf{Proof of Equation (\ref{DeltaJstar})}
\begin{proof}
	According to Jensen's inequality, we can obtain inequalities
	\begin{eqnarray}\label{CEG1}
		\begin{aligned}
			|\det(W\Sigma W^\dagger)|^{-\frac{1}{2}}=\left[\sum \left(\gamma_m \prod\kappa_i\right)\right]^{-\frac{1}{2}}\leq\left[\sum \gamma_m\left( \prod\kappa_i\right)^{-\frac{1}{2}}\right]=|\det(W\Sigma^{-\frac{1}{2}} W^\dagger)|,
		\end{aligned}
	\end{eqnarray}
    if and only if the eigenvalues are equal, the equal sign holds. By utilizing this inequality and combining it with the relevant properties of linear algebra, we can conclude that
	\begin{eqnarray}\label{CEG}
	\begin{aligned}
		\Delta\mathcal{J}&=\ln\frac{|\det(WAW^\dagger)|^\frac{1}{k}}{|\det(W\Sigma W^\dagger)|^\frac{1}{2k}}-\ln\frac{|\det(A)|^\frac{1}{n}}{|\det(\Sigma)|^\frac{1}{2n}}\\
		&= \frac{1}{k}\ln(|\det(W\Sigma W^\dagger)|^{-\frac{1}{2}}|\det(WAW^\dagger)|)-\frac{1}{n}\ln|\Sigma^{-\frac{1}{2}}A|\\
		&\leq \frac{1}{k}\ln(|\det(W\Sigma^{-\frac{1}{2}} W^\dagger)||\det(WAW^\dagger)|)-\frac{1}{n}\ln|\Sigma^{-\frac{1}{2}}A|\\
		&= \frac{1}{k}\ln|\det(W\Sigma^{-\frac{1}{2}} W^\dagger WAW^\dagger)|-\frac{1}{n}\ln|\Sigma^{-\frac{1}{2}}A|\\
		_{(\Sigma=VKV^T, A=V\Lambda V^T)}&\leq \frac{1}{k}\ln|\det(W\Sigma^{-\frac{1}{2}} AW^\dagger)|-\frac{1}{n}\ln|\Sigma^{-\frac{1}{2}}A|\\
		_{(W=(\tilde{W}_k,O_{k\times{(n-k)}})V^{T})}&\leq\frac{1}{k}\sum_{i=1}^{k}\ln\displaystyle|\tilde{d}_i|-\frac{1}{n}\sum_{i=1}^{n}\ln\displaystyle|\tilde{d}_i|\\
        &=\frac{1}{2k}\sum_{i=1}^{k}\ln s_i-\frac{1}{2n}\sum_{i=1}^{n}\ln s_i,
	\end{aligned}
\end{eqnarray}
When $A$ and $\Sigma$ share the same $n$ eigenvectors, $|\tilde{d}_1|\geq\dots\geq|\tilde{d}_n|$are $n$ eigenvalues of $A\Sigma^{-1/2}$. That is to say, $\Sigma=VKV^T$ and $A=V\Lambda V^T$, where $V$ is the matrix formed by the shared eigenvectors, and $K$ and $\Lambda$ are the eigenvalues for $A$ and $\Sigma$, respectively. $\kappa=(\kappa_1,\dots,\kappa_n)$ and $\lambda=(\lambda_1,\dots,\lambda_n)$. The eigenvalues of $\Sigma^{-1/2}A$ are equal to the product of the eigenvalues of $A$ and $\Sigma^{-1/2}$. $W=(\tilde{W}_k,O_{k\times{(n-k)}})V^{T}$ represents that both $A$ and $\Sigma^{-1/2}$ retain the eigenvalues with the highest norm, the second and third unequal signs can also be taken as equal signs. $s_i,i=1,\cdots,n$, are the singular values of $A^T\Sigma^{-1}A$, we can directly get $\Delta\mathcal{J}^{*}$ in Eq.(\ref{DeltaJstar}).
\end{proof}

\begin{figure}[htbp]
	\centering
\begin{minipage}[c]{0.4\textwidth}
		\centering
		\includegraphics[width=1\textwidth]{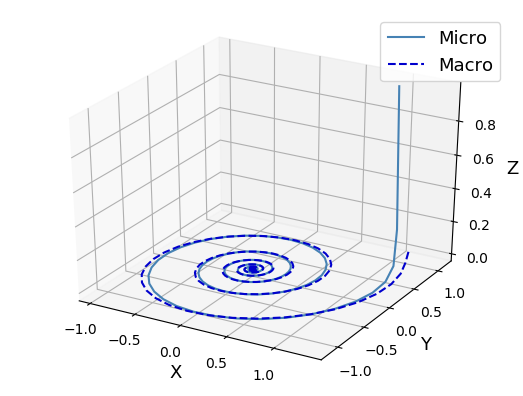}
		\centerline{(a)}
	\end{minipage}
        \begin{minipage}[c]{0.4\textwidth}
		\centering
		\includegraphics[width=1\textwidth]{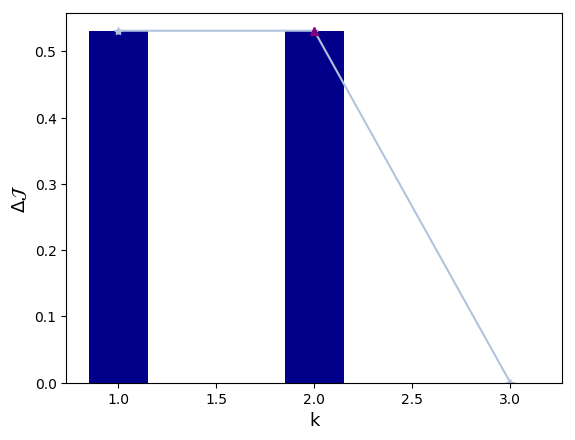}
		\centerline{(b)}
	\end{minipage}	
	\caption{Experimental results of spiral rotating. Assuming the direction vector of the rotation axis is $u_0=(0,0.1,1)^T$, after normalization as $u=u_0/||u_0||$, the rotation axis $u$ can be obtained. At the same time, we specify the rotation angle $\theta=\pi/16$ and $\eta=0$ as the dimension averaged Shannon entropy of both $x_t$ and $y_t$ remains the same. Through $\theta$ and $u$, we can get the rotation matrix $R$. By changing $\Psi$, we will obtain causal emergence in different forms. 
    in Fig.\ref{fig:rotation}c, when $\Psi={\rm diag}(0.99,0.97,0.2)$ and $x_0=(1,1,1)^T$, the path of $x_t$ will first shrink to a plane perpendicular to $u$. In \ref{fig:rotation}d, we project $x_t$ onto the plane as the macro-state, where $k=2$, $\Delta\mathcal{J}=0.5295$. The reason why we do not take $k=1$ here is that the more dimensionality we reduce, the greater the information loss. Therefore, in the absence of further improvement in causal emergence, we do not need to continue the dimensionality reduction of the system. (a) The spiral curve when $\Psi={\rm diag}(0.99,0.97,0.2)$ and $x_0=(1,1,1)^T$, the trajectory of $x_t$ is compressed to a plane perpendicular to $u$ at the initial stage. (b) We can project $x_t$ on the plane as a macro-state, where $k=2$, $\Delta\mathcal{J}=0.5295$.}
 	\label{fig:rotation}
\end{figure}
\subsubsection{Selection of $k$}
In general, we choose the dimension macro $k$ by comparing the maximum $\Delta\mathcal{J}^{*}$ after optimization under different dimensions $k$. Then, our problem can be formulated as
\begin{eqnarray}
\begin{aligned}
\label{eq:maximize_ei} 
\max\quad&\Delta\mathcal{J}(W)\\
s.t\quad&W\in\mathcal{W}\subseteq\mathcal{R}^{k\times n}\\ &k=1,\cdots,n
\end{aligned}
\end{eqnarray}
It can be seen that if $k$ is not specified in advance, in the process of maximizing causal emergence, $k$ will also be optimized along with $W$ to obtain the result. So when we optimize $\Delta\mathcal{J}(W)$ under different $k$, we can get a sequence $\{\Delta\mathcal{J}_i^*\}$ as $i=1,\cdots,n$. Equally, our problem can also be presented as solving
\begin{eqnarray}
k=\arg\max_{i}\Delta\mathcal{J}_i^*.
\end{eqnarray}
to select the final macro dimension $k$. If there are two dimensions with equal $\Delta\mathcal{J}^*$, to minimize information loss, we choose the slightly higher dimension.  For example, in the rotation model in Fig.\ref{fig:rotation}, we choose $k=2$ as the macro dimension.

\subsection{Causal emergence based on approximate dynamical reversibility with SVD}

We have now demonstrated that the approximate dynamical reversibility of GIS can be quantified directly based on SVD of $\Sigma^{-1}$ and $A^T\Sigma^{-1}A$, and has an approximately linear relationship with maximized EI. We also know that traditional EI-based CE relies on coarse-graining strategies and macro dynamics, but after maximizing EI when $W$ is a unitary matrix, CE can also be obtained by eliminating the singular values of $A^T\Sigma^{-1}A$. So, can we directly define a metric for CE based on the singular value spectrum, bypassing the search for the optimal coarse-graining solution $W$? This is the basic idea of the quantitative definition of SVD-based CE.

\subsubsection{SVD-based CE for TPM}\label{ReversibilityTPM}
According to Zhang's previous work \cite{Zhang2025}, since $\Gamma_{\alpha}$ is size-dependent, we need to normalize them by dividing the size of $P$ as $\gamma_{\alpha}=\Gamma_{\alpha}/N$, to characterize the size-independent approximate dynamical reversibility. There is an integer $i\in[1,N)$ such that $\zeta_i>\epsilon$, then there is \textbf{vague causal emergence} with the level of vagueness $\epsilon$ occurred in the system as
    \begin{equation}
        \label{eq:degree_vague_emergence}
        \Delta\Gamma_{\alpha}(\epsilon) = \frac{\sum_{i=1}^{r_{\epsilon}}\zeta_i^{\alpha}}{r_{\epsilon}}-\frac{\sum_{i=1}^{N}\zeta_i^{\alpha}}{N},
    \end{equation}
where $r_{\epsilon}=\max\{i|\zeta_i>\epsilon\}$. $\epsilon$ can be selected according to the relatively clear cut-offs in the spectrum of singular values. When $\epsilon=0$, if $r\equiv rank(P)<N$ then \textbf{clear causal emergence} occurs, and the degree is $\Delta\Gamma_{\alpha}(0)$. This definition is independent of any coarse-graining strategy and reflects the inherent property of the Markov chain.

Finally, a concise coarse-graining method based on the SVD of $P$ to obtain a macro-level reduced TPM can be obtained by projecting the row vectors $P_i,\forall i\in [1,N]$ in $P$ onto the sub-spaces spanned by the eigenvectors of $P\cdot P^T$ such that the major information of $P$ is conserved, as well as $\Gamma_{\alpha}$ is kept unchanged.


\subsubsection{SVD-based CE for GIS}
From the definition of dimensional average reversible information $\hat\gamma_\alpha$, we can obtain that its final value depends on two spectra, one is the spectrum $s_1\geq\cdots\geq s_n$ of $A^T\Sigma^{-1}A$, and the other is the spectrum $\kappa_1\geq\cdots\geq \kappa_n$ of $\Sigma^{-1}$. Therefore, following the quantitative equation of CE (Eq.(\ref{DeltaJstar})), we can define CE based on approximate dynamical reversibility from the distribution of these two spectra.

One of the main contributions of this article is the new quantification of CE for GIS based on approximate dynamical reversibility under SVD following the definition of TPM in \cite{Zhang2025}, which depends only on the system's parameters and does not require the optimization of coarse-graining strategies. Because $\Gamma_{\alpha}$ is calculated based on generalized determinants ${\rm pdet}(A^T\Sigma^{-1}A)$, removing zero singular values will not change the first term of $\gamma_ {\alpha}$. 

Because $\Gamma_ {\alpha}$ is calculated based on generalized determinants ${\rm pdet}(A^T\Sigma^{-1}A)$, removing zero singular values will not change $\gamma_ {\alpha}$.
If $r\equiv\min\{r_s,r_\kappa\}<n$, we can directly replace $n$ with $r$ to obtain the macro-state dimension averaged reversibility information as
 \begin{eqnarray}\label{ED}
		\begin{aligned}
		\gamma_{\alpha}(0)&=\frac{1}{r}(\frac{1}{2}-\frac{\alpha}{4})\sum_{i=1}^{r_s}\ln s_i+\frac{\alpha}{4r}\sum_{i=1}^{r_\kappa}\ln\kappa_i.
   \end{aligned}
\end{eqnarray}

\begin{definition}
    \label{dfn:clear_emergence}
    For GIS like Eq.(\ref{micro-dynamics-appendix}), if $r\equiv\min\{r_s,r_\kappa\}<n$ then \textbf{clear causal emergence} occurs in this system. The degree of causal emergence is 
    \begin{eqnarray}
\label{eq:degree_clear_emergence}
\begin{aligned}
\Delta\Gamma_\alpha(0)&=\gamma_\alpha(0)-\gamma_\alpha\\
&=\frac{1}{r}(\frac{1}{2}-\frac{\alpha}{4})\sum_{i=1}^{r_s}\ln s_i+\frac{\alpha}{4r}\sum_{i=1}^{r_\kappa}\ln\kappa_i\\&-\frac{1}{n}(\frac{1}{2}-\frac{\alpha}{4})\sum_{i=1}^{r_s}\ln s_i-\frac{\alpha}{4n}\sum_{i=1}^{r_\kappa}\ln\kappa_i.  
\end{aligned}
        \end{eqnarray}
\end{definition}
Fig.\ref{basicdatatest}f and g shows the distribution of $\Delta\mathcal{J}_R^{*}$ and $\Delta\Gamma_\alpha(0)$ with $a_2$ and  $\sigma_2$ in two-dimensional iterative systems as
 \begin{eqnarray}\label{ED}
		\begin{aligned}
		A=\begin{pmatrix}
		    a_1&0\\
                0&a_1
		\end{pmatrix},
        \Sigma=\begin{pmatrix}
		    \sigma^2_1&0\\
                0&\sigma^2_2
		\end{pmatrix}.
   \end{aligned}
\end{eqnarray}
It can be seen that when $a_2$ is large and $\sigma_2$ is small, there will be larger values for both, and their distributions are very close.

However, clear CE is difficult to directly detect in reality because it is unlikely for the parameter matrix derived from the data to have strictly zero singular values. So it is difficult to find a strict non-full rank matrix, even if the singular value of $A^T\Sigma^{-1}A$ is quite small and cannot be directly treated as a strict non-full rank matrix. On the other hand, $\Sigma^{-1}$ may have very small singular values. So more often than not, we need to establish a lower bound for singular values as $\epsilon$, where only singular values $s_i,\kappa_i\geq\epsilon$ are considered effective. Due to the information loss caused by deleted singular values $s_i,\kappa_i\leq\epsilon$, $\epsilon$ also determines the upper limit of the system's loss of information. The number of effective singular values is called the effective rank $r_\epsilon(A^T\Sigma^{-1}A)$ and $r_\epsilon(\Sigma^{-1})$. Based on global effective rank of the whole system as $r_\epsilon=\min\{r_\epsilon(A^T\Sigma^{-1}A),r_\epsilon(\Sigma^{-1})\}$, we can define vague macro reversibility as
 \begin{eqnarray}\label{ED}
		\begin{aligned}
		\gamma_{\alpha}(\epsilon)&=\frac{1}{r_\epsilon}(\frac{1}{2}-\frac{\alpha}{4})\sum_{i=1}^{r_\epsilon}\ln s_i+\frac{\alpha}{4r_\epsilon}\sum_{i=1}^{r_\epsilon}\ln\kappa_i
   \end{aligned}
\end{eqnarray}

\begin{definition}
    \label{dfn:vague_emergence}
    For GIS like Eq.(\ref{micro-dynamics-appendix}), suppose the singular values of $A^T\Sigma^{-1}A$ are $s_1\geq s_2\geq\cdots\geq s_{r_s}\geq 0$. For a given real value $\epsilon\in [0,s_1]$, if there is an integer $i\in[1,r_s]$ such that $s_i\equiv s_i(A^T\Sigma^{-1}A)>\epsilon$, then there is \textbf{vague causal emergence} with the level of vagueness $\epsilon$ occurred in the system. We define
    \begin{eqnarray}\label{ED}
		\begin{aligned}
r_\epsilon\equiv\min\{r_\epsilon(A^T\Sigma^{-1} A),r_\epsilon(\Sigma^{-1})\}
   \end{aligned}
\end{eqnarray}
    and the degree of causal emergence is:
    \begin{eqnarray}
\label{eq:degree_vague_emergence}
        \begin{aligned}
\Delta\Gamma_\alpha(\epsilon)&=\gamma_\alpha(\epsilon)-\gamma_\alpha\\
&=\frac{1}{r_\epsilon}(\frac{1}{2}-\frac{\alpha}{4})\sum_{i=1}^{r_\epsilon}\ln s_i+\frac{\alpha}{4r_\epsilon}\sum_{i=1}^{r_\epsilon}\ln\kappa_i\\&-\frac{1}{n}(\frac{1}{2}-\frac{\alpha}{4})\sum_{i=1}^{r_s}\ln s_i-\frac{\alpha}{4n}\sum_{i=1}^{r_\kappa}\ln\kappa_i
        \end{aligned}
        \end{eqnarray}
    where $r_{\epsilon}=\max\{i|s_{i}>\epsilon\}$.
\end{definition}
Fig.\ref{basicdatatest}j and i shows the distribution of $\Delta\mathcal{J}^{*}$ and $\Delta\Gamma_\alpha(\epsilon)$ with $a_2$ and  $\sigma_2$ in two-dimensional iterative systems and their distributions are very close.

\subsection{Equivalence between $\Delta\Gamma_\alpha$ and $\Delta\mathcal{J}^{*}$}\label{sec:connections}
The third important contribution of this article is the rigorous proof of the positive correlation between $\Delta\Gamma_\alpha$ and $\Delta\mathcal{J}^{*}$. 

\begin{thm}\label{thm.dgammaanddJ}
(Correlation between $\Delta\Gamma_\alpha(\epsilon)$ and $\Delta\mathcal{J}^{*}$): When the backward dynamics $p(x_t|x_{t+1})=\mathcal{N}(A^\dagger x_{t+1}-A^\dagger a_0,A^\dagger\Sigma(A^\dagger)^T)$ is close to a normalized normal distribution as $A^\dagger\Sigma(A^\dagger)^T\approx I_n$, $\Delta\Gamma_\alpha(\epsilon)$ and $\Delta\mathcal{J}^{*}$ are approximately linear and positively correlated as
\begin{eqnarray}
\label{eq:dgammaanddJ-appendix}
        \begin{aligned}
\Delta\Gamma_\alpha(\epsilon)\simeq(1-\frac{\alpha}{4})\Delta\mathcal{J}^*.
        \end{aligned}
        \end{eqnarray}
The equal sign holds when the inverse dynamics $p(x_t|x_{t+1})=\mathcal{N}(A^\dagger x_{t+1}-A^\dagger a_0,A^\dagger\Sigma(A^\dagger)^T)$ is a normalized normal distribution as $A^\dagger\Sigma(A^\dagger)^T=A^{-1}\Sigma(A^{-1})^T=I_n$. 
\end{thm}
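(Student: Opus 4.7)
The plan is to reduce the theorem to two applications of the nearly linear relationship from Theorem~Appendix~C.5 (Eq.~(\ref{gammaabeic}))---once to the full-dimensional system and once to the rank-$k$ truncated system---and then subtract so that the additive constant drops out. Recall that under $A^T\Sigma^{-1}A\approx I_n$, Theorem~Appendix~C.5 gives $\gamma_\alpha\simeq(1-\alpha/4)\mathcal{J}+C$, where $C=(1-\alpha/4)\ln(L/\sqrt{2\pi e})$ depends only on $L$ and $\alpha$, not on $A$ or $\Sigma$.

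First I would observe that $\gamma_\alpha(\epsilon)$ in Eq.~(\ref{eq:degree_vague_emergence}) has exactly the structural form of $\gamma_\alpha$ from Lemma~Appendix~C.2, with the singular-value sums restricted to the top $r_\epsilon=k$ entries, and that by Eq.~(\ref{DeltaJstar}) one has $\mathcal{J}^{*}=\tfrac{1}{2k}\sum_{i=1}^{k}\ln s_i+\ln(L/\sqrt{2\pi e})$, i.e. the dimensional-averaged EI of that same rank-$k$ truncated system; the coincidence with the EI-maximizing macro system comes from the fact, established in \cite{Liu2024}, that the optimal coarse-graining retains precisely the top eigendirections of $A^T\Sigma^{-1}A$. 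Interpreting the truncated system as a legitimate macro dynamics and re-applying Theorem~Appendix~C.5 yields $\gamma_\alpha(\epsilon)\simeq(1-\alpha/4)\mathcal{J}^{*}+C$ with the same constant $C$, so subtracting the two relations gives
\begin{equation*}
\Delta\Gamma_\alpha(\epsilon)=\gamma_\alpha(\epsilon)-\gamma_\alpha\simeq(1-\tfrac{\alpha}{4})(\mathcal{J}^{*}-\mathcal{J})=(1-\tfrac{\alpha}{4})\Delta\mathcal{J}^{*}.
\end{equation*}

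To settle the equality case, I would specialize to $A^T\Sigma^{-1}A=I_n$: then every $s_i=1$, so the $s$-sums in both $\Delta\Gamma_\alpha(\epsilon)$ and $\Delta\mathcal{J}^{*}$ vanish, and the factorization $\det(A^T\Sigma^{-1}A)=\det(A^TA)\det(\Sigma^{-1})$ collapses the Moore--Penrose linearization used in the proof of Theorem~Appendix~C.5 to an exact identity. Both sides of Eq.~(\ref{eq:dgammaanddJ-appendix}) then reduce to the same expression in the $\kappa_i$, confirming that $\simeq$ becomes $=$.

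The hard part will be justifying the second application of Theorem~Appendix~C.5 after truncation. Its hypothesis is $A^T\Sigma^{-1}A\approx I_n$, so after truncation one needs an analogous hypothesis for the effective matrices $A_k,\Sigma_k$ obtained via the SVD-based coarse-graining of Section~\ref{Coarse-graining}. The technical content is that the leading singular subspaces of $A^T\Sigma^{-1}A$ and $\Sigma^{-1}$ must align sufficiently for the truncated pseudo-determinants to factor as ${\rm pdet}(A_k^T\Sigma_k^{-1}A_k)\simeq\det(A_k^T A_k)\det(\Sigma_k^{-1})$, which is exactly what makes $C$ the same in both applications. I would address this via a perturbation argument on the retained subspace, noting that the discarded singular values are bounded by $\epsilon$ and therefore contribute only higher-order corrections; the numerical verification in Appendix~D.3 provides empirical support that these residuals are negligible in the regime of interest.
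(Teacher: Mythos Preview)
Your proposal is correct and follows essentially the same route as the paper's proof: the paper also decomposes both $\Delta\mathcal{J}^{*}$ and $\Delta\Gamma_\alpha(\epsilon)$ via the factorization $s_i\simeq a_i\kappa_i$ (the content of the hypothesis $A^T\Sigma^{-1}A\approx I_n$) and then invokes ``like Eq.~(\ref{gammaabeic})'' to obtain the linear relation---which is precisely your ``apply Theorem~C.5 twice and subtract.'' Your explicit flagging of the second application (that the truncated system must inherit the approximation $s_i\simeq a_i\kappa_i$) matches the paper's one-line assumption that ``if the coupling degree between $A$ and $\Sigma^{-1}$ spaces is high\ldots the macro-state will also satisfy the same approximation properties,'' so there is no substantive gap between the two arguments.
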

\begin{proof}
We can compare $\Delta\mathcal{J}$ and $\Delta\Gamma_\alpha(\epsilon)$. When $W^{\dagger}=W^T$, $r=r_s$ is the rank of $A^T\Sigma^{-1}A$, the CE of
linear Gaussian iterative systems satisfy
   \begin{eqnarray}\label{JandJ^star}
   \begin{aligned}
\Delta\mathcal{J}&=\frac{1}{2k}\ln\det(A_M^T\Sigma^{-1}_MA_M)-\frac{1}{2n}\ln{\rm pdet}(A^T\Sigma^{-1}A)\\
&\leq\frac{1}{2k}\sum_{i=1}^{k}\ln\displaystyle s_{i}-\frac{1}{2n}\sum_{i=1}^{r}\ln\displaystyle s_{i}\equiv\Delta\mathcal{J}^{*}.
   \end{aligned}
\end{eqnarray} 
Similar to Eq.(\ref{aproxEIandGamma}), when $A$ approximates full rank or $\Sigma\approx\sigma^2I_n$, $\sum_{i=1}^{r}\ln s_{i}\simeq\sum_{i=1}^{r}\ln a_i+\sum_{i=1}^{n}\ln\kappa_i$. If the coupling degree between $A$ and $\Sigma^{-1}$ spaces is high as $s_i=a_i\kappa_i, i=1,\dots,n$, the macro-state will also satisfy the same approximation properties. In this way, when $k=r_\epsilon$, $m=n$ and $r=r_s$, we can obtain two approximate forms of CE as
\begin{eqnarray}\label{JandJ^star}
   \begin{aligned}
\Delta\mathcal{J}^{*}&\simeq\frac{1}{2k}\sum_{i=1}^{k}\ln\displaystyle a_{i}-\frac{1}{2n}\sum_{i=1}^{r}\ln\displaystyle a_{i}+\frac{1}{2k}\sum_{i=1}^{k}\ln\displaystyle \kappa_{i}-\frac{1}{2n}\sum_{i=1}^{n}\ln\displaystyle \kappa_{i},\\
\Delta\Gamma_\alpha(\epsilon)&\simeq\frac{1}{2r_\epsilon}(1-\frac{\alpha}{2})\sum_{i=1}^{k}\ln a_i-\frac{1}{2n}(1-\frac{\alpha}{2})\sum_{i=1}^{r}\ln a_i+\frac{1}{2r_\epsilon}\sum_{i=1}^{k}\ln\kappa_i-\frac{1}{2n}\sum_{i=1}^{r}\ln\kappa_i,
   \end{aligned}
\end{eqnarray} 
where $a_1\geq a_2\geq\dots\geq a_r $ are singular values of $A^TA$. The equal sign holds when $r(A)=n$ and $s_i=a_i\kappa_i$ or $\Sigma=\sigma^2I_n$. 

Based on the monotonicity of elementary functions and matrix theory like Eq.(\ref{gammaabeic}), we can conclude that $\Delta\Gamma_\alpha(\epsilon)$ and $\Delta\mathcal{J}^{*}$ are approximately linear and positively correlated as Theorem \ref{thm.dgammaanddJ} shows. The equal sign is established when (1) $\Sigma = \sigma^2I_n$ and (2) ${\rm pdet}(A^TA)=\det(\Sigma)^{-1}$. Otherwise, 3 conditions must be met together: (1) $r(A)=n$, (2) $s_i=a_i\kappa_i, i=1,\dots,n$, (3) ${\rm pdet}(A^TA)=\det(\Sigma)^{-1}$. If the model approaches either of the above two sets of conditions, the approximation is effective. Fig.\ref{basicdatatest}f,g and j presents the linear relationship between $\Delta\mathcal{J}^{*}$ and $\Delta\Gamma_\alpha(0)$. Fig.\ref{basicdatatest}h,i and l presents the linear relationship between $\Delta\mathcal{J}^{*}$ and $\Delta\Gamma_\alpha(\epsilon)$ as $\alpha=1$. We can see that the scatter points are very close to the slope and theoretical value of the curve.
\end{proof}

From a logical perspective, our definition of CE is actually an approximate necessary condition for maximizing EI-based CE as defined by Hoel et al. \cite{Hoel2013,Hoel2017}. In other words, for any Markov dynamics, if we want to find a coarse-graining strategy that maximizes EI, we can actually perform SVD on the dynamics first. For example, the discussion at the end of \cite{Liu2024} on the case where $W$ is a unitary matrix actually transforms the coarse-graining strategy into a screening of singular values of the matrix $A^T\Sigma^{-1}A$, which is part of the calculation of SVD-based CE, namely the SVD of the inverse covariance matrix $A^T\Sigma^{-1}A=USU^T$ of backward dynamics. If the singular values $S={\rm diag} (s_1, \cdots, s_n)$, and the orthogonal singular vector matrix $U=(u_1, \cdots, u_n)$, in order to preserve the maximum $k$ singular values, we only need to preserve the dimensions corresponding to these singular vectors $U_1=(u_1, \cdots, u_k)$ and discard $U_2=(u_{k+1}, \cdots, u_n)$. Due to the orthogonality of $U$, we can directly obtain $U_1^TA^T\Sigma^{-1}AU_1 =U_1^TUSU^TU_1={\rm diag}(s_1,\cdots,s_k)$. It can be seen that we only need to multiply the coarse-graining parameter matrix $W=U_1^T$ by the original variables to make the inverse covariance matrix $WA^T\Sigma^{-1}AW^T$ of the macroscopic backward dynamics become a diagonal matrix that only retains $s_1,\cdots,s_k$. The singular vectors $u_1,\cdots,u_k$ corresponding to those larger singular values actually correspond to the optimal coarse-graining strategy we are looking for. 

Although there is a high likelihood that the two frameworks overlap, they are not entirely equivalent because the coarse-graining strategy obtained by maximizing EI ignores the covariance of forward dynamics. So, we propose an algorithm that constructs a coarse-graining strategy based on SVD, considering both the inverse covariance matrices $\Sigma^{-1}$ and $A^T\Sigma^{-1}A$ of the forward and backward dynamics, and attempts to preserve the maximum singular values of the two matrices as much as possible. In this way, it can ensure that the coarse-graining strategy matrix $W$ is unitary, and only the singular values of the inverse covariance matrix will be screened without changing the singular values, so that the probability space retained by the macro dynamics and the corresponding micro dynamics has the same probability distribution, and it can also ensure that both forward and backward dynamics are taken into account in the calculation of the coarse-graining strategy at the same time without only focusing on one side.

\begin{figure}[htbp]
    \centering
\includegraphics[width=1\textwidth,trim=10 20 50 0, clip]
{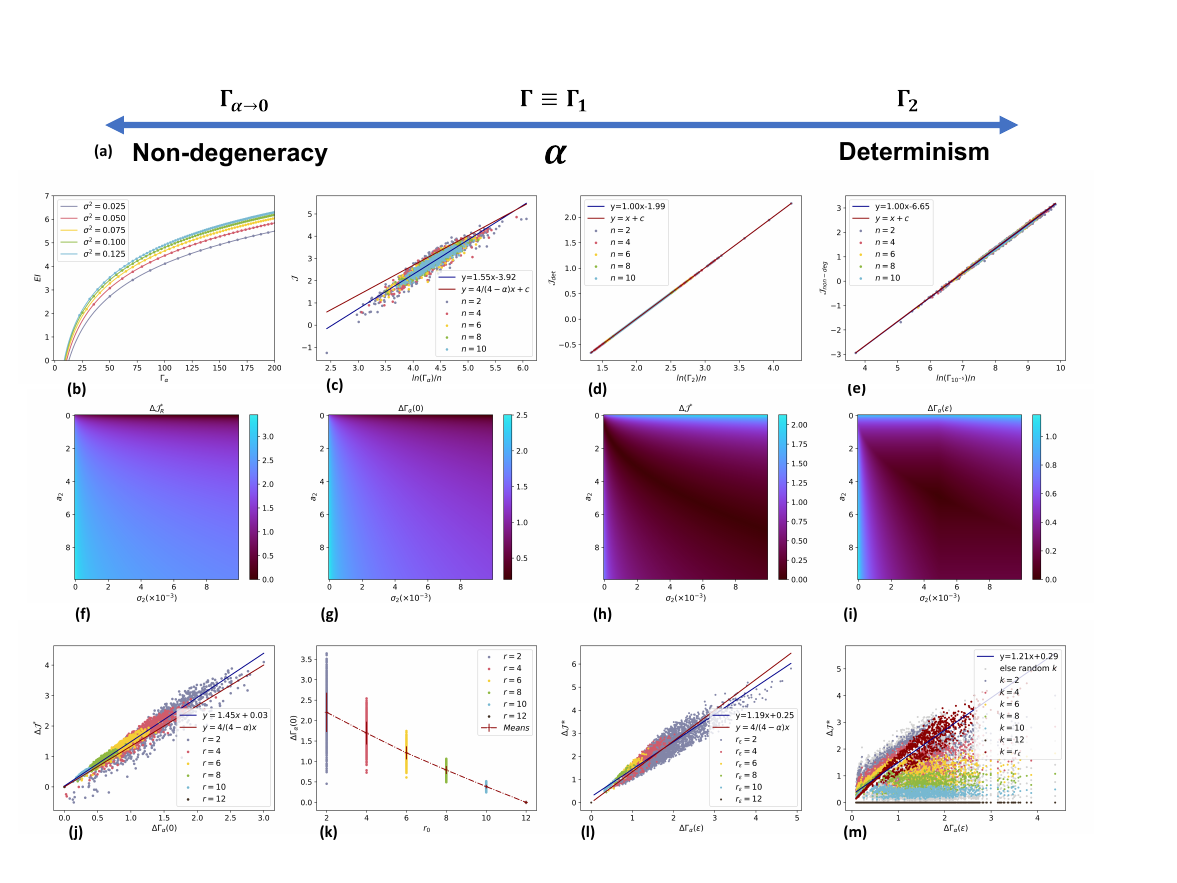}
    \caption{(a) $\alpha=1$ is often chosen to balance $\Gamma_\alpha$'s emphasis on both determinism and degeneracy. For $\alpha<1$, $\Gamma_\alpha$ tends to capture more non-degeneracy of $p(x_{t+1}|x_t)$. For $\alpha> 1$, $\Gamma_\alpha$ emphasizes the determinism. (b) The logarithmic relationship between $EI$ and $\Gamma_\alpha$. It can be seen that for different noise covariances $\sigma$, as $\Gamma_\alpha$ increases, $EI$ shows an approximate logarithmic growth relationship with $\Gamma_\alpha$ as $\alpha=1$. (c) The linear relationship between two indicators $\mathcal{J}$ and $\Gamma_\alpha$ as $\alpha=1$. (d) the linear relationships between determinism to $\Gamma_\alpha$ as $\alpha\to 2$. (e) the linear relationships between non-degeneracy to $\Gamma_\alpha$ as $\alpha\to 0$. The dimensional average reversible information $\ln\Gamma_\alpha$ and dimensional average effective information $\mathcal{J}$, as well as their corresponding determinism and non-degeneracy, are all linearly related. (f), (g), (h) and (i) shows the distribution of $\Delta\mathcal{J}_R^{*}$,$\Delta\Gamma_\alpha(0)$,$\Delta\mathcal{J}^{*}$ and $\Delta\Gamma_\alpha(\epsilon)$ with $a_2$ and  $\sigma_2$ in two-dimensional iterative systems. (j) The linear relationship between $\Delta\mathcal{J}_R^{*}$ and $\Delta\Gamma_\alpha(0)$ as $\alpha=1$. (k) $\Delta\Gamma_\alpha(0)$ under different fixed $r$. (l) The linear relationship between $\Delta\mathcal{J}^{*}$ and $\Delta\Gamma_\alpha(\epsilon)$ as $\alpha=1$. (l) The relationship between $\Delta\mathcal{J}^{*}$ and $\Delta\Gamma_\alpha(\epsilon)$ as $\alpha=1$ under different $k$ and $r_\epsilon$.}
\label{basicdatatest}
\end{figure}

\section{Coarse-graining strategy based on SVD}\label{Coarse-graining}
Firstly, we need to review the coarse-grained strategy based on SVD for discrete TPM, which is mainly based on the singular vectors corresponding to the largest few singular values.

\subsection{Coarse-graining strategy based on SVD of
TPM}
\label{sec:coarse-graining}
Although clear or vague CE phenomena can be defined and quantified without coarse-graining, a simpler coarse-grained description for the original system is needed to compare with the results derived from EI. Therefore, we also provide a concise coarse-graining method based on the singular value decomposition of $P$ to obtain a macro-level reduced TPM. The basic idea is to project the row vectors $P_i,\forall i\in [1,N]$ in $P$ onto the sub-spaces spanned by the eigenvectors of $P\cdot P^T$ such that the major information of $P$ is conserved, as well as $\Gamma_{\alpha}$ is kept unchanged. 

Concretely, the coarse-graining method contains five steps: 

1) We first make SVD decomposition for $P$ (suppose $P$ is irreducible and recurrent such that stationary distribution exist):
\begin{equation}
    \label{eq:svd_decomp}
    P = U\cdot\Sigma\cdot V^T, 
\end{equation}
where, $U$ and $V$ are two orthogonal and normalized matrices with dimension $N\times N$, and $\Sigma=\rm{diag}(\sigma_1,\sigma_2,\cdot\cdot\cdot,\sigma_N)$ is a diagonal matrix which contains all the ordered singular values. 

2) Selecting a threshold $\epsilon$ and the corresponding $r_{\epsilon}$ according to the singular value spectrum;

3) Reducing the dimensionality of row vectors $P_i$ in $P$ from $N$ to $r$ by calculating the following Equation:
\begin{equation}
    \label{eq:P_V_projection}
    \Tilde{P}\equiv P\cdot V_{N\times r},
\end{equation}
where $V_{N\times r}=(V_1^T,V_2^T,\cdot\cdot\cdot,V_r^T)$. 

4) Clustering all row vectors in $\Tilde{P}$ into $r$ groups by K-means algorithm to obtain a projection matrix $\Phi$, which is defined as:
\begin{equation}
    \label{eq:projection_matrix}
    \Phi_{ij}=\begin{cases} 1 &\mbox{if $\Tilde{P}_i$ is in the $r$th group}\\ 
0 & \mbox{otherwise},
\end{cases} 
\end{equation}
for $\forall i,j\in[1,N]$.

5) Obtain the reduced TPM according to $P$ and $\Phi$. 

To illustrate how we can obtain the reduced TPM, we will first define a matrix called stationary flow matrix as follows:
\begin{equation}
    \label{eq:stationary_flow}
    F_{ij}\equiv \mu_i\cdot P_{ij}, \forall i,j\in[1,N],
\end{equation}
where $\mu$ is the stationary distribution of $P$ which satisfies $P\cdot\mu=\mu$. 

Secondly, we will derive the reduced flow matrix according to $\Phi$ and $F$:
\begin{equation}
    \label{eq:F_definition}
    F'=\Phi^T\cdot F\cdot\Phi,
\end{equation}
where, $F'$ is the reduced stationary flow matrix. Finally, the reduced TPM can be derived directly by the following formula:
\begin{equation}
    \label{eq:coarse-graining}
    P'_i=F'_i/\sum_{j=1}^N(F'_i)_j, \forall i\in[1,N].
\end{equation}
Finally, $P'$ is the coarse-grained TPM.

\subsection{Coarse-graining strategy based on SVD of GIS}
After learning that $\Delta\Gamma_\alpha$ and $\Delta\mathcal{J}^{*}$ are approximately positively correlated and only determined by the difference in singular values of $A^T\Sigma^{-1}A$ and $\Sigma^{-1}$, we need to find the optimal coarse-graining strategy $\phi(x_t)=Wx_t$ defined by $W\in\mathcal{R}^{r_\epsilon\times n}$, based on SVD, which is determined by the singular vectors of $A^T\Sigma^{-1}A$ and $\Sigma^{-1}$. Since both $A^T\Sigma^{-1}A$ and $\Sigma^{-1}$ are symmetric matrices, the SVD of them are $A^T\Sigma^{-1}A=USU^T
,\Sigma^{-1}=VKV^T$, 
$S={\rm diag}(s_1,\cdots,s_n)$ and $K={\rm diag}(\kappa_i,\cdots,\kappa_n)$, $s_1\geq\cdots\geq s_n$, $\kappa_1\geq\cdots\geq \kappa_n$, are singular value matrices of $A^T\Sigma^{-1}A$ and $\Sigma^{-1}$. Corresponding singular vector matrices $U=(u_1,\cdots,u_n)$ and $V=(v_1,\cdots,v_n)$. Due to the filtering of singular values when calculating $\Delta\Gamma_\alpha(\epsilon)$, we need to redefine the singular vectors corresponding to the retained and discarded singular values during coarse-graining.

The optimal coarse-graining strategy $\phi(x_t)=Wx_t$ requires considering both $A^T\Sigma^{-1}A$ and $\Sigma^{-1}$ to balance forward and backward dynamics, so it is necessary to construct $W$ based on the vectors of $U$ and $V$ together. Since $U\not\equiv V$, we generally cannot guarantee that $s_i,\cdots,s_{r_\epsilon}$ and $\kappa_i,\cdots,\kappa_{r_\epsilon}$ are retained together. Therefore, when constructing $W$, we need a classified discussion of the span spaces of $U$ and $V$. Here we present an algorithm to achieve this.

\subsubsection{Algorithm}
This section presents the algorithm for constructing a coarse-graining strategy based on SVD.\\

\textbf{Input:} Matrices $A$, $\Sigma$

\textbf{Step 1:} 
Perform SVD on $A^T \Sigma^{-1} A$ and $\Sigma^{-1}$:
\begin{align*}
    & U S U^T= A^T \Sigma^{-1} A ,\\
    & VKV^T = \Sigma^{-1},
\end{align*}

\textbf{Step 2:}
Combine $S={\rm diag}(s_1,\cdots,s_n)$ and $K={\rm diag}(\kappa_i,\cdots,\kappa_n)$, then sort $s_1\geq\cdots\geq s_n$ and $\kappa_1\geq\cdots\geq \kappa_n$ in descending order to get $\tilde S={\rm diag}(\tilde{s}_1,\cdots,\tilde{s}_{2n})$ as $\tilde{s}_1\geq\cdots\geq\tilde{s}_{2n}$.

\textbf{Step 3:}
Truncate $\tilde S$ with a threshold $\epsilon$, and keep corresponding singular vectors and values:
\begin{eqnarray}
    \tilde S_1 = \mathrm{Truncate}(\tilde S, \epsilon),
\end{eqnarray}
Retaining $\tilde S_1=(\tilde{s}_1,\cdots,\tilde{s}_{r_{\epsilon1}})$, $\tilde{s}_1>\cdots>\tilde{s}_{r_{\epsilon1}}>\epsilon$, and keeps the corresponding singular vectors from $U=(u_1,\cdots,u_n)$ and $V=(v_1,\cdots,v_n)$, which are combined as $\tilde{U}_1=(\tilde{u}_1,\cdots,\tilde{u}_{r_{\epsilon1}})\in\mathcal{R}^{{n}\times r_{\epsilon1}}$. $r_{\epsilon1}<2n$ is the effective rank we get from SVD for the first time.

\textbf{Step 4:}
Apply SVD on the matrix $\tilde U_1\tilde S_1\in\mathcal{R}^{{n}\times r_{\epsilon1}}$:
\begin{eqnarray}
    \hat U \hat S \hat V^T= \tilde U_1\tilde S_1,
\end{eqnarray}
$\hat{S}\in\mathcal{R}^{n\times r_{\epsilon1}}$ contains singular values of matrix $\tilde{U}_{1}\tilde{S}$, $\hat{U}\in\mathcal{R}^{n\times n}$ are the corresponding singular vectors. 

\textbf{Step 5:}
Truncate $\hat U$ and $\hat S$ using the same threshold $\epsilon$, and keep the retained singular vectors as $\hat U_1\in\mathcal{R}^{n\times r_{\epsilon2}}$. Use these to obtain the final quantization strategy as
\begin{eqnarray}
    W = \hat U_1^T,
\end{eqnarray}
where $r_{\epsilon2}<n$ is the effective rank we get from SVD for the second time, which can be regarded as the final dimension of macro states $r_\epsilon$.

\textbf{Output:} Coarse-graining function $\phi(x)=Wx$.
\subsubsection{Explanation}
According to \cite{Zhang2025}, the reversibility of dynamics analyzed with the SVD method relates directly to maximizing EI in macroscopic dynamics. The singular vectors corresponding to the largest singular values of the TPM align with the directions that maximize macroscopic EI. For continuous systems, we decompose the deterministic and non-degenerate terms with SVD separately and then analyze and screen the combined singular value spectra. In this process, coarse-graining based on determinism and that based on non-degeneracy do not necessarily remain independent, so $\tilde U_1$ may fail to stay orthogonal and linearly independent. For example, if $A$ is the identity matrix, the $U$ and $V$ obtained in the first step are identical. This results in different macroscopic variables that actually correspond to the same microscopic variables. To prevent this issue, we perform a second SVD to decouple the different coarse-graining directions, thus forming a unified coarse-graining scheme.

Specifically, when dealing with known models, we can get the global effective rank $r_\epsilon$ in advance, so we can directly specify $r_{\epsilon1}=n$ and $r_{\epsilon2}=r_\epsilon$ to complete our coarse-graining operation. What's more, when $\Sigma\approx\sigma^2 I$ or $A^\dagger\Sigma(A^\dagger)^T\approx(\sigma^2/a^2) I$, as $a$ and $\sigma$ are constants, our parameter is equivalent to $W=U_1^T$ or $W=V_1^T$ as $U_1=(u_1,\cdots,u_{r_\epsilon}), V_1=(v_1,\cdots,v_{r_\epsilon})$.

\section{Experiments}
Here are some supplements to the experiments in the main text.

\begin{figure}[htbp]
    \centering
\includegraphics[width=1\textwidth,trim=0 165 470 0, clip]
{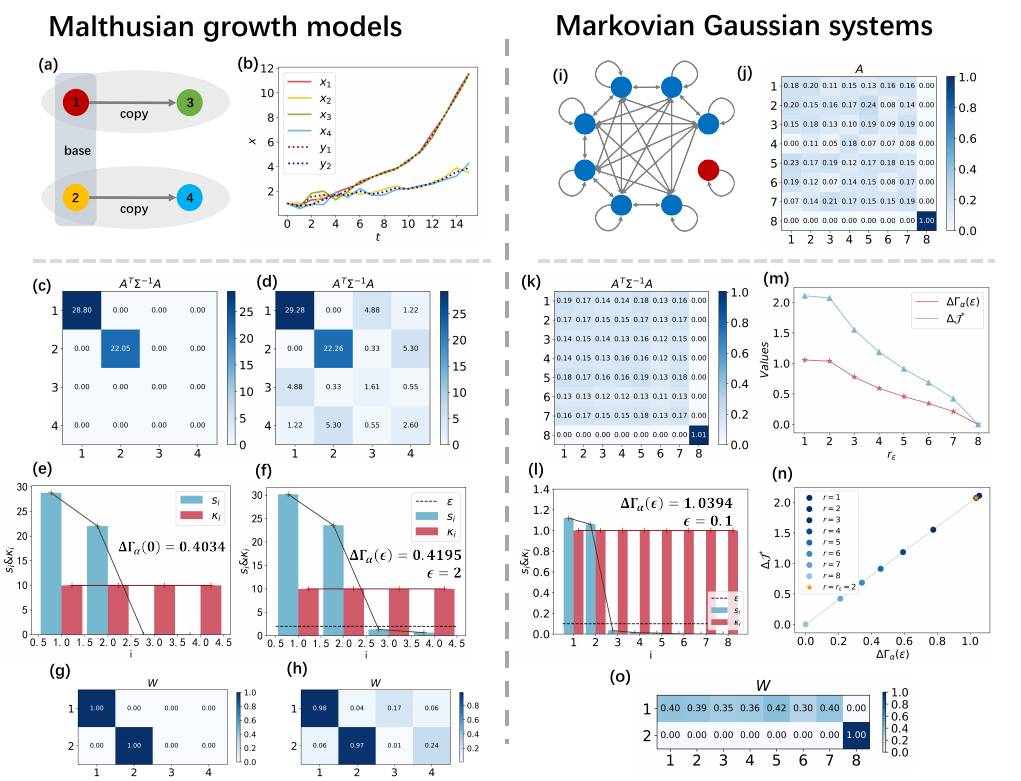}
    \caption{Analytical solutions obtained from numerical simulations of known models. (a) A Malthusian growth mode where $x_3,x_4$ are copies of $x_1,x_2$. (b) A sample of trajectory generated by the Malthusian growth models with growth rates of 0.2 and 0.05. $x_i$, $i=1,2,3,4$, are micro-states, while $y_1$ and $y_2$ are theoretical macro-states obtained by coarse-graining. (c) The backward dynamics covariance matrix $A^T\Sigma^{-1}A$ when $r(A)=2$. (d) The singular value spectrum of $A^T\Sigma^{-1}A$ when $A^T\Sigma^{-1}A$ only has two singular values. Clear CE can be calculated as $\Delta\Gamma_\alpha(0)=0.4034$. (e) Coarse-graining parameter $W$ obtained by truncating $U$ in the presence of Clear CE, the number of columns represents the macroscopic dimension, and the number of rows represents the microscopic dimension. (f) $A^T\Sigma^{-1}A$ after random perturbations to $A$. (g) The singular value spectrum of $A^T\Sigma^{-1}A$ when $A$ is perturbed. The degree of Vague CE is $\Delta\Gamma_\alpha(\epsilon)=0.4195$ when $\epsilon=2$. (h) The coarse-graining parameter $W$ for vague CE case, combined $x_1$ and $x_3$, $x_2$ and $x_4$. (i) A Markov Gaussian system. (j) The connectivity matrix $A$ of the Markov Gaussian system with the first 7 variables are depedent each other and the 8th variable is isolated. (k) The backward dynamics covariance matrix $A^T\Sigma^{-1}A$ of the Markov Gaussian system. (l) The singular value spectrum of $A^T\Sigma^{-1}A$ in the Markov Gaussian system has two larger singular values and the cut off can be taken at $\epsilon=0.1$. Vague CE occurs, and the degree of CE is $\Delta\Gamma_\alpha(\epsilon)=1.0394$. (m) The dependence of $\Delta\mathcal{J}^{*}$ and $\Delta\Gamma_\alpha(\epsilon)$ with the manually setting $r_\epsilon$. (n) The correlation between $\Delta\mathcal{J}^{*}$ and $\Delta\Gamma_\alpha(\epsilon)$ by variying $r_{\epsilon}$. (o) The coarse-graining parameter $W$ obtained by the singular vector with the largest singular values, it merges the first seven
    dimensions of Markov Gaussian system to form a new macro-variable, and keep the last variable as another macro-variable.}
\label{fig:Known_model-appendix}
\end{figure}

\subsection{Details of Malthusian growth models}
In a dynamical system, the degrees of freedom represent the number of independent variables needed to fully describe its state. When strong correlations or constraints exist between different dimensions (e.g., state variables), they no longer evolve independently. This interdependence, which can arise from conservation laws, synchronization, or other mathematical constraints, implies that the value of one variable can be determined by the others. Consequently, the system's effective degrees of freedom are reduced. Instead of exploring the full high-dimensional space, the system's trajectory is confined to a lower-dimensional manifold, simplifying its overall behavior.

The first example contains 4 variables, in which the first two variables $x_1,x_2$ follow the Malthusian growth model \cite{Galor2000} with different growth rates of 0.2 and 0.05. To study CE, we define the other two variables $x_3,x_4$ as the copies of the first two variables shown as Fig.\ref{fig:Known_model-appendix}a, thus, they are redundant dimensions. If $x=(x_1,x_2,x_3,x_4)$, the evolution of $x$ is a GIS $x_{t+1}=a_0+Ax_t+\eta_t, \eta_t\sim\mathcal{N}(0,\sigma^2 I_4)$ as $x_t,x_{t+1}\in\mathcal{R}^{4}$, $\sigma^2=0.1$, $a_0=0$, and
 \begin{eqnarray}
\label{A:growth}
        A = \left(\begin{matrix}
            1.2 & 0 &0 &0 \\
            0 & 1.05 &0 &0 \\
            1.2 & 0 &0 &0 \\
            0 & 1.05 &0 &0  
        \end{matrix}\right).
        \end{eqnarray}
        
A sample of evolutionary trajectories generated by this model is shown in Fig.\ref{fig:Known_model-appendix}b. In this model, $\Sigma=\sigma^2\cdot I_4$ is a full rank matrix, so we only need to study the backward dynamics covariance matrix $A^T\Sigma^{-1}A$ (Fig.\ref{fig:Known_model-appendix}c). This matrix only has two singular values as the singular value spectrum is shown in Fig.\ref{fig:Known_model-appendix}e with $r=2<4$, the horizontal axis represents the sequence number of singular values arranged in descending order as $s_1\geq\dots\geq s_4$, while the vertical axis represents the magnitude of singular values $s_i$, so clear CE can be calculated as $\Delta\Gamma_\alpha(0)=0.4034$. Fig.\ref{fig:Known_model-appendix}g shows the coarse-graining parameter $W\in\mathcal{R}^{2\times 4}$ obtained by truncating the orthogonal matrix $U$ after SVD of $A^T\Sigma^{-1}A$. When clear CE exists, variables $x_1$ and $x_2$ hold all the relevant information contained within $x_3$ and $x_4$. Therefore, the role of $\phi(x)=Wx$ is to remove the last two dimensions and only retain the first two dimensions of $x$ to describe the evolution of the growth model.

To show the concept of vague CE, we can add some perturbations to $A$ and obtain
 \begin{eqnarray}
\label{hatA:growth}
       A = \left(\begin{matrix}
            1.2 & 0 &0 &0 \\
            0 & 1.05 &0.001 &0 \\
            1.22 & 0 &0.4 &0.1 \\
            0 & 1.06 &0.03 &0.5  
        \end{matrix}\right)
        \end{eqnarray}
and $A^T\Sigma^{-1}A$ (Fig.\ref{fig:Known_model-appendix}d) is a full rank matrix as $r=n=4$. Then clear CE $\Delta\Gamma_\alpha(0)=0$. However, by observing the singular value spectrum in Fig.\ref{fig:Known_model-appendix}f, we can see that only two dimensions have significant impacts, so we need to calculate vague CE. By using the threshold selection as $\epsilon=2$, vague CE can be calculated as $\Delta\Gamma_\alpha(\epsilon)=0.4195$. The coarse-graining parameter $W$ is shown in Fig.\ref{fig:Known_model-appendix}h, the columns represent the macroscopic dimensions, and the rows represent the microscopic dimensions. Due to disturbances, $x_3$ and $x_4$ also contain some independent information. $\phi(x)=Wx$, $W=U_1^T$, is to merge $x_1$ and $x_3$, $x_2$ and $x_4$ by weighted summation and the two row-vectors that makeup $W$ are exactly equal to the singular vectors corresponding to the two largest singular values $s_1, s_2$.

This case illustrates that sometimes high dimensions may reduce the efficiency of reversibility of a system due to dimensional redundancy. Retaining only the maximum $r_\epsilon$ singular values of the covariance matrix can improve the efficiency of reversibility and generate CE.
\subsection{Markovian Gaussian systems}
In addition to directly copying, the system also has information redundancy if there are several dimensions with strong correlations. In this situation, CE will also be evident. A classic case here is the Markov Gaussian system \cite{Barrett2011,Seth2011,Oizumi2016,Tajima2017}, which can be seen as a Markov chain in a continuous state space. The example systems are characterized by the dynamics $x_{t+1} =Ax_t+\eta_t$, where $x_t$ contains $n$ variables, $A=(a_{ij})_{n\times n}$ is the connectivity matrix and $\eta_t\sim\mathcal{N}(0,I_n)$. $0\leq a_{ij}\leq 1$ reflects whether there is a connection between $i$ and $j$ and the strength of the connection. In this way, the system's state fluctuates within a stable region, and the Markovian Gaussian system can model EEG data \cite{Barrett2011,Varley2023} to study consciousness-related issues. The relevant data can be transformed into a 0-1 time series and studied using TPM, but the information is lost.

We considered a system with $n=8$ and connectivity as shown in Figure. Connection strengths between the first seven nodes are $1/7$ and the eighth node is a self-ring with a connection strength of 1. To avoid $\gamma_\alpha=0$ caused by singular values all being 1, we can add some perturbations to $A$, such as
 \begin{eqnarray}
\label{hatA:GM}
   A\sim\mathcal{N}\left(\begin{pmatrix}
    1/7&1/7&1/7&1/7&1/7&1/7&1/7&0\\
    1/7&1/7&1/7&1/7&1/7&1/7&1/7&0\\
    1/7&1/7&1/7&1/7&1/7&1/7&1/7&0\\
    1/7&1/7&1/7&1/7&1/7&1/7&1/7&0\\
    1/7&1/7&1/7&1/7&1/7&1/7&1/7&0\\
    1/7&1/7&1/7&1/7&1/7&1/7&1/7&0\\
    1/7&1/7&1/7&1/7&1/7&1/7&1/7&0\\
    0&0&0&0&0&0&0&1
    \end{pmatrix},0.05I_8\right).
        \end{eqnarray}
As Fig.\ref{fig:Known_model-appendix}i showed, the first seven dimensions of the system have the same connection strength and equal correlation. The eighth dimension is independent of the first seven dimensions and without correlation. After randomly generating a matrix $A$, we can calculate $A^T\Sigma^{-1}A$, from Fig.\ref{fig:Known_model-appendix}j and \ref{fig:Known_model-appendix}k we can see that matrix $A^T\Sigma^{-1}A$ can also be approximately divided into two independent parts as $A$. By observing the singular value spectrum in Fig.\ref{fig:Known_model-appendix}l, we can observe that $A^T\Sigma^{-1}A$ has two larger singular values and $r_\epsilon=2$ as $\epsilon=0.0552$. Vague CE can be calculated as $\Delta\Gamma_\alpha(\epsilon)=1.0394$. The coarse-graining parameter $W$ is shown in Fig.\ref{fig:Known_model-appendix}o, $\phi(x)=Wx$ can merge the first seven dimensions of the Markov Gaussian system to obtain a system where the efficiency of reversibility is improved. 

We can also analyze the optimal value of $r_\epsilon$ by balancing the enhancement of reversibility efficiency and information loss. Analyzing the impact of manually setting effective rank $r_\epsilon$ on CE in Fig.\ref{fig:Known_model-appendix}m, we find that from $r_\epsilon=8$ to $r_\epsilon=2$, $\Delta\mathcal{J}^{*}$ and $\Delta\Gamma_\alpha(\epsilon)$ significantly increases as $r_\epsilon$ decreases. But comparing $r_\epsilon=2$ with $r_\epsilon=1$, we found that due to the close similarity in the size of the first two singular values and the weights of the two dimensions, CE when $r_\epsilon=1$ cannot be significantly improved compared to CE when $r_\epsilon=2$. The property of information entropy based on Gaussian distribution, the loss of information entropy in the inverse dynamics of macro-states $y_t=Wx_t$ and micro-states $x_t$ is 
$H(x_t|x_{t+1})-H(y_t|y_{t+1})=\sum_{i=r_\epsilon}^{n}\ln s_i$, if $r_\epsilon<2$, the loss will be significantly increased.   This can explain why $r_\epsilon=2$ is the optimal effective rank, and the optimal macro-state dimension is also 2 dimensions. If we coarsen the system into one dimension, not only will CE not be significantly improved, but the model may also lose too much information. At the same time, in Fig.\ref{fig:Known_model-appendix}n we can also confirm that there is a linear relationship between $\Delta\mathcal{J}^{*}$ and $\Delta\Gamma_\alpha(\epsilon)$ under different $r_\epsilon$. 

\begin{figure}[htbp]
    \centering
\includegraphics[width=1\textwidth,trim=0 180 490 0, clip]
{Figure/Figure5.pdf}
    \caption{An example of applying machine learning to extract dynamics from data and measure CE of the dynamical system. (a) Schematic diagram of the SIR model. (b) Schematic diagram of learning dynamics and calculating $\Gamma$ through NN based on data. (1) and (3) are the variables $x_t$ and predicted $x_{t+\Delta t}$ at adjacent time points. (2) NN model for machine learning, whose structure is in Appendix F.3. (4) Dynamics $f(x_t)$ learned by NN model based on data of SIR. (5) $A_{x_t}=\nabla f(x_t)$ is the Jacobian matrix of the trained model at $x_t$. (6) The output $L_\Sigma$ of the Covariance Learner Network, the covariance matrix $\Sigma_{x_t}$ is derived from which. (7) The numerical solution of $\Gamma$, which is related to $A$ and $\Sigma$. (8) Applying $A^T\Sigma^{-1}A=\left<A^T_{x_t}\Sigma^{-1}_{x_t}A_{x_t}\right>_{x_t\in\mathcal{X}}$ and $\Sigma^{-1}=\left<\Sigma^{-1}_{x_t}\right>_{x_t\in\mathcal{X}}$, SVD-based CE $\Delta\Gamma(\epsilon)$ can be calculated under suitable $\epsilon$, $\mathcal{X}$ is the domain of SIR. (c) Training data which is the same as NIS+ \cite{Yang2024}. The full dataset (entire triangular region) used for training is displayed, along with four example trajectories with the same infection and recovery or death rates. The method of generating training data can be found in Appendix F.3. (d) $A^T\Sigma^{-1}A=\left<A^T_{x_t}\Sigma^{-1}_{x_t}A_{x_t}\right>_{x_t\in\mathcal{X}}$ derived by NN trained SIR model. (e) $\Sigma^{-1}=\left<\Sigma^{-1}_{x_t}\right>_{x_t\in\mathcal{X}}$ derived by NN trained SIR model. (f) Distribution of $r_\epsilon$ obtained by traverse the dynamical space of SIR with different $(S,I)$ and $x_t$ generated from which. (g) Distribution of $\Delta\Gamma_\alpha$ under different $(S,I)$ obtained by traverse the dynamical space. (h) The singular value spectra of $A^T\Sigma^{-1}A$ and $\Sigma^{-1}$ in NN trained SIR model. (i) The frequency of $r_\epsilon$ under different samples $x_t$ in test dataset. (j) Maximum $\Delta\Gamma_\alpha(\epsilon)=0.8685$ when the training period is around 50,000 as $\epsilon=5$. (k) The changing trend of CE under different $\sigma$, the threshold for trend change is around $\sigma=0.01$. (l) The coarse-graining matrix $W$ derived from Section \ref{Coarse-graining} and Appendix E.2 use $A^T\Sigma^{-1}A=\left<A^T_{x_t}\Sigma^{-1}_{x_t}A_{x_t}\right>_{x_t\in\mathcal{X}}$ and $\Sigma^{-1}=\left<\Sigma^{-1}_{x_t}\right>_{x_t\in\mathcal{X}}$. (m) The coarse-graining matrix $W_{NIS+}$ derived from NIS+. For the convenience of observation, we have taken absolute values for each dimension in $W$ and $W_{NIS+}$.}
\label{fig:SIR_appendix}
\end{figure}

\subsection{Details of SIR based on NN}\label{NNSIR}
Most systems in reality are unable to obtain precise dynamic models to calculate analytical solutions for CE, as demonstrated in the previous two examples. However, we can train a neural network to obtain approximate dynamics from observed time series data. Our third case is to show the phenomenon of CE obtained by a well-trained neural network (NN) on the training time series data generated by a Susceptible-Infected-Recovered (SIR) dynamical model \cite{Satsuma2004} as shown in Fig.\ref{fig:SIR_appendix}b with the following dynamics:
\begin{eqnarray}\label{SIRmodel}
\begin{aligned}
\begin{cases}  
\frac{\mathrm{d}S}{\mathrm{d}t}=-\beta SI,  \\
\frac{\mathrm{d}I}{\mathrm{d}t}=\beta SI - \gamma I, \\
\frac{\mathrm{d}R}{\mathrm{d}t}= \gamma I,
\end{cases}
\end{aligned}
\end{eqnarray}
as shown in Fig.\ref{fig:SIR_appendix}a, where $S,I,R\in[0,1]$ represent the rate of susceptible, infected, and recovered individuals in a population, $\beta=1$ and $\gamma=0.5$ are parameters for infection and recovery rates.

To generate the time series data of the micro-state, we adopt the same method as in our previous work of NIS+ \cite{Yang2024}. We generate data by converting $\mathrm{d}S/\mathrm{d}t,\mathrm{d}I/\mathrm{d}t$ into $\Delta S/\Delta t,\Delta I/\Delta t$ as $\Delta t = 0.01$ and $(S_{t+\Delta t},I_{t+\Delta t})\approx(S_{t},I_{t})+(\mathrm{d}S_t/\mathrm{d}t,\mathrm{d}I_t/\mathrm{d}t)\Delta t$. Then we duplicate the macro-state $(S_t,I_t)$ as shown in Fig.\ref{fig:SIR_appendix}c and added Gaussian random noise to form the micro-state $x_t$ as
\begin{eqnarray}
\begin{aligned}
\label{eq:sir noise}
    x_t = (S_t,I_t,S_t,I_t)+\boldsymbol{\xi}_{t},
\end{aligned}
\end{eqnarray}
where $\boldsymbol{\xi}_{t} \sim \scriptsize{N}(0,\Sigma)$ and
\begin{eqnarray}
 \Sigma = \sigma^2\left(\begin{matrix}
            1 &  -\frac{1}{2} &0 &0 \\
            -\frac{1}{2} &1 &0 &0 \\
            0 & 0 &1 & -\frac{1}{2} \\
            0 & 0 & -\frac{1}{2} &1  
        \end{matrix}\right).
\end{eqnarray}
as $\sigma^2=0.04$.
\\

By feeding the micro-state data into a forward neural network (NN) called \textbf{Covariance Learner Network}, we can use this model to approximate the micro-dynamics of the SIR model. The model we trained has the following structure:

$\bullet$ \textbf{Input layer}: The NN has $n=4$ input neurons, corresponding to the size of the input vector $x_t$.

$\bullet$ \textbf{Hidden layers}: The network contains two hidden layers. The first hidden layer has $hidden\_size=64$ neurons, followed by a $Leaky ReLU$ activation function. The second hidden layer also has $hidden\_size=64$ neurons, followed by another $Leaky ReLU$ activation. 

$\bullet$ \textbf{Output layer}: The output layer contains two parts. The first part, $f_{\mu}$, outputs a mean vector $\mu$ of size $n$. The second part, $f_L$, outputs the elements of the lower triangular part of the Cholesky decomposition of the covariance matrix, with size $n\times n$.

Formally, given an input, the network applies a series of transformations:
\begin{eqnarray}
\begin{aligned}
h_1&=LeakyRELU(B_1x_t+b_1),\\
h_2&=LeakyRELU(B_2h_1+b_2),\\
\mu&=B_\mu h_2+b_\mu,\\
L_\Sigma &= B_Lh_2+b_L,
\end{aligned}
\end{eqnarray}
where $B$ and $b$ denote trainable weight matrices and biases. The predicted $f(x_t)$ and final covariance matrix are then computed as:
\begin{eqnarray}
\begin{aligned}
f(x_t)&=\mu,\\
\Sigma_{x_t}&=L_\Sigma L_\Sigma^T.
\end{aligned}
\end{eqnarray}
After that, we use a negative logarithmic likelihood loss function (NLL) as
\begin{eqnarray}
\begin{aligned}
NLL=\frac{1}{2}\left[\left(x_{t+1}-f(x_t)\right)^T\Sigma^{-1}_{x_t}\left(x_{t+1}-f(x_t)\right)+\ln\det(\Sigma_{x_t})+n\ln(2\pi)\right]
\end{aligned}
\end{eqnarray}
to train the parameters of the dynamical function $f(x_t)$ and the covariance matrix $\Sigma_{x_t}$ simultaneously.

Since the quantification of our CE framework does not require a pre-setting coarse-grained strategy and the coarse-grained function can be directly obtained from the SVD of $A^T\Sigma^{-1}A$ after setting $r_\epsilon$, we can directly calculate the degree of CE by $\Delta\Gamma_\alpha$ and ignore the learning processes of encoder, decoder and macro-dynamics in NIS+ as mentioned in \cite{Yang2024}. Due to the nonlinearity of the SIR dynamics, we cannot directly use the CE quantification method of linear GIS. Instead, we approximate NN as a linear mapping at different input $x_t$ by Taylor expansion (see the method in Appendix D.1). We can obtain the CE identification result and the final coarse-graining strategy directly by calculating $A_{x_t}=\nabla f(x_t)$ as the Jacobian matrix of the trained NN model at $x_t$. And $\Sigma_{x_t}$ is the covariance matrix that neural networks can directly output. Due to the differences under different $x_t$, we can randomly generate $x_t$ with a uniform distribution on the domain $\mathcal{X}$ of the SIR dynamics and take the average value as $A^T\Sigma^{-1} A\approx \langle A^T_{x_t}\Sigma^{-1}_{x_t} A_{x_t}\rangle_{x_t\in\mathcal{X}}$ and $\Sigma^{-1}\approx \langle\Sigma^{-1}_{x_t}\rangle_{x_t\in\mathcal{X}}$ to calculate $\Delta\Gamma_\alpha$ of the system. Fig.\ref{fig:SIR_appendix}b visualizes the method of computation for $\Delta\Gamma$ on a multivariate Gaussian model.

Using the data in Fig.\ref{fig:SIR_appendix}c, we can perform SVD on matrices $A^T\Sigma^{-1} A$ and $\Sigma^{-1}$ in Fig.\ref{fig:SIR_appendix}d and Fig.\ref{fig:SIR_appendix}e. In Fig.\ref{fig:SIR_appendix}h, we can see the singular value spectrum of matrices $A^T\Sigma^{-1} A$ and $\Sigma^{-1}$ in which the horizontal axis represents the sequence number of singular values arranged in descending order, while the vertical axis represents the magnitude of singular values $s_i,\kappa_i$. Due to the existence of the copy operation, even the simplest NN can recognize only two dimensions with larger singular values as $r_\epsilon=2$ when $\epsilon=5$. If we directly calculate $r_\epsilon$ by $A^T_{x_t}\Sigma^{-1}_{x_t}A_{x_t}$ as $x_t$ is sampled from the test data set and plot $r_\epsilon$ as a matrix related to $S, I$ in Fig.\ref{fig:SIR_appendix}f, we can see that most positions satisfy $r_\epsilon=2$. By calculating the frequency of $r_\epsilon$ under different samples $A^T_{x_t}\Sigma^{-1}_{x_t}A_{x_t}$, from Fig.\ref{fig:SIR_appendix}i we can find that $r_\epsilon=2$ has the highest frequency.

From Fig.\ref{fig:SIR_appendix}h, $A^T\Sigma^{-1}A$ has two larger singular values and $r_\epsilon=2$ as $\epsilon=5$. In model training, we get the value of Vague CE as $\Delta\Gamma_\alpha(\epsilon)=0.8685$ when the training period is 50,000, which is shown in Fig.\ref{fig:SIR_appendix}j. We can also directly calculate $\Delta\Gamma_\alpha(\epsilon)$ by $A^T_{x_t}\Sigma^{-1}_{x_t}A_{x_t}$ as $x_t$ is sampled from the test data set and plot $\Delta\Gamma_\alpha$ as a matrix related to $S, I$ in Fig.\ref{fig:SIR_appendix}g. Comparing Fig.\ref{fig:SIR_appendix}f and Fig.\ref{fig:SIR_appendix}g, we can see that the most stable region of $r_\epsilon$ and the region with the highest $\Delta\Gamma_\alpha(\epsilon)$ are almost identical as $(S,I)$ roughly located within a circle with a radius of 0.5 and a center of $(0,0)$. 

We can also compare the results of our framework and NIS+ mentioned in \cite{Yang2024}. From Fig.\ref{fig:SIR_appendix}l and Fig.\ref{fig:SIR_appendix}m, we can see that the coarse-graining matrix $W$ obtained through the singular vectors and the Jacobian matrix $W_{NIS+}$ of the NIS+ coarse-graining encoder are similar. For the convenience of observation, we have taken absolute values for each dimension in $W$ and $W_{NIS+}$. Both coarse-graining methods indicate that the first and the third micro-state dimensions mainly influence the first macro-state dimension, while the second and fourth micro-state dimensions mainly influence the second macro-state dimension. Both the values of $W$ and $W_{NIS+}$ are consistent with the copy method (Fig.\ref{fig:Known_model-appendix}a), which is used during data generation. In addition, we can test the changing trend of CE under different $\sigma$. From Fig.\ref{fig:SIR_appendix}k, When $\sigma<0.01$, $\Delta\Gamma_\alpha(\epsilon)$ is positively correlated with $\sigma$, and when $\sigma>0.01$, the two are negatively correlated. We can see that the turning point for CE is around $\sigma=0.01$, which is consistent with the value obtained by NIS+ \cite{Yang2024}. 

However, our framework addresses the decline in training efficiency associated with encoder training in NIS+, leading to a more streamlined and effective approach to model training and CE detection.

\section{Methods}

\subsection{System and observation noises}
The noise $\eta\sim\mathcal{N}(0,\Sigma)$ in our model $y=Ax+\eta$ can be decomposed into system noise $e$ and observation noise $\xi$. For the GIS like $\mathbf{y}=A\mathbf{x}+e,e\sim\mathcal{N}(0,E)$ is pure noise inherent in the system, $\mathbf{y}\in \mathcal{R}^n$ follows a normal distribution about $\mathbf{x}\in \mathcal{R}^n$, so $\mathbf{y}\sim \mathcal{N}(A\mathbf{x},E)$, $A\in\mathcal{R}^{n\times n}$, $E\in\mathcal{R}^{n\times n}$. We can add observation noise as
    \begin{eqnarray}\label{p(xtp1|doxt)}
		x=\mathbf{x}+\xi_\mathbf{x}, \xi_\mathbf{x}\sim\mathcal{N}(0,\Xi_\mathbf{x})\\
        y=\mathbf{y}+\xi_\mathbf{y}, \xi_\mathbf{y}\sim\mathcal{N}(0,\Xi_\mathbf{y})
	\end{eqnarray}
   Since $\mathbf{y}=A\mathbf{x}+e$, $y=A(x-\xi_\mathbf{x})+e+\xi_\mathbf{y}\sim\mathcal{N}(Ax_t,\Xi_\mathbf{y}+A\Xi_\mathbf{x}A^{'}+E)=\mathcal{N}(Ax_t+a_0,\Sigma)$, in which 
   \begin{eqnarray}\label{SigmaXiE}  \Sigma=\Xi_\mathbf{y}+A\Xi_\mathbf{x}A^{'}+E\in\mathcal{R}^{n\times n}
	\end{eqnarray}
     is the combination of covariance matrixes of system and observation noises and
     \begin{eqnarray}\label{epsxie}
		\eta=-A\xi_\mathbf{x}+e+\xi_\mathbf{y}\in\mathcal{R}^n
	\end{eqnarray}
     is the combination of two types of noise. In real data, $\Xi$ is more common than $E$ and it's difficult to distinguish between the two directly. The SIR model in this article only has observation noise $\xi$ and the first two cases only have system noises $e$, all included in $\eta$.
     
\subsection{Multi-steps}
In the main text, we mainly focus on one-step Gaussian iterative systems. And if we are facing multi-step iterations, our framework can also be used for analysis.

\subsubsection{Two steps}
We can start analyzing in two steps first. For the linear stochastic iteration system like $x_{t+1}=A_1x_{t}+A_2x_{t-1}+\varepsilon_t,\varepsilon_t\sim\mathcal{N}(0,\Sigma_n)$, $A_1,A_2,\Sigma_n\in\mathcal{R}^{n\times n}$, $x_t,x_{t+1}\in\mathcal{R}^{n}$, can be transformed into a two-step from as
\begin{eqnarray}
		\begin{pmatrix}
  x_{t+2} \\
  x_{t+1}
\end{pmatrix}=\begin{pmatrix}
  A_1^2+A_2 & A_1A_2 \\
  A_1 & A_2
\end{pmatrix}\begin{pmatrix}
  x_{t} \\
  x_{t-1}
\end{pmatrix}+\begin{pmatrix}
  A_1\varepsilon_{t}+\varepsilon_{t+1}\\
  \varepsilon_{t} 
\end{pmatrix}
	\end{eqnarray}
 in which
\begin{eqnarray}
		E_t=\begin{pmatrix}
  A_1\varepsilon_{t}+\varepsilon_{t+1}\\
  \varepsilon_{t} 
\end{pmatrix}\sim\mathcal{N}\left(O,\begin{pmatrix}
  A_1\Sigma_n A_1^{T}+\Sigma_n & O\\
  O & \Sigma_n
\end{pmatrix}\right)
\end{eqnarray}
so, define
\begin{eqnarray}
		A=\begin{pmatrix}
  A_1^2+A_2 & A_1A_2 \\
  A_1 & A_2
\end{pmatrix},
 \Sigma=\begin{pmatrix}
  A_1\Sigma_n A_1^{T}+\Sigma_n & O\\
  O & \Sigma_n
\end{pmatrix}
\end{eqnarray}
as $A,\Sigma \in \mathcal{R}^{2n}$.

\begin{proof}

For the linear stochastic iteration system like $x_{t+1}=A_1x_{t}+A_2x_{t-1}+\varepsilon_t,\varepsilon_t\sim\mathcal{N}(0,\Sigma_n)$, $A_1,A_2,\Sigma_n\in\mathcal{R}^{n\times n}$, $x_t,x_{t+1}\in\mathcal{R}^{n}$, can be transformed into a two-step from as
\begin{eqnarray}
		\begin{pmatrix}
  x_{t+1} \\
  x_{t}
\end{pmatrix}=\begin{pmatrix}
  A_1 & A_2 \\
  I & O
\end{pmatrix}\begin{pmatrix}
  x_{t} \\
  x_{t-1}
\end{pmatrix}+\begin{pmatrix}
  \varepsilon_{t}\\
  O 
\end{pmatrix}
	\end{eqnarray}
so, define
\begin{eqnarray}
		A=\begin{pmatrix}
  A_1 & A_2 \\
  I & O
\end{pmatrix}^2,
 E_t=\begin{pmatrix}
  A_1 & A_2 \\
  I & O
\end{pmatrix}
\begin{pmatrix}
  \epsilon_t \\
  O 
\end{pmatrix}
+\begin{pmatrix}
  \varepsilon_{t+1} \\
  O 
\end{pmatrix}
\end{eqnarray}
and
\begin{eqnarray}
X_t=\begin{pmatrix}
    x_t\\
    x_{t-1}
\end{pmatrix},
X_{t+\tau}=\begin{pmatrix}
    x_{t+2}\\
    x_{t+1}.
\end{pmatrix}
\end{eqnarray}
This method is equal to direct calculate
\begin{eqnarray}
\begin{aligned}
x_{t+2}&=A_1x_{t+1}+A_2x_{t}+\varepsilon_{t+1}\\&=A_1(A_1x_{t}+A_2x_{t-1}+\varepsilon_t)+A_2x_{t}+\varepsilon_{t+1}\\&=(A_1^2+A_2)x_t+A_1A_2x_{t-1}+A_1\varepsilon_t+\varepsilon_{t+1}.
\end{aligned}
\end{eqnarray}
\end{proof}
\subsubsection{$\tau$ steps}
We can use mathematical induction to extend two-step GIS to multi-step. For the linear stochastic iteration system like $x_{t+1}=A_1x_{t}+A_2x_{t-1}+\cdots+A_\tau x_{t-\tau+1}+\varepsilon_t,\varepsilon_t\sim\mathcal{N}(0,\Sigma_n)$, $A_1,A_2,\cdots,A_\tau,\Sigma_n\in\mathcal{R}^{n\times n}$, $x_t,x_{t+1}\cdots,x_\tau\in\mathcal{R}^{n}$, can be transformed into a $\tau$-step from as
\begin{eqnarray}
		\begin{pmatrix}
  x_{t+1}\\
x_t \\
  \cdots \\
  x_{t-\tau+2}
\end{pmatrix}=\begin{pmatrix}
  A_1 & A_2 & \cdots & A_\tau-1 & A_\tau \\
  I & O & \cdots & O & O\\
  O & I & \cdots & O & O\\
  \cdots & \cdots &\cdots &\cdots & \cdots\\
  O & O & \cdots & I & O
\end{pmatrix}\begin{pmatrix}
  x_{t}\\
  x_{t-1}  \\
  \cdots \\
  x_{t-\tau+1}
\end{pmatrix}+\begin{pmatrix}
  \varepsilon_{t}\\
  O\\
  \cdots\\
  O 
\end{pmatrix}.
	\end{eqnarray}
so, define
\begin{eqnarray}
		A=\begin{pmatrix}
  A_1 & A_2 & \cdots & A_\tau-1 & A_\tau \\
  I & O & \cdots & O & O\\
  O & I & \cdots & O & O\\
  \cdots & \cdots &\cdots &\cdots & \cdots\\
  O & O & \cdots & I & O
\end{pmatrix}^\tau,
 E_t=\sum_{i=1}^{\tau}\begin{pmatrix}
  A_1 & A_2 & \cdots & A_\tau-1 & A_\tau \\
  I & O & \cdots & O & O\\
  O & I & \cdots & O & O\\
  \cdots & \cdots &\cdots &\cdots & \cdots\\
  O & O & \cdots & I & O
\end{pmatrix}^{i-1}
\begin{pmatrix}
  \varepsilon_{t-i+\tau}\\
  O\\
  \cdots\\
  O 
\end{pmatrix}.
\end{eqnarray}
and variables
\begin{eqnarray}
X_t=\begin{pmatrix}
    x_t\\
    x_{t-1}\\
    \cdots\\
    x_{t-\tau+1}
\end{pmatrix},
X_{t+\tau}=\begin{pmatrix}
    x_{t+\tau}\\
    x_{t+\tau-1}\\
    \cdots\\
    x_{t+1}
\end{pmatrix}.
\end{eqnarray}
Our covariance matrix can be calculated as
\begin{eqnarray}
\Sigma=\begin{pmatrix}
    \Sigma_n^{(1)} & O & \cdots & O\\
    O & \Sigma_n^{(2)} & \cdots & O\\
    \cdots& \cdots & \cdots &\cdots\\
    O & O & \cdots & \Sigma_n^{(\tau)}
\end{pmatrix}, \Sigma_n^{(k)}=\sum_{i=0}^{\tau-k}\left[\left(\prod_{j=0}^iA_j\right)\Sigma_n\left(\prod_{j=0}^iA_j\right)^T\right].
\end{eqnarray}
Therefore, for multi-step linear GIS, we can convert it into a standardized form as
\begin{eqnarray}
X_{t+\tau}=AX_t+E_t, E_t\sim\mathcal{N}(0,\Sigma).
\end{eqnarray}

\subsubsection{Nonlinear}
For the stochastic iteration system like 
\begin{eqnarray}
\begin{aligned}
x_{t+1}&=f(x_{t},g(x_{t-1},y_{t-1})+\varepsilon_{2,t-1})+\varepsilon_{1,t},\\
y_{t+1}&=g(f(x_{t-1},y_{t-1})+\varepsilon_{1,t-1},y_t)+\varepsilon_{2,t}\\
x_{t+2}&=f(f(x_{t},g(x_{t-1},y_{t-1})+\varepsilon_{2,t-1})+\varepsilon_{1,t},g(x_{t},y_{t})+\varepsilon_{2,t})+\varepsilon_{1,t+1},\\
y_{t+2}&=g(f(x_{t},y_{t})+\varepsilon_{1,t},g(f(x_{t-1},y_{t-1})+\varepsilon_{1,t-1},y_t)+\varepsilon_{2,t})+\varepsilon_{2,t+1}
\end{aligned}
\end{eqnarray}
$\varepsilon_{1,t},\varepsilon_{2,t}\sim\mathcal{N}(0,\Sigma_n),\Sigma_n\in\mathcal{R}^{n\times n}$, $x_t,x_{t+1}\in\mathcal{R}^{n}$, can be transformed into a two-step from as
\begin{eqnarray}\label{DegeneracyDeterminismEI}
\begin{aligned}
		\nabla{\begin{pmatrix}
  x_{t+2} \\
  y_{t+2}\\
  x_{t+1}\\
  y_{t+1}\\
\end{pmatrix}}&\approx\begin{pmatrix}
  \frac{\partial x_{t+2}}{\partial x_{t}} &  \frac{\partial x_{t+2}}{\partial y_{t}} &\frac{\partial x_{t+2}}{\partial x_{t-1}}&\frac{\partial x_{t+2}}{\partial y_{t-1}}\\
  \frac{\partial y_{t+2}}{\partial x_{t}} &  \frac{\partial y_{t+2}}{\partial y_{t}} &\frac{\partial y_{t+2}}{\partial x_{t-1}}&\frac{\partial y_{t+2}}{\partial y_{t-1}}\\
  \frac{\partial x_{t+1}}{\partial x_{t}} &  \frac{\partial x_{t+1}}{\partial y_{t}} &\frac{\partial x_{t+1}}{\partial x_{t-1}}&\frac{\partial x_{t+1}}{\partial y_{t-1}}\\
  \frac{\partial y_{t+1}}{\partial x_{t}} &  \frac{\partial y_{t+1}}{\partial y_{t}} &\frac{\partial y_{t+1}}{\partial x_{t-1}}&\frac{\partial y_{t+1}}{\partial y_{t-1}}
\end{pmatrix}\\
&=\begin{pmatrix}
  f_1f_1+f_2g_1 & f_2g_2  &f_1f_2g_1&f_1f_2g_2\\
  g_1f_1 &  g_1f_2+g_2g_2 &g_2g_1f_1&g_2g_1f_2\\
  f_1 &  0 & f_2g_1&f_2g_2\\
  0 &  g_2 &g_1f_1&g_1f_2
\end{pmatrix}\\
&=A
\end{aligned}
	\end{eqnarray}
so
\begin{eqnarray}
\begin{aligned}
		\begin{pmatrix}
  x_{t+2} \\
  y_{t+2}\\
  x_{t+1}\\
  y_{t+1}\\
\end{pmatrix}&\approx\begin{pmatrix}
  f_1f_1+f_2g_1 & f_2g_2  &f_1f_2g_1&f_1f_2g_2\\
  g_1f_1 &  g_1f_2+g_2g_2 &g_2g_1f_1&g_2g_1f_2\\
  f_1 &  0 & f_2g_1&f_2g_2\\
  0 &  g_2 &g_1f_1&g_1f_2
\end{pmatrix}\begin{pmatrix}
  x_{t} \\
  y_{t}\\
  x_{t-1}\\
  y_{t-1}\\
\end{pmatrix}+\begin{pmatrix}
  f_1f_2\varepsilon_{2,t-1}+f_1\varepsilon_{1,t}+f_2\varepsilon_{2,t}+\varepsilon_{1,t+1} \\
  g_2g_1\varepsilon_{1,t-1}+g_1\varepsilon_{1,t}+g_2\varepsilon_{2,t}+\varepsilon_{2,t+1}\\
  f_2\varepsilon_{2,t-1}+\varepsilon_{1,t}\\
  g_1\varepsilon_{1,t-1}+\varepsilon_{2,t}\\
\end{pmatrix}\\
&=A\begin{pmatrix}
  x_{t} \\
  y_{t}\\
  x_{t-1}\\
  y_{t-1}\\
\end{pmatrix}+\varepsilon_{t-1}.
\end{aligned}
\end{eqnarray}

\subsection{Stochastic differential equation}
Stochastic differential equations, as the name suggests, are differential equations within the framework of stochastic analysis. Its general form is
\begin{eqnarray}\label{p(xtp1|doxt)}
\begin{aligned}
\frac{dx_t}{dt}=f(x_t)+\Sigma^\frac{1}{2}\varepsilon_t, \varepsilon_t=\frac{dW_t}{dt}\approx\frac{1}{\sqrt{dt}}I,
\end{aligned}
\end{eqnarray}
which is equal to
\begin{eqnarray}\label{p(xtp1|doxt)}
\begin{aligned}
dx_t=f(x_t)dt+\Sigma^\frac{1}{2}dW_t, \sqrt{dt}I\approx dW_t\sim\mathcal{N}(0,dtI),
\end{aligned}
\end{eqnarray}
and
\begin{eqnarray}\label{p(xtp1|doxt)}
\begin{aligned}
p(x_t+dx_t|x_t)=\mathcal{N}(x_t+dtf(x_t),dt\Sigma).
\end{aligned}
\end{eqnarray}

Stochastic differential equations can also be linearized locally as $f(x)=A_0+Ax+o(x)$, then we can get
\begin{eqnarray}\label{p(xtp1|doxt)}
\begin{aligned}
p(x_t+dx_t|x_t)=\mathcal{N}(x_t+dtAx_t,dt\Sigma)
\end{aligned}
\end{eqnarray}
which makes these systems interesting and plays an important role in the classification of fixed points in nonlinear systems.

For nonlinear situations, we can directly calculate their time-dependent reversible information as
\begin{eqnarray}
		\begin{aligned}
		\ln\Gamma_\alpha(\tau)&=\ln\left(\frac{2\pi}{\alpha}\right)^\frac{n}{2}+\ln{\rm det}\left((I+\tau\nabla f(x_t))^T(I+\tau\nabla f(x_t))\right)^{\frac{1}{2}-\frac{\alpha}{4}}+\ln{\rm det}\left(\frac{1}{\tau}\Sigma^{-1}\right)^\frac{1}{2}\\
        &=\ln\left(\frac{2\pi}{\alpha}\right)^\frac{n}{2}+\ln\frac{{\rm det}\left(I+\tau\nabla f(x_t)\right)^{1-\frac{\alpha}{2}}}{\det(\tau\Sigma)^\frac{1}{2}}\\
        &=\ln\left(\frac{2\pi}{\alpha}\right)^\frac{n}{2}+\ln\frac{{\rm det}\left(I+\tau\nabla f(x_t)\right)^{1-\frac{\alpha}{2}}}{\det(\tau\Sigma)^\frac{1}{2}}.
   \end{aligned}
\end{eqnarray} 
This form can be extended to a linear form and time $\tau$ can be extracted through approximation
\begin{eqnarray}
		\begin{aligned}
		\ln\Gamma_\alpha(\tau)&=\ln\left(\frac{2\pi}{\alpha}\right)^\frac{n}{2}+\ln{\rm det}\left((I+A\tau)^T(I+A\tau)\right)^{\frac{1}{2}-\frac{\alpha}{4}}-
        \ln{\rm det}\left(\Sigma\tau\right)^\frac{1}{2}\\
        _{(\tau\to 0)}&\approx\frac{n}{2}\ln\left(\frac{2\pi}{\alpha}\right)+{\left(\frac{1}{2}-\frac{\alpha}{4}\right)}\ln{\rm det}\left((e^{A\tau})^Te^{A\tau}\right)-\frac{1}{2}\ln{\rm det}\left(\Sigma\tau\right)\\
        &=\frac{n}{2}\ln\left(\frac{2\pi}{\alpha}\right)+{\left(\frac{1}{2}-\frac{\alpha}{4}\right)}{\rm tr}(A+A^T)\tau-\frac{1}{2}\ln{\rm det}\left(\Sigma\tau\right)\\
        _{\sqrt[n]{\det(\Sigma)}\to\tau\to 0)}&\approx\frac{n}{2}\ln\left(\frac{2\pi}{\alpha}\right)+{\left(\frac{1}{2}-\frac{\alpha}{4}\right)}{\rm tr}(A+A^T)\tau-\frac{1}{2}\tau\ln{\rm det}\left(\Sigma\right)\\
        &=\frac{n}{2}\ln\left(\frac{2\pi}{\alpha}\right)+\left[{\left(\frac{1}{2}-\frac{\alpha}{4}\right)}{\rm tr}(A+A^T)-\frac{1}{2}\ln{\rm det}\left(\Sigma\right)\right]\tau
   \end{aligned}
\end{eqnarray} 
At this point, only differentiation is needed to obtain an approximate dynamical reversibility index for stochastic differential equations that is independent of time as
\begin{eqnarray}
		\begin{aligned}
		\frac{\partial \ln\Gamma_\alpha}{\partial \tau}&\approx{\left(\frac{1}{2}-\frac{\alpha}{4}\right)}{\rm tr}(A+A^T)-\frac{1}{2}\ln{\rm det}\left(\Sigma\right).
   \end{aligned}
\end{eqnarray} 
This indicator can also be transformed into a form determined by eigenvalues and singular values as
\begin{eqnarray}
		\begin{aligned}
		\frac{\partial \ln\Gamma_\alpha}{\partial \tau}&\approx{\left(\frac{1}{2}-\frac{\alpha}{4}\right)}\sum_{i=1}^n\lambda_i+\frac{1}{2}\sum_{i=1}^n\ln\kappa_i,
   \end{aligned}
\end{eqnarray} 
in which 
$\Lambda={\rm diag}(\lambda_1,\cdots,\lambda_n),K={\rm diag}(\kappa_1,\cdots,\kappa_n)$, $A+A^T=U\Lambda U^T, \Sigma^{-1}=VKV^T$.

\end{document}